\documentclass[12pt]{article}
\usepackage[pdftex]{graphicx}
\usepackage[T2A]{fontenc}
\usepackage{amsthm,amsmath,amssymb}

\allowdisplaybreaks

\usepackage[usenames,dvipsnames]{color}
\usepackage[pdftex,colorlinks=true,unicode,bookmarksnumbered=true]{hyperref}

\usepackage[dvipsnames]{xcolor}


\hypersetup{
  colorlinks,
  citecolor=blue!80!black,
  linkcolor=red!80!black,
  urlcolor=blue!80!black}

\usepackage[onehalfspacing]{setspace}
\usepackage{textcomp}
\usepackage{indentfirst}
\usepackage{natbib}
\usepackage{listings}
\usepackage{pdflscape}
\usepackage{enumerate}

\usepackage{titling} 
\usepackage{xpatch}
\makeatletter
\newenvironment{mytitlepage}%
  {\begin{titlepage}\def\@thanks{}}%
  {\end{titlepage}}
\xpatchcmd\titlepage{\setcounter{page}\@ne}{}{}{}
\xpatchcmd\endtitlep

\usepackage{verbatim}
\usepackage{bbm}
\usepackage{tikz}
\usepackage{pgfplots}
\usetikzlibrary{arrows,shapes,trees}
\usepackage{float}
 \usepackage{multirow}
 
 \usepackage{authblk}

 \usepackage[toc,page]{appendix}

\usepackage{array,longtable}

\usepackage{mathtools}

\DeclarePairedDelimiter\floor{\lfloor}{\rfloor}

\DeclareMathOperator*{\Card}{Card}

\numberwithin{equation}{section}
\newtheorem{theorem}{Theorem}

\newtheorem{assumption}{Assumption}

\newtheorem{lemma}{Lemma}
\newtheorem{corollary}{Corollary}

\theoremstyle{definition}
\newtheorem{remark}{Remark}
\newtheorem{example}{Example}

\renewenvironment{abstract}
 {\small
  \begin{center}
  \bfseries \abstractname\vspace{-.5em}\vspace{0pt}
  \end{center}
  \list{}{%
    \setlength{\leftmargin}{5mm}
    \setlength{\rightmargin}{\leftmargin}%
    \listparindent 1.5em
    \itemindent    \listparindent
  }%
  \item\relax}
 {\endlist}

\newcommand{\lsim}{\mathrel{\hbox{\rlap{\lower.75ex \hbox{$\sim$}} \kern-.3em \raise.4ex \hbox{$<$}}}}

\usepackage{natbib}
\bibliographystyle{ecta}

\usepackage{geometry}
\geometry{letterpaper}
\geometry{tmargin=2.5cm,bmargin=2.5cm,lmargin=2.5cm,rmargin=2.5cm}

\lstset{
language=Matlab,
basicstyle=\footnotesize,
numbers=left,
numberstyle=\footnotesize,
}

\begin{document}
\begin{mytitlepage}

\title{A Consistent Heteroskedasticity Robust LM Type \\ Specification Test for Semiparametric Models\thanks{I am grateful to my advisors Frank Wolak, Han Hong, and Peter Reiss for their support and guidance, as well as to Svetlana Bryzgalova, Guido Imbens, Brad Larsen, Joe Romano, and Paulo Somaini for very helpful conversations. I thank seminar participants at several universities and conferences for helpful comments. I thank Adonis Yatchew for permission to use the Canadian household gasoline consumption dataset from \citet{yatchew_no_2001}. I gratefully acknowledge the financial support from the Stanford Graduate Fellowship Fund as a Koret Fellow and from the Stanford Institute for Economic Policy Research as a B.F. Haley and E.S. Shaw Fellow. All remaining errors are mine.}}
\author{Ivan Korolev\thanks{Department of Economics, Binghamton University. E-mail: \mbox{\href{mailto:ikorolev@binghamton.edu}{ikorolev@binghamton.edu}}. Website: \url{https://sites.google.com/view/ivan-korolev/home}}}
\date{November 7, 2019}

\maketitle

\begin{abstract}
This paper develops a consistent heteroskedasticity robust Lagrange Multiplier (LM) type specification test for semiparametric conditional mean models. Consistency is achieved by turning a conditional moment restriction into a growing number of unconditional moment restrictions using series methods. The proposed test statistic is straightforward to compute and is asymptotically standard normal under the null. Compared with the earlier literature on series-based specification tests in parametric models, I rely on the projection property of series estimators and derive a different normalization of the test statistic. Compared with the recent test in \citet{gupta_2018}, I use a different way of accounting for heteroskedasticity. I demonstrate using Monte Carlo studies that my test has superior finite sample performance compared with the existing tests. I apply the test to one of the semiparametric gasoline demand specifications from \citet{yatchew_no_2001} and find no evidence against it.
\end{abstract}



\end{mytitlepage}

\newpage

\doublespacing

\section{Introduction}\label{introduction}

Applied economists often want to achieve two conflicting goals in their work. On the one hand, they wish to use the most flexible specification possible, so that their results are not driven by functional form assumptions. On the other hand, they wish to have a model that is consistent with the restrictions imposed by economic theory and can be used for valid counterfactual analysis.

Because economic theory usually specifies some portions of the model but leaves the others unrestricted, semiparametric models are especially attractive for empirical work in economics. They are more flexible than fully parametric models but more parsimonious than fully nonparametric models. However, since semiparametric models are restricted versions of fully nonparametric models, it is important to check the validity of implied restrictions. If a semiparametric model is correctly specified, then using it, as opposed to a nonparametric model, typically leads to more efficient estimates and may increase the range of counterfactual questions that can be answered using the model at hand. If a semiparametric model is misspecified, then the semiparametric estimates are likely to be misleading and may result in incorrect policy implications.

In this paper I develop a new heteroskedasticity robust Lagrange Multiplier (LM) type specification test for semiparametric models. The test determines whether a semiparametric conditional mean model that the researcher has estimated provides a statistically valid description of the data as compared to a general nonparametric model. It uses series methods to turn a conditional moment restriction into a growing number of unconditional moment restrictions. I show that if the series functions can approximate the nonparametric alternatives that are allowed as the sample size grows, the test is consistent. My assumptions and proofs make precise what is required of the approximation and its behavior as the number of series terms and the sample size grow. These arguments differ from standard parametric arguments, when the number of regressors in the model is fixed.

My test is closely related to the tests in \citet{gupta_2018}, written independently and simultaneously with earlier versions of this paper. \citet{gupta_2018} also develops series-based specification tests, including the LM type test, for semiparametric models. While there are many similarities between his paper and mine, there are important differences. Unlike this paper, \citet{gupta_2018} focuses on the homoskedastic case in his theoretical analysis. He proposes an extension to the heteroskedastic case in his paper but does not provide any theoretical results for that case. Moreover, the heteroskedasticity robust test in \citet{gupta_2018} is based on the FGLS residuals and uses the estimate of the FGLS weighting matrix that is generally inconsistent. As I show in Section~\ref{simulations} using Monte Carlo studies, his test is undersized and has virtually no power in finite samples when the errors are heteroskedastic.

In contrast with \citet{gupta_2018}, my test is based on the OLS residuals combined with the robust estimate of the variance-covariance matrix in the spirit of \citet{white_1980}. My test also closely resembles the parametric heteroskedasticity robust LM test in \citet{wooldridge_1987}. As a result, my test is valid even when a consistent estimate of the variance-covariance matrix of the errors is not available. I do not only prove the asymptotic validity of my test, but also demonstrate using simulations that it outperforms the test in \citet{gupta_2018} in finite samples. In particular, my test controls size well and has good power.

As compared with the earlier literature on consistent series-based specification tests, my test uses a different normalization of the test statistic. In order to achieve consistency, I let the number of restrictions grow with the sample size. Thus, I cannot obtain a convergence in distribution result for the standard LM type test statistic, which diverges to infinity, and need to normalize it. I normalize the LM test statistic by the number of restrictions the semiparametric model imposes on the nonparametric model. This normalization may seem natural, because the usual parametric LM test is asymptotically $\chi^2(r)$ under the null, and it is well known that $(\chi^2(r) - r)/\sqrt{2 r} \overset{d}{\to} N(0,1)$ as $r \to \infty$. However, the existing literature on consistent series-based specification tests in parametric models, e.g.  \citet{de_jong_bierens_1994}, \citet{hong_white_1995}, and \citet{donald_et_al_2003}, mostly uses the normalization given by the total number of parameters.

Intuitively, the difference arises because I nest the null model in the alternative and rely on the projection property of series estimators. As a result, the restricted residuals are orthogonal to the series terms included in the restricted model. This means that even under the alternative, only a subset of moment conditions, rather than all of them, can be violated, which in turn affects the normalization of the test statistic. I refer to the normalization of the test statistic by the number of restrictions, rather than the total number of parameters, as the degrees of freedom correction. While this distinction may appear subtle, it has important theoretical and practical consequences.

From the theoretical point of view, the degrees of freedom correction allows me to develop the asymptotic theory under mild rate conditions. In my asymptotic analysis, I decompose the test statistic into the leading term and the remainder. Without the projection property, the remainder term would contain both bias and variance of the semiparametric estimates. Thus, in order to bound the remainder term, I would need to require both bias and variance to vanish sufficiently fast, and the resulting rate conditions would be very restrictive. In contrast, by relying on the projection nature of series estimators, I can directly account for the estimation variance, so that only bias enters the remainder term. Because of this, I only need to control the rate at which bias goes to zero to bound the remainder term. As a result, I can derive the asymptotic distribution of the test statistic under fairly weak rate conditions (see Section~\ref{asymptotics_summary} for more details).

From the practical point of view, the degrees of freedom correction substantially improves the finite sample performance of the test. It has long been known in the literature on consistent specification tests that asymptotic approximations often do not work well in finite samples (see, e.g., \citet{li_wang_1998}). The bootstrap is typically used to improve the finite sample performance of consistent specification tests, but it may be computationally costly. While I propose a wild bootstrap procedure that can be used with my test and establish its asymptotic validity, my simulations suggest that the finite sample performance of the asymptotic version of the proposed test with the degrees of freedom correction is almost as good as, and often as good as, that of its bootstrap version. In other words, the degrees of freedom correction serves as a computationally attractive analytical way to obtain a test with very good small sample behavior.

\citet{guay_guerre_2006} also develop a series-based specification test that is based on a quadratic form in the restricted model residuals. They allow for a data-driven tuning parameters choice and obtain an adaptive rate-optimal test, but their test only deals with parametric null hypotheses and does not involve the degrees of freedom correction that is crucial in my analysis. I use a modified version of their approach to tuning parameter choice in simulations in the supplement and find that it performs well. It would be interesting to establish the theoretical validity of their approach when the null hypothesis is semiparametric in future work.

To my knowledge, there are two studies in addition to \citet{gupta_2018} that develop series-based specification tests that allow for semiparametric null hypotheses. \citet{gao_et_al_2002} only consider additive models and do not explicitly develop a consistent test against a general class of alternatives. Their test is based on the estimates from the unrestricted model and can be viewed as a Wald type test for variables significance in nonparametric additive conditional mean models. In contrast, my test is based on the residuals from the restricted semiparametric model and is consistent against a broad class of alternatives for the conditional mean function.

\citet{li_et_al_2003} use the approach that was first put forth in \citet{bierens_1982} and \citet{bierens_1990} and develop a series-based specification test based on weighting the moments, rather than considering an increasing number of series-based unconditional moments. Their test can detect local alternatives that approach zero at the parametric rate $n^{-1/2}$, but the asymptotic distribution of their test statistic depends on nuisance parameters, and it is difficult to obtain appropriate critical values. They propose using a residual-based wild bootstrap to approximate the critical values. In contrast, my test statistic is asymptotically standard normal under the null, so calculating critical values is straightforward.

Another strand of literature related to my paper studies inference in conditional moment restriction models, such as \citet{chen_pouzo_2015} and \citet{chernozhukov_et_al_2015}. While those papers are more general than mine and can handle a wider class of models, they do not explicitly propose using their methods for the purpose of specification testing. They also impose high level assumptions and do not derive primitive conditions under which their tests can serve as consistent specification tests. In contrast, I impose assumptions that are more primitive and familiar from the literature on series estimation and explicitly derive the rate conditions under which the proposed test is valid and consistent. My proof techniques are also different from those in \citet{chen_pouzo_2015} and \citet{chernozhukov_et_al_2015}.

Similar to the overidentifying restrictions test in GMM models or other omnibus specification tests, the proposed test is silent on how to proceed if the null is rejected. In this respect, it is clearly a test of a particular model specification but not a comprehensive model selection procedure. I rely on the results from this paper and study model selection methods for semiparametric and nonparametric models, such as series-based information criteria or upward/downward testing procedures based on the proposed test, in \citet{korolev_2018}.

The remainder of the paper is organized as follows. Section~\ref{example} presents several motivating examples. Section~\ref{model} introduces the model and describes how to construct the series-based specification test for semiparametric models. Section~\ref{asymptotics_series} develops the asymptotic theory for the proposed test. Section~\ref{simulations} studies the finite sample performance of the test. Section~\ref{empirical_example} applies the proposed test to one of the semiparametric household gasoline demand specifications from \citet{yatchew_no_2001} and shows that it is not rejected by the data. Section~\ref{conclusion} concludes.

\ref{appendix_tables_figures} collects all figures. \ref{appendix_proofs} contains auxiliary lemmas and proofs of my results. Additional details and results can be found in the supplement.

\section{Motivating Examples}\label{example}

This section presents several examples of applied papers in economics that use semiparametric models in their analysis. \citet{hausman_newey_1995} estimate gasoline demand as a flexible function of prices and incomes. They also include in the model other covariates and treat them parametrically to reduce the dimension of the nonparametric part of the model:
\[
y = g(PRICE,INCOME) + z' \beta + \varepsilon,
\]
where $y$ is the gasoline demand (in logs), $PRICE$ is the logarithm of the gasoline price, $INCOME$ is the logarithm of the household income, and $z$ includes other variables. For similar reasons, \citet{schmalensee_stoker_1999} use the following semiparametric gasoline demand specification:
\[
y = g(INCOME,AGE) + z' \beta + \varepsilon,
\]
where $AGE$ is the logarithm of the age. In turn, \citet{yatchew_no_2001} model gasoline demand as:
\[
y = \alpha_1 PRICE + \alpha_2 INCOME + g(AGE) + z' \beta + \varepsilon,
\]

Semiparametric models have also been used in estimation of Engel curves (\citet{gong_et_al_2005}), the labor force participation equation (\citet{martins_2001}), the relationship between land access and poverty (\citet{finan_et_al_2005}), and marginal returns to education (\citet{carneiro_et_al_2011}). My test can be used to check whether these semiparametric specifications are statistically adequate.

\section{The Model and Specification Test}\label{model}

This section describes the model, discusses semiparametric series estimators, and introduces the test statistic.

\subsection{The Model and Null Hypothesis}

Let $(Y_i, X_i')' \in \mathbb{R}^{1+d_x}$, $d_x \in \mathbb{N}$, $i = 1, ..., n$, be independent and identically distributed random variables with $E[Y_i^2] < \infty$. Then there exists a measurable function $g$ such that $g(X_i) = E[Y_i | X_i]$ a.s., and the nonparametric model can be written as
\[
Y_i = g(X_i) + \varepsilon_i, \quad E[\varepsilon_i | X_i] = 0
\]

The goal of this paper is to test the null hypothesis that the conditional mean function is semiparametric. A generic semiparametric null hypothesis is given by
\begin{align}\label{generic_h0}
H_0^{SP}: P_{X} \left(g(X_i) = f(X_i, \theta_0, h_0) \right) = 1 \text{ for some } \theta_0 \in \Theta, h_0 \in \mathcal{H},
\end{align}
where $f: \mathcal{X} \times \Theta \times \mathcal{H} \to \mathbb{R}$ is a known function, $\theta \in \Theta \subset \mathbb{R}^{d}$ is a finite-dimensional parameter, and $h \in \mathcal{H} = \mathcal{H}_1 \times ... \times \mathcal{H}_q$ is a vector of unknown functions.

The fixed alternative is
\begin{align}\label{generic_h1}
H_1: P_{X} \left(g(X_i) \neq f(X_i, \theta, h) \right) > 0 \text{ for all } \theta \in \Theta, h \in \mathcal{H}
\end{align}

When the semiparametric null hypothesis is true, the model becomes
\[
Y_i = f(X_i, \theta_0, h_0) + \varepsilon_i, \quad E[\varepsilon_i | X_i] = 0
\]

In order to test whether the semiparametric model is correctly specified, I turn the conditional moment restriction $E[\varepsilon_i | X_i] = 0$ into a sequence of unconditional moment restrictions using series methods. But first, I introduce series approximating functions and semiparametric series estimators.

For any variable $v$, let $Q^{a_n}(v) = (q_1(v), ..., q_{a_n}(v))'$ be an $a_n$-dimensional vector of approximating functions of $v$, where the number of series terms $a_n$ is allowed to grow with the sample size $n$. Possible choices of series functions include:
\begin{enumerate}[(a)]

\item

Power series. For univariate $v$, they are given by:
\begin{align}\label{series}
Q^{a_n}(v) = (1, v, ..., v^{a_n-1})'
\end{align}

\item

Splines. Let $s$ be a positive scalar giving the order of the spline, and let $t_{1}, ..., t_{a_n - s - 1}$ denote knots.  Then for univariate $v$, splines are given by:
\begin{align}\label{splines}
Q^{a_n}(v) = (1, v, ..., v^s, \mathbbm{1}\{ v > t_1 \} (v-t_1)^s, ..., \mathbbm{1}\{ v > t_{a_n - s -1} \} (v - t_{a_n - s -1})^s)
\end{align}

\end{enumerate}

Multivariate power series or splines can be formed from products of univariate ones.\footnote{For a more detailed discussion of series methods and for other possible choices of basis functions, see Section 5 in \citet{newey_1997}, Section 2 in \citet{donald_et_al_2003}, or Section 2.3 in \citet{chen_2007}.}

\subsection{Series Estimators}

In this section I introduce additional notation and define semiparametric estimators. First, I write the semiparametric null model in a series form:
\begin{align}\label{semiparametric_series_form}
Y_i = f(X_i,\theta,h) + \varepsilon_i = W_i' \beta_1 + R_i + \varepsilon_i  = W_i' \beta_1 + e_i,
\end{align}
where $W_i := W^{m_n}(X_i) := (W_{1}(X_i), ..., W_{m_n}(X_i))'$ are appropriate regressors or basis functions, $m_n$ is the number of parameters in the semiparametric null model, $R_i := R_i^{m_n} := \left( f(X_i,\theta,h) - W_i' \beta_1 \right)$ is the approximation error, and $e_i = \varepsilon_i + R_i$ is the composite error term.

\begin{example}

Suppose that the semiparametric null model is partially linear with $f(X_i,\theta,h) = X_{1i} \theta + h(X_{2i})$, where $X_{1i}$ and $X_{2i}$ are scalars and $X_i = (X_{1i},X_{2i})'$. Approximate $h(x_2) \approx \sum_{j=1}^{a_n}{\gamma_j q_j(x_2)} = Q^{a_n}(x_2)' \gamma$ and rewrite the model as:
\[
Y_i = X_{1i} \theta + h(X_{2i}) + \varepsilon = X_{1i} \theta + Q^{a_n}(X_{2i})' \gamma + R_i + \varepsilon_i = W_i' \beta_1 + e_i,
\]
where $W_i = (X_{1i}, Q^{a_n}(X_{2i})')'$, $m_n = a_n + 1$, $\beta_1 = (\theta, \gamma')'$, $R_i = h(X_{2i}) - Q^{a_n}(X_{2i})' \gamma$, and $e_i = R_i + \varepsilon_i$. If power series are used, then $W_i = (X_{1i}, 1, X_{2i},X_{2i}^2,...,X_{2i}^{a_n-1})'$.

\end{example}

\begin{remark}
I assume that the semiparametric model $f(X_i,\theta,h)$ can be well approximated by a linear in parameters representation $W_i' \beta_1$. This is the case in many popular classes of semiparametric  models, such as partially linear, additive, or varying coefficient. However, if the null model is nonlinear parametric, i.e. $f(X_i,\theta)$ for a known function $f(\cdot)$, then one should estimate it using parametric methods and use the tests from \citet{hong_white_1995} or \citet{donald_et_al_2003}.
\end{remark}

In order to test the semiparametric model, I test whether additional series terms should enter the model. These are not the extra series terms used to approximate the semiparametric null model better, but rather the series terms that cannot enter the model under the null and are supposed to capture possible deviations from it. Thus, the alternative is given by:
\begin{align}\label{nonparametric_series_form}
Y_i = W_i' \beta_1 + Z_i' \beta_2 + R_i + \varepsilon_i = P_i' \beta + e_i,
\end{align}
where $Z_i := Z^{r_n}(X_i) := (Z_{1}(X_i), ..., Z_{r_n}(X_i))'$ are the series terms that are present only under the alternative, $r_n$ is the number of restrictions, $P_i := P^{k_n}(X_i) := (W_i', Z_i')'$, $k_n = m_n + r_n$ is the total number of parameters, and $\beta = (\beta_1', \beta_2')'$.

\addtocounter{example}{-1}
\begin{example}[continued]

In the partially linear model example, a possible choice of $Z_i$ is $Z_i = (X_{1i}^2, ..., X_{1i}^{a_n-1}, X_{1i} X_{2i}, ..., X_{1i}^{a_n-1} X_{2i}^{a_n-1})'$. These series terms can enter the model if the null hypothesis is false, but they cannot enter the model if it is true.

\end{example}

If the moment condition $E[Y_i - f(X_i,\theta_0,h_0) | X_i ] = 0$ holds, then the errors $\varepsilon_i = Y_i - f(X_i,\theta_0,h_0)$ are uncorrelated with any function of $X_i$, not only with $W_i$, and the additional series terms should be insignificant. Hence, the null hypothesis corresponds to $\beta_2 = 0$. The estimate of $\beta_1$ under the null becomes $\tilde{\beta}_1 = (W' W)^{-1} W' Y$.\footnote{For any vector $V_i$, let $V = (V_1, ..., V_n)'$ be the matrix that stacks all observations together.} The restricted residuals are given by
\[
\tilde{\varepsilon} = Y - W \tilde{\beta}_1 = Y - W (W' W)^{-1} W' Y = M_W Y = M_W (\varepsilon + R),
\]
where $M_W = I - P_W$, $P_W = W (W' W)^{-1} W'$. The residuals satisfy the condition $W' \tilde{\varepsilon} = 0$.

$m_n$ is the number of terms in the semiparametric null model. It has to grow with the sample size in order to approximate nonparametric components of the model sufficiently well.\footnote{The notion of sufficiently well is made precise later.} $r_n$ is the number of series terms that capture possible deviations from the null. $k_n = m_n + r_n$ is the total number of series terms in the unrestricted model. This number has to grow with the sample size if it is to approximate any nonparametric alternative. Typically the number of terms in the restricted semiparametric model, $m_n$, is significantly smaller than the number of terms in the unrestricted nonparametric model, $k_n$, so that $m_n/k_n \to 0$ and $r_n/k_n \to 1$. However, my rate conditions can also accommodate the case when $m_n/k_n \to C < 1$.

\subsection{Test Statistics}

Instead of the conditional moment restriction implied by the null hypothesis: $E[\varepsilon_i | X_i] = E[Y_i - f(X_i, \theta_0, h_0) | X_i] = 0$, the test will be based on the unconditional moment restriction $E[P_i \varepsilon_i] = 0$. I will show that requiring $k_n$ to grow with the sample size and $P(x) := P^{k_n}(x)$ to approximate any unknown function in a wide class of functions will allow me to obtain a consistent specification test based on the growing number of unconditional moment restrictions. The sample analog of the population moment condition is given by $P' \tilde{\varepsilon}/n$. However, because the semiparametric series estimators satisfy the condition $W' \tilde{\varepsilon}/n = 0$, the LM type test statistic is given by:
\begin{align}\label{xi_hc}
\xi_{HC} = \tilde{\varepsilon}' \tilde{Z} (\tilde{Z}' \tilde{\Sigma} \tilde{Z})^{-1} \tilde{Z}' \tilde{\varepsilon},
\end{align}
where $\tilde{Z} = M_W Z$ are the residuals from the regression of each element of $Z_i$ on $W_i$, $\tilde{\varepsilon} = (\tilde{\varepsilon}_1, ..., \tilde{\varepsilon}_n)'$, $\tilde{\varepsilon}_i = Y_i - f(X_i, \tilde{\theta}, \tilde{h}) = Y - W \tilde{\beta}_1$ are the semiparametric residuals, and $\tilde{\Sigma} = diag(\tilde{\varepsilon}_1^2,...,\tilde{\varepsilon}_n^2)$.

The quadratic form $\xi_{HC}$ can be computed in a regression-based way, which is discussed in detail in Section~\ref{implementation} in the supplement.\footnote{For regression-based ways to compute the parametric LM test statistic, see \citet{wooldridge_1987} or Chapters 7 and 8 in \citet{cameron_trivedi_2005}.}

Because the dimensionality $r_n$ of $Z_i$ grows with the sample size, a normalization is needed to obtain a convergence in distribution result. I show that the test statistic
\begin{align}\label{t_test_statistic}
t_{HC,\tau_n} = \frac{\xi_{HC} - \tau_n}{\sqrt{2 \tau_n}}
\end{align}
for a suitable sequence $\tau_n \to \infty$ as $n \to \infty$ works and is asymptotically pivotal. The test is one-sided in the sense that the null hypothesis is rejected when the value of the test statistic is large and positive. I will discuss the choice of $\tau_n$ in the next section.

\section{Asymptotic Theory}\label{asymptotics_series}

\subsection{Behavior of the Test Statistic under the Null}\label{null_series}

In this subsection I derive the asymptotic properties of the proposed test under the null. I start by introducing my assumptions. First, I impose basic assumptions on the data generating process.

\begin{assumption}\label{dgp}

$(Y_i, X_i')' \in \mathbb{R}^{1+d_x}, d_x \in \mathbb{N}, i = 1, ..., n$ are i.i.d. random draws of the random variables $(Y, X')'$, and the support of $X$, $\mathcal{X}$, is a compact subset of $\mathbb{R}^{d_x}$.

\end{assumption}

Next, I define the error term and require it to satisfy certain restrictions.

\begin{assumption}\label{errors_unified}

Let $\varepsilon_i = Y_i - E[Y_i | X_i]$. Then $E[\varepsilon_i^2 | X_i] = \sigma^2(X_i) \overset{def}{=} \sigma_i^2$, $P(0 < \inf_{i}{\sigma_i^2} \leq \sup_{i}{\sigma_i^2} < \infty) = 1$, $E[\varepsilon_i^3 | X_i] = \mu_{3,i}$, $E[\varepsilon_i^4 | X_i] = \mu_{4,i}$, and $\sup_{i}{E[|\varepsilon_i|^{4 + \delta}]} \leq C$ for some constant $C$ and $\delta > 0$.

\end{assumption}

The following assumption deals with the behavior of the approximating series functions. From now on, let $\| A \| = [tr(A'A)]^{1/2}$ be the Euclidian norm of a matrix $A$.

\begin{assumption}\label{series_norms_eigenvalues}

For each $m$, $r$, and $k$ there are nonsingular matrices $B_1$ and $B_2$ such that, for $\bar{W}^{m}(x) = B_1 W^{m}(x)$, $\bar{Z}^{r}(x) = B_2 Z^{r}(x)$, and $\bar{P}^{k}(x) = (\bar{W}^{m}(x)', \bar{Z}^{r}(x)')'$,

\begin{enumerate}[(a)]

\item

There exists a sequences of constants $\zeta(\cdot)$ that satisfies the conditions $\sup_{x \in \mathcal{X}} \| \bar{W}^{m}(x) \| \leq \zeta(m)$, $\sup_{x \in \mathcal{X}} \| \bar{Z}^{r}(x) \| \leq \zeta(r)$, and $\sup_{x \in \mathcal{X}} \| \bar{P}^{k}(x) \| \leq \zeta(k)$.

\item

The smallest eigenvalue of $E[\bar{P}^{k}(X_i) \bar{P}^{k}(X_i)']$ is bounded away from zero uniformly in $k$.

\end{enumerate}

\end{assumption}

\begin{remark}[$\zeta(\cdot)$ for Common Basis Functions]

Explicit expressions for $\zeta(\cdot)$ are available for certain families of basis approximating functions. For instance, it has been shown (see \citet{newey_1997} or Section 15.1.1 in \citet{li_racine_2007}) that under additional assumptions, $\zeta(a) = O(a^{1/2})$ for splines and $\zeta(a) = O(a)$ for power series.

\end{remark}

Next, I impose an assumption that requires the approximation error for the semiparametric model to vanish sufficiently fast under the null.

\begin{assumption}\label{series_approx}
Suppose that $H_0$ holds. There exist $\alpha > 0$ and $\beta_1 \in \mathbb{R}^{m_n}$ such that
\[
\sup_{x \in \mathcal{X}}{| f(x,\theta_0,h_0) - W^{m_n}(x)' \beta_1 |} = O(m_n^{-\alpha})
\]
\end{assumption}

$\beta_1$ in this assumption can be defined in various ways. One natural definition is projection: $\beta_1 = E[W^{m_n}(X_i) W^{m_n}(X_i)']^{-1} E[W^{m_n}(X_i) f(X_i,\theta_0,h_0)]$ (see, e.g., Chapter 21 in \citet{hansen_2019}).

\begin{remark}[$\alpha$ for Common Basis Functions]

In certain special cases, it is possible to characterize $\alpha$ explicitly. Suppose that power series or splines are used and that $f(x,\theta,h) = x_1' \theta + h(x_2)$, where $h$ has $s_h$ continuous derivatives and $x_2$ is $d_{x2}$-dimensional. Then, as shown in \citet{newey_1997} or Chapter 15 in \citet{li_racine_2007}, $\alpha = s_h/d_{x2}$.

\end{remark}

Finally, I impose certain rate conditions:

\begin{assumption}\label{rate_conditions}
$r_n \rightarrow \infty$ as $n \to \infty$ and the following rate conditions hold:
\begin{eqnarray}
\label{rate_cond_r_1_hc} (m_n/n + m_n^{-2\alpha}) \zeta(r_n)^2 r_n^{1/2} &\rightarrow 0 \\
\label{rate_cond_r_2_hc} \zeta(r_n) r_n / n^{1/2} &\rightarrow 0 \\
\label{rate_cond_r_3_hc} n m_n^{-2\alpha}/ r_n^{1/2} &\rightarrow 0,
\end{eqnarray}
\end{assumption}

It is now possible to derive the asymptotic distribution of the test statistic~\ref{t_test_statistic} under the null. It is given below:

\begin{theorem}\label{asy_distr_t_r_n_hc}
Under $H_0$ and Assumptions \ref{dgp}--\ref{rate_conditions},
\begin{eqnarray}\label{eqn_t_r_n_HC}
t_{HC,r_n} = \frac{\xi_{HC} - r_n}{\sqrt{2 r_n}} \stackrel{d}{\rightarrow} N(0,1),
\end{eqnarray}
where $\xi_{HC}$ is as in Equation~(\ref{xi_hc}).
\end{theorem}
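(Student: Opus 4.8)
The plan is to use the projection property to reduce $\xi_{HC}$ to a quadratic form in the structural errors $\varepsilon$ and then invoke a central limit theorem for quadratic forms of growing rank. Since $\tilde{Z} = M_W Z$ and $M_W$ is idempotent, the restricted residuals satisfy $\tilde{Z}'\tilde{\varepsilon} = Z' M_W(\varepsilon + R) = \tilde{Z}'\varepsilon + \tilde{Z}'R$, so that the estimation of $\beta_1$ enters only through the projector $M_W$ and the approximation error $R$, and not through a separate estimation-variance term. This is the structural reason the remainder will be controllable. The first step is therefore to write
\[
\xi_{HC} = (\tilde{Z}'\varepsilon + \tilde{Z}'R)'(\tilde{Z}'\tilde{\Sigma}\tilde{Z})^{-1}(\tilde{Z}'\varepsilon + \tilde{Z}'R),
\]
and to argue that the leading object is the infeasible quadratic form $\hat{\xi} = \varepsilon' A \varepsilon$ with $A = \tilde{Z}(\tilde{Z}'\Sigma\tilde{Z})^{-1}\tilde{Z}'$ and $\Sigma = \operatorname{diag}(\sigma_1^2,\dots,\sigma_n^2)$. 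Because $\xi_{HC}$ is invariant to the nonsingular reparametrizations $W\mapsto B_1 W$ and $Z\mapsto B_2 Z$, I may work throughout with the normalized bases of Assumption~\ref{series_norms_eigenvalues}, under which $\|\bar{Z}_i\|\le\zeta(r_n)$ and, combined with the lower bound on $\sigma_i^2$ from Assumption~\ref{errors_unified}, $\tilde{Z}'\Sigma\tilde{Z}$ has smallest eigenvalue of order $n$.

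Conditioning on the design $X = (X_1,\dots,X_n)$, the matrices $\tilde{Z}$, $\Sigma$, and hence $A$ are nonrandom. I would first verify exact centering: using $\tilde{Z}=M_W Z$, the cyclic property of the trace, and $E[\varepsilon_i\varepsilon_j\mid X]=\sigma_i^2\mathbbm{1}\{i=j\}$, one gets $E[\hat{\xi}\mid X]=\operatorname{tr}(A\Sigma)=\operatorname{tr}\big((\tilde{Z}'\Sigma\tilde{Z})^{-1}\tilde{Z}'\Sigma\tilde{Z}\big)=r_n$, which is precisely why centering by the number of restrictions $r_n$, rather than by $k_n$, is the correct degrees-of-freedom correction. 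Next, writing $G=\Sigma^{1/2}\tilde{Z}$ so that $\Sigma^{1/2}A\Sigma^{1/2}=G(G'G)^{-1}G'=:P_G$ is symmetric idempotent of rank $r_n$, the standard variance identity for quadratic forms gives $\operatorname{Var}(\hat{\xi}\mid X)=2r_n+\sum_i A_{ii}^2(\mu_{4,i}-3\sigma_i^4)$, where the correction equals $\sum_i (P_G)_{ii}^2(\mu_{4,i}-3\sigma_i^4)/\sigma_i^4$. Since $P_G$ is a projection, $\sum_i (P_G)_{ii}^2 \le \big(\max_i (P_G)_{ii}\big)\,r_n$, and the leverage obeys $\max_i (P_G)_{ii}\le C\zeta(r_n)^2/n = o(1)$ under the rate conditions; with $\mu_{4,i}$ bounded by Assumption~\ref{errors_unified}, the correction is $o(r_n)$ and $\operatorname{Var}(\hat{\xi}\mid X)=2r_n(1+o_p(1))$.

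With the conditional mean and variance pinned down, I would establish asymptotic normality of $(\hat{\xi}-r_n)/\sqrt{2r_n}$ by splitting $\hat{\xi}-r_n$ into the diagonal sum $\sum_i A_{ii}(\varepsilon_i^2-\sigma_i^2)$ and the off-diagonal bilinear form $\sum_{i\ne j}A_{ij}\varepsilon_i\varepsilon_j$. The diagonal part is a sum of independent terms whose variance is the $o(r_n)$ correction above, hence negligible after normalization. Standardizing $u_i=\varepsilon_i/\sigma_i$, the off-diagonal part becomes $\sum_{i\ne j}(P_G)_{ij}u_iu_j$, a degenerate second-order form to which a martingale central limit theorem (of de~Jong type) applies; the required negligibility condition is immediate because the nonzero eigenvalues of $P_G$ equal one, so $\max_i\lambda_i^2/\sum_i\lambda_i^2 = 1/r_n\to 0$, and the $4+\delta$ moments of Assumption~\ref{errors_unified} supply the Lindeberg control. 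Since the resulting limit $N(0,1)$ does not depend on $X$, conditional convergence upgrades to unconditional convergence.

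The remaining and, I expect, most delicate work is to show that the two feasibility errors are $o_p(\sqrt{r_n})$. For the approximation error, note that $\tilde{Z}(\tilde{Z}'\Sigma\tilde{Z})^{-1}\tilde{Z}'=\Sigma^{-1/2}P_H\Sigma^{-1/2}$ for the projector $P_H$ onto the range of $\Sigma^{1/2}\tilde{Z}$, so the quadratic-in-$R$ term is at most $\|\Sigma^{-1/2}R\|^2=\sum_i R_i^2/\sigma_i^2 = O(nm_n^{-2\alpha})$ by Assumptions~\ref{errors_unified} and~\ref{series_approx}, which is $o(\sqrt{r_n})$ by rate condition~\eqref{rate_cond_r_3_hc}, while the cross term is controlled by Cauchy--Schwarz against the leading term. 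The genuinely hard part is replacing the estimated weighting matrix $\tilde{\Sigma}=\operatorname{diag}(\tilde{\varepsilon}_i^2)$ by $\Sigma$: since $\tilde{\varepsilon}_i=\varepsilon_i+R_i-\big(P_W(\varepsilon+R)\big)_i$ depends on the whole sample, I must control $\tilde{\varepsilon}_i^2-\sigma_i^2$ uniformly, splitting it into an estimation piece of order $m_n/n$ through $P_W$, a bias piece of order $m_n^{-2\alpha}$, and the idiosyncratic piece $\varepsilon_i^2-\sigma_i^2$. Propagating these through the $\zeta(r_n)^2$ bound on the magnitude of the $Z$ terms and dividing by $\sqrt{r_n}$ is exactly what rate conditions~\eqref{rate_cond_r_1_hc} and~\eqref{rate_cond_r_2_hc} are designed to absorb, yielding a change in the quadratic form that is $o_p(\sqrt{r_n})$. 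Collecting the leading-term central limit theorem with these two $o_p(\sqrt{r_n})$ remainder bounds and applying Slutsky's theorem delivers $t_{HC,r_n}\overset{d}{\to}N(0,1)$.
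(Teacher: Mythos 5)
Your proposal follows essentially the same route as the paper's proof: the projection identity $\tilde{\varepsilon}=M_W(\varepsilon+R)$, reduction to the infeasible quadratic form with weighting $\Omega_{HC}=\tilde{Z}'\Sigma\tilde{Z}/n$, exact conditional centering $\operatorname{tr}(A\Sigma)=r_n$ via the idempotency of $\Sigma^{1/2}A\Sigma^{1/2}$ (the paper's $\tilde{\mathcal{H}}_n$ in Lemmas~\ref{asy_norm} and~\ref{psi_exp_var}), the variance identity $2r_n$ plus a fourth-cumulant correction killed by the leverage bound (the paper's $O_p(n^{-1}r_n^2)$ term), a martingale CLT for the degenerate bilinear part (you cite a de~Jong-type theorem, the paper verifies Scott (1973), following \citet{gupta_2018} --- same content), and the swap $\tilde{\Sigma}\to\Sigma$ split into estimation, bias, and idiosyncratic pieces exactly as in Lemmas~\ref{omegas_hc} and~\ref{diff_r_n_small_hc} under rate conditions~(\ref{rate_cond_r_1_hc})--(\ref{rate_cond_r_2_hc}). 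One cosmetic remark on the CLT step: for non-Gaussian errors the eigenvalue condition $\max_i\lambda_i^2/\sum_i\lambda_i^2=1/r_n\to0$ alone is not sufficient; you also need the row-sum condition $\max_i\sum_j (P_G)_{ij}^2/r_n=\max_i (P_G)_{ii}/r_n\to0$, but since you already derived the leverage bound $\max_i (P_G)_{ii}=O_p(\zeta(r_n)^2/n)$, this is available in your sketch.

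There is, however, one step that fails as written: controlling the cross term ``by Cauchy--Schwarz against the leading term.'' Cauchy--Schwarz gives $|R'A\varepsilon|\le (R'AR)^{1/2}(\varepsilon'A\varepsilon)^{1/2}=O_p\bigl((n m_n^{-2\alpha})^{1/2} r_n^{1/2}\bigr)$, so after dividing by $\sqrt{2r_n}$ you are left with $O_p\bigl((n m_n^{-2\alpha})^{1/2}\bigr)$. Rate condition~(\ref{rate_cond_r_3_hc}) only guarantees $n m_n^{-2\alpha}=o(r_n^{1/2})$, \emph{not} $o(1)$; the theorem explicitly allows the bias quadratic $R'AR$ to diverge (it need only be $o_p(\sqrt{r_n})$), and in that regime your bound on the cross term diverges too. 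The fix --- and what the paper actually does in Step~1 of its proof --- is to exploit the zero conditional mean of $\varepsilon$ given the design: $E[R'A\varepsilon\mid X]=0$ and $\operatorname{Var}(R'A\varepsilon\mid X)=R'A\Sigma A R\le C R'R=O(n m_n^{-2\alpha})$, so $R'A\varepsilon=O_p(n^{1/2}m_n^{-\alpha})=o_p(r_n^{1/4})$, which is $o_p(\sqrt{r_n})$ under~(\ref{rate_cond_r_3_hc}); this is the paper's bound $|C R'\varepsilon|=O_p(n^{1/2}m_n^{-\alpha})$. A second, smaller point of the same flavor: your clean projection bound $R'AR\le\|\Sigma^{-1/2}R\|^2$ is valid only for the $\Sigma$-weighted matrix; with $\tilde{\Sigma}$ the analogous bound involves $\sum_i R_i^2/\tilde{\varepsilon}_i^2$, which is uncontrolled since $\tilde{\varepsilon}_i$ can be near zero. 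So you must either bound the $R$-terms using the w.p.a.~1 eigenvalue bounds on $\tilde{\Omega}_{HC}$ before swapping (as the paper does, via Lemma~\ref{omegas_hc}), or extend your swap argument, which as sketched covers only the $\varepsilon$-quadratic form, to the $R$-terms as well.
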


The following corollary shows that using a $\chi^2$ approximation with a growing number of degrees of freedom also results in an asymptotically exact test.

\begin{corollary}\label{corollary_t_r_n_hc}
Under assumptions of Theorem~\ref{asy_distr_t_r_n_hc},
\[
P ( \xi_{HC} \geq \chi_{1-\alpha}^2(r_n) )  = P \bigg{(} \frac{\xi_{HC} - r_n}{\sqrt{2 r_n}} \geq \frac{\chi_{1-\alpha}^2(r_n) - r_n}{\sqrt{2 r_n}} \bigg{)} \rightarrow \alpha,
\] 
where $\chi_{1-\alpha}^2(r_n)$ is the $(1-\alpha)$-quantile of the $\chi^2$ distribution with $r_n$ degrees of freedom.
\end{corollary}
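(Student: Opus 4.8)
The plan is to reduce the statement to the distributional convergence established in Theorem~\ref{asy_distr_t_r_n_hc} together with a purely deterministic analysis of the chi-squared critical value. The first equality in the display requires no probabilistic input: since $\sqrt{2 r_n} > 0$, subtracting $r_n$ and dividing by $\sqrt{2 r_n}$ on both sides of the event $\{\xi_{HC} \geq \chi_{1-\alpha}^2(r_n)\}$ leaves the event unchanged, so the two probabilities coincide for every $n$. It therefore remains to analyze the limit of $P(t_{HC,r_n} \geq c_n)$, where $c_n := (\chi_{1-\alpha}^2(r_n) - r_n)/\sqrt{2 r_n}$ is the standardized chi-squared critical value and $t_{HC,r_n}$ is the statistic of the theorem.

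First I would pin down the limit of the deterministic sequence $c_n$. Let $G_r$ denote the CDF of $(\chi^2(r) - r)/\sqrt{2 r}$. The classical central limit theorem for chi-squared variables gives $G_r(x) \to \Phi(x)$ at every $x$ as $r \to \infty$, where $\Phi$ is the standard normal CDF. Since standardization is a strictly increasing affine transformation, $c_n$ is exactly the $(1-\alpha)$-quantile of $G_{r_n}$. Because the limit $\Phi$ is continuous and strictly increasing, pointwise convergence of the CDFs transfers to convergence of quantiles, so $c_n \to \Phi^{-1}(1-\alpha) =: z_{1-\alpha}$ as $r_n \to \infty$.

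Next I would combine this with the convergence $t_{HC,r_n} \stackrel{d}{\rightarrow} N(0,1)$ from Theorem~\ref{asy_distr_t_r_n_hc}. The subtlety is that the threshold $c_n$ moves with $n$, so the ordinary portmanteau lemma does not apply directly. To handle the moving threshold I would invoke P\'olya's theorem: because the limit CDF $\Phi$ is continuous, the convergence $P(t_{HC,r_n} \leq x) \to \Phi(x)$ is uniform in $x$. Writing
\[
\big| P(t_{HC,r_n} \leq c_n) - \Phi(c_n) \big| \leq \sup_{x} \big| P(t_{HC,r_n} \leq x) - \Phi(x) \big| \to 0,
\]
and using continuity of $\Phi$ with $c_n \to z_{1-\alpha}$ to get $\Phi(c_n) \to \Phi(z_{1-\alpha}) = 1 - \alpha$, I conclude $P(t_{HC,r_n} \leq c_n) \to 1 - \alpha$. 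Continuity of the limit also removes the distinction between $\geq$ and $>$, so $P(t_{HC,r_n} \geq c_n) \to \alpha$, which is the claim.

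The only genuinely delicate point is the quantile-convergence step for $c_n$: convergence in distribution alone does not force quantile convergence in general, and here it is secured precisely because the Gaussian limit has a continuous, strictly increasing CDF. Everything else is bookkeeping resting on the theorem. I expect the main obstacle to be stating the quantile-convergence lemma cleanly and verifying that the standardized chi-squared critical value is indeed the $(1-\alpha)$-quantile of $G_{r_n}$, after which the uniform-convergence argument closes the proof.
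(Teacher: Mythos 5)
Your proof is correct and follows exactly the route the paper intends: the paper states the corollary without a separate proof, treating it as immediate from Theorem~\ref{asy_distr_t_r_n_hc} together with the well-known fact (noted in the introduction) that $(\chi^2(r)-r)/\sqrt{2r} \overset{d}{\to} N(0,1)$, which implies the standardized critical value tends to $z_{1-\alpha}$. Your write-up simply supplies the technical details the paper leaves implicit --- the quantile-convergence step (valid because $\Phi$ is continuous and strictly increasing) and P\'olya's theorem to handle the moving threshold --- and both are handled correctly.
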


\subsection{Behavior of the Test Statistic under a Fixed Alternative}\label{global_alternative_series}

In this subsection, I study the behavior of the test statistic under a fixed alternative. The true model is now nonparametric:
\[
Y_i = g(X_i) + \varepsilon_i = W_i' \beta_1 + Z_i' \beta_2 + R_i + \varepsilon_i, \quad E[\varepsilon_i | X_i] = 0,
\]
where $\beta_2' \beta_2 > 0$. The pseudo-true parameter value $\beta_1^1$ solves the moment condition $E[W_i (Y_i - W_i' \beta_1^{1})] = 0$, and the semiparametric estimator $\tilde{\beta}_1$ solves its sample analog $W' (Y - W \tilde{\beta}_1)/n = 0$. Let $\eta_i = Y_i - W_i' \beta_1^{1}$.

The following theorem provides the divergence rate of the test statistic under a fixed alternative:

\begin{theorem}\label{global_alternative_t_r}
Let $\Omega_{\eta,HC} = E[\eta_i^{2} \tilde{Z}_i \tilde{Z}_i']$, $\tilde{\Omega}_{HC} = \tilde{Z}' \tilde{\Sigma} \tilde{Z}/n$. Suppose that there exists $\beta = (\beta_1', \beta_2')'$ such that $\sup_{x \in \mathcal{X}}{| g(x) - P^{k_n}(x)' \beta |} \to 0$ as $k_n \to \infty$, $\| \tilde{\Omega}_{HC} - \Omega_{\eta,HC} \| \overset{p}{\to} 0$, the smallest eigenvalue of $\Omega_{\eta,HC}$ is bounded away from zero, $r_n \to \infty$, $r_n^{1/2}/n \to 0$. Then under $H_1$ and for any nonstochastic sequence $\{ C_n \}$, $C_n = o(n r_n^{-1/2})$, $P(t_{HC,r_n} > C_n) \to 1$.
\end{theorem}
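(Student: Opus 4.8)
The plan is to show that under $H_1$ the quadratic form $\xi_{HC}$ diverges at the rate $n$, so that after the centering and scaling in $t_{HC,r_n}$ it exceeds any $C_n = o(n r_n^{-1/2})$ with probability tending to one (recall the test is one-sided, rejecting for large $\xi_{HC}$). The first step is to rewrite $\xi_{HC}$ in terms of $\eta$. Since the pseudo-true value gives $Y = W\beta_1^1 + \eta$ and $M_W W = 0$, the restricted residual is $\tilde\varepsilon = M_W Y = M_W\eta$, whence $\tilde Z'\tilde\varepsilon = (M_W Z)'\eta = \tilde Z'\eta$. Writing $\bar g_n = \tilde Z'\eta/n$, this yields the clean identity
\[
\xi_{HC} = (\tilde Z'\eta)'(\tilde Z'\tilde\Sigma\tilde Z)^{-1}(\tilde Z'\eta) = n\,\bar g_n'\,\tilde\Omega_{HC}^{-1}\,\bar g_n,
\]
so everything reduces to showing that the scalar $\bar g_n'\tilde\Omega_{HC}^{-1}\bar g_n$ is bounded away from zero in probability.

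Second, I would pass to population quantities. Let $\mu_n = E[\tilde Z_i\eta_i]$. Because the pseudo-true value gives $E[W_i\eta_i]=0$ and the $W$-residualized regressors are (in the population) functions of $X_i$ while $E[\varepsilon_i \mid X_i]=0$, we have $\mu_n = E[\tilde Z_i\delta_i]$ with $\delta_i = g(X_i) - W_i'\beta_1^1$. A law of large numbers / concentration argument, uniform over the growing dimension and controlled by Assumption~\ref{series_norms_eigenvalues}, should give $\|\bar g_n - \mu_n\| = o_p(1)$; combined with $\|\tilde\Omega_{HC} - \Omega_{\eta,HC}\|\overset{p}{\to} 0$ and the eigenvalue bound on $\Omega_{\eta,HC}$ (so that $\tilde\Omega_{HC}^{-1}$ is well-behaved), this gives $\bar g_n'\tilde\Omega_{HC}^{-1}\bar g_n = \mu_n'\Omega_{\eta,HC}^{-1}\mu_n + o_p(1)$.

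Third, and this is the crux, I would establish a positive lower bound $\mu_n'\Omega_{\eta,HC}^{-1}\mu_n \ge c_0 > 0$ that does not wash out as $k_n\to\infty$. Since $E[\eta_i^2\mid X_i] = \delta_i^2 + \sigma_i^2$ is bounded above on the compact support (Assumption~\ref{errors_unified} and the uniform approximability of $g$), one has $\Omega_{\eta,HC} \preceq C\,E[\tilde Z_i\tilde Z_i']$, so $\mu_n'\Omega_{\eta,HC}^{-1}\mu_n \ge C^{-1}\mu_n'(E[\tilde Z_i\tilde Z_i'])^{-1}\mu_n$. Using $E[W_i\delta_i]=0$ and the good approximation $g \approx W'\beta_1 + Z'\beta_2$, the vector $\mu_n$ equals $E[\tilde Z_i\tilde Z_i']\beta_2$ up to an approximation-error term, so the bound reduces to $C^{-1}\beta_2'E[\tilde Z_i\tilde Z_i']\beta_2$, which is asymptotically the squared $L_2$ distance between $g$ and the span of the semiparametric regressors $W$. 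Under $H_1$, $g$ is separated in $L_2$ from every semiparametric function, so this distance is bounded away from zero uniformly in $k_n$. This is exactly the place where the alternative $\beta_2'\beta_2 > 0$ (equivalently $P_{X}(g\neq f)>0$) must be converted into a \emph{quantitative, dimension-free} bound, and I expect it to be the main obstacle: one must rule out that enlarging $W$ toward the full semiparametric model lets the projection absorb the $Z'\beta_2$ component.

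Finally, combining the three steps gives $\xi_{HC} \ge c_0 n$ with probability tending to one, so that
\[
t_{HC,r_n} = \frac{\xi_{HC} - r_n}{\sqrt{2 r_n}} \ge \frac{c_0 n - r_n}{\sqrt{2 r_n}}.
\]
Since $\xi_{HC}$ requires $\tilde Z'\tilde\Sigma\tilde Z$ (an $r_n\times r_n$ matrix of rank at most $n$) to be invertible, necessarily $r_n \le n$, and in the operative regime $r_n = o(n)$ both the centering $r_n$ and the stochastic part of $\xi_{HC}$ (of order $O_p(r_n)$) are negligible relative to the order-$n$ signal, while the cross term is $O_p((n r_n)^{1/2}) = o_p(n)$. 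Hence the right-hand side is of exact order $n r_n^{-1/2}$, which dominates any sequence $C_n = o(n r_n^{-1/2})$, and therefore $P(t_{HC,r_n} > C_n)\to 1$.
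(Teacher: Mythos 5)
Your skeleton is sound and overlaps the paper's proof at its core: both arguments exploit $W'\tilde{\varepsilon}=0$ to reduce $\xi_{HC}$ to a quadratic form in $\tilde{Z}'\tilde{\varepsilon}$, use $\| \tilde{\Omega}_{HC} - \Omega_{\eta,HC} \| \overset{p}{\to} 0$ plus the eigenvalue bound to control $\tilde{\Omega}_{HC}^{-1}$, and let an order-$n$ signal dominate. The organization differs, though: the paper never passes to the population moment $\mu_n = E[\tilde{Z}_i \eta_i]$. It stays in sample form, writing the alternative in its series representation $Y = W\beta_1 + Z\beta_2 + R + \varepsilon$ with $\beta_2'\beta_2 > 0$ (this is part of the maintained setup of Section 4.2), decomposing $\tilde{\varepsilon} = \tilde{Z}\beta_2 + M_W R + M_W \varepsilon$, and expanding $\xi_{HC}$ into six terms: the $\varepsilon$-quadratic form centered at $r_n$ is $O_p(1)$ after scaling; the terms involving $R$ and the cross terms are negligible because $\sup_{x}|R(x)| \overset{p}{\to} 0$; and the nonnegative signal term $n^{-1}\beta_2'\tilde{Z}'\tilde{Z}\tilde{\Omega}_{HC}^{-1}\tilde{Z}'\tilde{Z}\beta_2/\sqrt{2r_n}$, of order $n\beta_2'\beta_2/\sqrt{r_n}$ by the eigenvalue bounds, dominates. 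The payoff of that route is exactly at the place you call the crux.

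Two concrete gaps in your version. First, the separation lower bound $\mu_n'\Omega_{\eta,HC}^{-1}\mu_n \ge c_0 > 0$ that you flag as ``the main obstacle'' indeed cannot be derived from $H_1$ as stated: as $m_n$ grows, the span of $W$ grows toward the semiparametric class, and if that class is not closed in $L_2$ the distance from $g$ to the span can drift to zero even though $P_X(g \neq f) > 0$ for every fixed $(\theta,h)$. The paper does not prove this bound either --- it \emph{assumes} it, by positing the series form of the alternative with $\beta_2'\beta_2 > 0$ (effectively $\liminf_n \beta_2'\beta_2 > 0$ is what the dominance step uses, given the normalization $E[P^{k_n}(X_i)P^{k_n}(X_i)'] = I_{k_n}$). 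So your proof is incomplete as written; the repair is to state the quantitative separation as a hypothesis rather than attempt to derive it from $H_1$. Second, your concentration step $\| \bar{g}_n - \mu_n \| = o_p(1)$ requires $E\| \tilde{Z}'\varepsilon/n \|^2 \asymp r_n/n \to 0$, but the theorem only assumes $r_n^{1/2}/n \to 0$ (with $r_n \le n$ forced by invertibility), which permits $r_n \asymp n$. In that regime the neglected noise contributes $O_p(r_n) \asymp n$ to $\xi_{HC}$, and your bound $t_{HC,r_n} \ge (c_0 n - r_n)/\sqrt{2r_n}$ can have a negative numerator; the noise is harmless only because it is positive and is recentered by the $-r_n$ in the definition of $t_{HC,r_n}$, which is precisely why the paper keeps the $\varepsilon$-quadratic form paired with its centering inside the decomposition instead of first bounding $\xi_{HC}$ below by the signal alone. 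With those two repairs your argument closes and becomes essentially equivalent to the paper's.
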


The class of functions $g(x)$, for which the test is consistent, consists of functions that can be approximated (in the sup norm sense) using series as the number of series terms grows. While it is difficult to give a necessary and sufficient primitive condition that would describe this class of functions, the test will likely be consistent against continuous and smooth alternatives, while it may not be consistent against alternatives that exhibit jumps. 

\begin{remark}
An advantage of using series methods in specification testing is that the researcher can alter the class of alternatives agains which the test is consistent. For instance, if there are two regressors $X_{1i}$ and $X_{2i}$ and the researcher is interested in consistency against a general nonparametric alternative $g(X_{1i},X_{2i})$, then she should include interactions between the univariate series terms $Q^{a_n}(X_{1i})$ and $Q^{a_n}(X_{2i})$. If the researcher wants to aim power at additive alternatives $g_1(X_{1i}) + g_2(X_{2i})$, she could use $Q^{a_n}(X_{1i})$ and $Q^{a_n}(X_{2i})$ separately, without any interactions between them.
\end{remark}

The divergence rate of the test statistic under the alternative is $n/\sqrt{r_n}$. However, in most cases, the restricted semiparametric model is of lower dimension than the unrestricted nonparametric model, so that $m_n/k_n \to 0$ and $r_n/k_n \to 1$. Thus, the divergence rate in the semiparametric case discussed here is the same as in the parametric case in \citet{hong_white_1995} and \citet{donald_et_al_2003}, so the fact that the null hypothesis is semiparametric does not affect the global power of the test.

\begin{remark}

It is possible to show that under sequences of local alternatives that approach the null at the rate $r_n^{1/4}/n^{1/2}$, the test statistic $t_{HC,r_n}$ is asymptotically $N(2^{-1/2},1)$. I omit the details for brevity.

\end{remark}

\subsection{A Wild Bootstrap Procedure}\label{wild_bootstrap}

In this subsection I propose a wild bootstrap procedure that can be used with my test and establish its asymptotic validity. While my test performs well in simulations even when asymptotic critical values are used, the bootstrap may yield an even further improvement in the finite sample performance of the test. I will compare the performance of the asymptotic and bootstrap versions of the test in simulations.

Because I am interested in approximating the asymptotic distribution of the test under the null hypothesis, the bootstrap data generating process should satisfy the null. Moreover, because my test is robust to heteroskedasticity, the bootstrap data generating process should also be able to accommodate heteroskedastic errors. The wild bootstrap can satisfy both these requirements.

My bootstrap procedure is fairly similar to that in \citet{li_wang_1998}. I use $\varepsilon_i^*$ to denote the bootstrap errors based on the restricted residuals $\tilde{\varepsilon}_i$. The bootstrap errors should satisfy the following conditions:
\[
\text{(i) } E^*[\varepsilon_i^*] = 0, \quad \text{(ii) } E^*[\varepsilon_i^{*2}] = \tilde{\varepsilon}_i^2,
\]
where $E^*[\cdot] = E[\cdot | \mathcal{U}_n]$ is the expectation conditional on the data $\mathcal{U}_n = \{ (Y_i, X_i')' \}_{i=1}^{n}$. A common way to satisfy these requirements is to let $\varepsilon_i^* = V_i^* \tilde{\varepsilon}_i$, where $V_i^*$ is a two-point distribution. A popular choice is Mammen's two point distribution, originally introduced in \citet{mammen_1993}: 
\[ 
V_i^* =
  \begin{cases}
    (1-\sqrt{5})/2       & \quad \text{with probability } (\sqrt{5}+1)/(2 \sqrt{5}),\\
    (1+\sqrt{5})/2      & \quad \text{with probability } (\sqrt{5}-1)/(2 \sqrt{5}).
  \end{cases}
\]

Another possible choice is the Rademacher distribution, as suggested in \citet{davidson_flachaire_2008}:
\[ 
V_i^* =
  \begin{cases}
    -1       & \quad \text{with probability } \frac{1}{2}, \\
    1      & \quad \text{with probability } \frac{1}{2}.
  \end{cases}
\]

The wild bootstrap procedure then works as follows:

\begin{enumerate}

\item

Obtain the estimates $\tilde{\beta}_1$ and residuals $\tilde{\varepsilon}_i$ from the restricted model $Y_i = W_i' \beta_1 + e_i$.

\item

Generate the wild bootstrap error $\varepsilon_i^* = V_i^* \tilde{\varepsilon}_i$.

\item

Obtain $Y_i^* = W_i' \tilde{\beta}_1 + \varepsilon_i^*$. Then estimate the restricted model and obtain the restricted bootstrap residuals $\tilde{\varepsilon}_i^*$ using the bootstrap sample $\{ (Y_i^*, X_i')' \}_{i=1}^{n}$.

\item

Use $\tilde{\varepsilon}_i^*$ to compute the bootstrap test statistic $t_{HC,r_n}^*$.

\item

Repeat steps 2--4 $B$ times (e.g. $B=399$) and obtain the empirical distribution of the $B$ test statistics $t_{HC,r_n}^*$. Use this empirical distribution to compute the bootstrap critical values of the bootstrap $p$-values.

\end{enumerate}

The following result is true:

\begin{theorem}\label{asy_distr_t_r_n_hc_boot}
Let $\mathcal{U}_n = \{ (Y_i, X_i')' \}_{i=1}^{n}$. Under assumptions of Theorem~\ref{asy_distr_t_r_n_hc},
\[
F_{HC,n}^*(t) \rightarrow \Phi(t) \text{ in probability},
\]
for all $t$, as $n \rightarrow \infty$, where $F_{HC,n}^*(t)$ is the bootstrap distribution of $t_{HC,r_n}^*|\mathcal{U}_n$ and $\Phi(\cdot)$ is the standard normal CDF.
\end{theorem}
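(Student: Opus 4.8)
The plan is to replicate the proof of Theorem~\ref{asy_distr_t_r_n_hc} inside the bootstrap probability space, conditioning throughout on the data $\mathcal{U}_n$ and treating $W$, $Z$, $\tilde{Z} = M_W Z$, and $\tilde{\beta}_1$ as fixed. The key simplification is that the bootstrap data generating process satisfies the null \emph{exactly}: since $Y_i^* = W_i'\tilde{\beta}_1 + \varepsilon_i^*$, the bootstrap conditional mean lies in the column space of $W$, so $M_W W\tilde{\beta}_1 = 0$ yields the clean identity $\tilde{\varepsilon}^* = M_W Y^* = M_W\varepsilon^*$ with no approximation-error term. Consequently the bootstrap statistic is an exact quadratic form
\begin{align*}
\xi_{HC}^* = \tilde{\varepsilon}^{*\prime}\tilde{Z}(\tilde{Z}'\tilde{\Sigma}^*\tilde{Z})^{-1}\tilde{Z}'\tilde{\varepsilon}^* = \varepsilon^{*\prime}\tilde{Z}(\tilde{Z}'\tilde{\Sigma}^*\tilde{Z})^{-1}\tilde{Z}'\varepsilon^*
\end{align*}
in the conditionally independent, mean-zero errors $\varepsilon_i^* = V_i^*\tilde{\varepsilon}_i$, where $\tilde{\Sigma}^* = diag(\tilde{\varepsilon}_1^{*2},\dots,\tilde{\varepsilon}_n^{*2})$. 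Because Assumption~\ref{series_approx} plays no role in the bootstrap world, rate condition~(\ref{rate_cond_r_3_hc}) is automatically satisfied there, and only the analogues of~(\ref{rate_cond_r_1_hc}) and~(\ref{rate_cond_r_2_hc}) are needed.

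First I would replace the random weighting matrix $\tilde{\Sigma}^*$ by its conditional-mean counterpart. Since $E^*[\varepsilon_i^{*2}] = \tilde{\varepsilon}_i^2$, the natural normalizing matrix is $diag(\tilde{\varepsilon}_1^2,\dots,\tilde{\varepsilon}_n^2) = \tilde{\Sigma}$, i.e. the heteroskedasticity-robust matrix already computed from the original data. I would bound the error from using $\tilde{\Sigma}^*$ in place of $\tilde{\Sigma}$ exactly as in Theorem~\ref{asy_distr_t_r_n_hc}, controlling both the gap between $\tilde{\varepsilon}_i^{*2}$ and $\varepsilon_i^{*2}$ (which involves the projection $P_W\varepsilon^*$, of order $m_n/n$) and the gap between $\varepsilon_i^{*2}$ and its conditional mean $\tilde{\varepsilon}_i^2$. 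This step requires the empirical residual moments $\frac1n\sum_i\tilde{\varepsilon}_i^2$, $\frac1n\sum_i\tilde{\varepsilon}_i^4$, and $\max_i\tilde{\varepsilon}_i^2$ to be suitably bounded with probability approaching one, which follows from Assumption~\ref{errors_unified} together with the consistency of the restricted residuals under Assumption~\ref{series_approx}; the argument shows the replacement error is $o(1)$ conditional on $\mathcal{U}_n$, in probability.

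With the weighting matrix fixed at $\tilde{\Sigma}$, the leading term is the quadratic form $Q^* = \varepsilon^{*\prime}A\varepsilon^*$ with $A = \tilde{Z}(\tilde{Z}'\tilde{\Sigma}\tilde{Z})^{-1}\tilde{Z}'$. Its conditional mean is $tr(A\tilde{\Sigma}) = tr(I_{r_n}) = r_n$, so the centering in $t_{HC,r_n}^*$ is exactly right, and since $A\tilde{\Sigma}$ is idempotent the conditional variance of the dominant part is $\approx 2 r_n$, matching the normalization $\sqrt{2 r_n}$. I would split $Q^* - r_n$ into the diagonal part $\sum_i A_{ii}(\varepsilon_i^{*2}-\tilde{\varepsilon}_i^2)$ and the degenerate off-diagonal part $\sum_{i\neq j}A_{ij}\varepsilon_i^*\varepsilon_j^*$, the latter dominating at standard-deviation order $\sqrt{r_n}$. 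To this off-diagonal form I would apply the same martingale central limit theorem for quadratic forms used in the proof of Theorem~\ref{asy_distr_t_r_n_hc}, now conditional on the data, verifying its Lindeberg/Lyapunov condition in probability: the relevant fourth-moment-to-variance ratio reduces to quantities governed by the eigenvalue bound in Assumption~\ref{series_norms_eigenvalues}(b), the sup-norm bound $\zeta(r_n)$ in Assumption~\ref{series_norms_eigenvalues}(a), and the rate conditions, all evaluated at the empirical residual moments. This yields $(Q^*-r_n)/\sqrt{2r_n}\overset{d}{\to}N(0,1)$ conditional on $\mathcal{U}_n$, in probability, and combining with the replacement step gives $F_{HC,n}^*(t)\to\Phi(t)$ in probability for each $t$ via a subsequence and almost-sure-representation argument.

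I expect the main obstacle to be the conditional quadratic-form central limit theorem, i.e. making it rigorous \emph{in probability} rather than almost surely. The CLT's regularity conditions now involve the random restricted residuals $\tilde{\varepsilon}_i$ in place of the true variances $\sigma_i^2$, so I need uniform control of the empirical residual moments---in particular $\max_i\tilde{\varepsilon}_i^2$ and $\frac1n\sum_i|\tilde{\varepsilon}_i|^{4+\delta}$---and must show the Lindeberg condition holds with probability tending to one. This is precisely where the $(4+\delta)$-moment bound in Assumption~\ref{errors_unified} and the uniform approximation bound in Assumption~\ref{series_approx} (guaranteeing that $\tilde{\varepsilon}_i$ is uniformly close to $\varepsilon_i$) do the work. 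A secondary subtlety is that every $o(1)$ claim in the replacement step must be established as convergence in $P^*$-probability that itself holds in probability over the sampling of the data, which I would handle by bounding conditional expectations of the relevant remainders and invoking Markov's inequality twice.
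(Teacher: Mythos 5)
Your proposal follows exactly the route of the paper's own (sketch) proof: the exact identity $\tilde{\varepsilon}^* = M_W \varepsilon^*$ eliminating the approximation-error term, an analog of Lemma~\ref{diff_r_n_small_hc} to replace $\tilde{\Sigma}^*$ by the conditional-variance matrix $diag(\tilde{\varepsilon}_1^2,\dots,\tilde{\varepsilon}_n^2)$, and then the Scott-type martingale CLT of Lemma~\ref{asy_norm} applied conditionally on $\mathcal{U}_n$ to the resulting quadratic form. Your added detail on verifying the Lyapunov condition via empirical residual moments is a sound elaboration of steps the paper leaves implicit, so the proposal is correct and essentially identical in approach.
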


\subsection{Discussion of the Results}\label{asymptotics_summary}

My results show that the LM type test statistic, normalized by the number of restrictions $r_n$ that the null hypothesis imposes on the nonparametric model, is asymptotically standard normal. In this section, I discuss my results and compare them with those in the literature.

First, it is natural to compare my test with the test in \citet{gupta_2018}. While I build on his proof techniques, my test differs from his in two ways. First, \citet{gupta_2018} uses an FGLS type test based on the transformed data $\tilde{\Sigma}^{-1/2} Y = \tilde{\Sigma}^{-1/2} W \beta_1 + \tilde{\Sigma}^{-1/2} e$, where $\tilde{\Sigma} = diag(\tilde{\varepsilon}_1^2,...,\tilde{\varepsilon}_n^2)$. Second, while his test relies on the projection property of series estimators and the fact that $P' \tilde{\Sigma}^{-1} \tilde{\varepsilon} = (0_{m_n \times 1}', (Z' \tilde{\Sigma}^{-1} \tilde{\varepsilon})')'$, he uses the variance estimate based on the full vector of moment conditions, $(P' \tilde{\Sigma}^{-1} P)$. As a result, his test statistic is given by
\[
t_{HC,G} = \frac{\tilde{\varepsilon}' \tilde{\Sigma}^{-1} Z \left( Z' \tilde{\Sigma}^{-1} Z - Z' \tilde{\Sigma}^{-1} W (W' \tilde{\Sigma}^{-1} W)^{-1} W' \tilde{\Sigma}^{-1} Z \right)^{-1} Z' \tilde{\Sigma}^{-1} \tilde{\varepsilon} - r_n}{\sqrt{2 r_n}}
\]

In contrast, I explicitly take into account the fact that a subset of moment conditions is exactly equal to zero when estimating variance. Moreover, instead of transforming the data to eliminate heteroskedasticity, I use the raw data to estimate the model and then use a heteroskedasticity robust variance estimate, as originally proposed in \citet{white_1980}. As a result, my test resembles the parametric heteroskedasticity robust LM test proposed in \citet{wooldridge_1987} and takes the following form:
\[
t_{HC} = \frac{\tilde{\varepsilon}' \tilde{Z} (\tilde{Z}' \tilde{\Sigma} \tilde{Z})^{-1} \tilde{Z}' \tilde{\varepsilon} - r_n}{\sqrt{2 r_n}}
\]

It turns out that both these modifications are crucial, and I discuss them in more detail in Section~\ref{simulations}.

Compared with most other existing series-based specification tests, I derive a different normalization of the test statistic. I normalize the test statistic by the number of restrictions $r_n$, while the existing literature mostly uses the total number of parameters in the nonparametric model $k_n$ (see Theorem 1 in \citet{de_jong_bierens_1994}, Equation (2.1) in \citet{hong_white_1995}, and proof of Lemma 6.2 in \citet{donald_et_al_2003}). This difference can be viewed as a degrees of freedom correction. I will study its practical importance in Section~\ref{simulations}, while here I discuss its theoretical significance.

My approach differs from the approach used in the previous literature in how it copes with a key step in the proof, going from the semiparametric regression residuals $\tilde{\varepsilon}$ to the true errors $\varepsilon$. My approach relies on the projection property of series estimators to eliminate the estimation variance and hence only needs to deal with the approximation bias. Specifically, it uses the equality $\tilde{\varepsilon} = M_W \varepsilon + M_W R$ (see equation~(\ref{eqn_a1_hc}) in the Appendix), applies a central limit theorem from \citet{scott_1973} to the quadratic form in $M_W \varepsilon$, and bounds the remainder terms by requiring the approximation error $R$ to be small (see equation~(\ref{eqn_a2_hc}) in the Appendix).

The conventional approach does not impose any special structure on the model residuals and uses the equality $\tilde{\varepsilon} = \varepsilon + (g - \tilde{g})$. In parametric models, $g - \tilde{g} = X (\beta - \hat{\beta})$, and $\hat{\beta}$ is $\sqrt{n}$-consistent. This makes it possible to apply a central limit theorem for $U$-statistics to the quadratic form in $\varepsilon$ and bound the remainder terms that depend on $X (\beta - \hat{\beta})$. However, in semiparametric models this approach needs to deal with both the bias and variance of semiparametric estimators. Specifically, $g - \tilde{g} = R + W (\beta_1 - \tilde{\beta}_1)$, where $R$ can be viewed as the bias term and $W (\beta_1 - \tilde{\beta}_1)$ as the variance term. Thus, in order for $(g - \tilde{g})$ to be small, both bias and variance need to vanish sufficiently fast, and the resulting rate conditions turn out to be very restrictive. To see this, it is instructive to look at the rate conditions with and without the degrees of freedom correction.

With the degrees of freedom correction, rate condition~\ref{rate_cond_r_3_hc} is given by $n m_n^{-2\alpha}/ r_n^{1/2} \to 0$, while without the degrees of freedom correction it would become $(m_n + n m_n^{-2\alpha})/ r_n^{1/2} \to 0$. Thus, without the degrees of freedom correction, I would need to require $m_n^2/r_n \to 0$, while my rate conditions even allow for $m_n/r_n \to C$. Intuitively, the estimation variance in semiparametric model increases with the number of series terms $m_n$. With the degrees of freedom correction, the estimation variance is explicitly taken into account, so $m_n$ can grow fairly fast. In contrast, without the degrees of freedom correction, in order to ensure that the estimation variance is negligible, $m_n$ has to be small relative to $r_n$.

In turn, the smaller $m_n$ is, the harder it is to satisfy the condition $n m_n^{-2\alpha}/ r_n^{1/2} \to 0$, which ensures that the bias is negligible. Without the degrees of freedom correction, $m_n$ has to grow very slowly to control the estimation variance. But small $m_n$ means that it is difficult to control the bias, and unknown functions need to be very well behaved to make the approximation error sufficiently small. More specifically, when the rate condition changes from $m_n/r_n \to C$ to $m_n^2/r_n \to 0$, $\alpha$ needs to increase by at least factor of two in order for the bias to remain negligible. Thus, the degrees of freedom correction allows me to use more series terms to estimate the semiparametric model and makes it possible to deal with a wider class of unknown functions that enter semiparametric models. The next section demonstrates that it also substantially improves the finite sample performance of the test.

\section{Simulations}\label{simulations}

In this section I study the finite sample behavior of different versions of the proposed test in a partially linear model, due to its popularity in applied work. I start by describing the setup of my simulations. I consider two sample sizes, $n=250$ and $n=1,000$. I generate the regressors as $X_{1i} = -2 + 4 (0.8 V_{1i} + 0.2 V_{2i})$ and $X_{2i} = -2 + 4 (0.2 V_{1i} + 0.8 V_{2i})$, where $V_{1i}$ and $V_{2i}$ are $U[0,1]$. By doing so, I obtain two regressors that are correlated (the correlation is close to 0.5) and bounded. I generate the errors as $\varepsilon_i \sim \text{i.n.i.d. } N(0,1+1.75 \exp(0.75(X_{1i}+X_{2i})))$. I simulate data from two data generating processes:

\begin{enumerate}

\item

Semiparametric partially linear, which corresponds to $H_0^{SP}$:
\begin{align*}
Y_i &= 3 + 2 X_{1i} + 2(\exp(X_{2i})-2 \ln(X_{2i}+3)) + \varepsilon_i
\end{align*}

\item

Nonparametric, which corresponds to $H_1$:
\begin{align*}
Y_i &= 3 + 2 X_{1i}+ 2(\exp(X_{2i})-2 \ln(X_{2i}+3)) + 1.21 \cos(X_{1i}-2) \sin(0.75 X_{2i}) + \varepsilon_i
\end{align*}

\end{enumerate}

Before I move on and discuss the behavior of the proposed test in finite samples, I need to make two choices in order to implement the test. First, I need to choose the family of basis functions; second, choose the number of series terms in the restricted and unrestricted models.

I use both power series and cubic splines (i.e. splines with $s = 3$ in Equation~(\ref{splines})) as basis functions due to their popularity. In turn, the choice of tuning parameters presents a big practical challenge in implementing the proposed test, as well as many other specification tests.\footnote{For a discussion of regularization parameters choice in the context of kernel-based tests, see a review by \citet{sperlich_2014}.} Instead of studying the behavior of my test for a given (arbitrary) number of series terms, I vary the number of terms in univariate series expansions from $a_n = 4$ to $a_n = 8$ when $n=250$ and $a_n = 9$ when $n=1,000$ to investigate how the behavior of the test changes as a result (see Equations~(\ref{series}) and~(\ref{splines}) for the definition of $a_n$). In order to alleviate numerical problems associated with the quickly growing number of series terms, I restrict the number of interaction terms (see Section~\ref{simulations_implementation} in the supplement for details). As a result, the total number of parameter $k_n$ ranges from 16 when $a_n = 4$ to 40 when $a_n = 8$ and 53 when $a_n = 9$. 

In my simulations, I study the finite sample behavior of several tests. First, I consider the asymptotic version of the proposed test with the degrees of freedom correction:
\[
t_{HC,r_n} = \frac{\xi_{HC} - r_n}{\sqrt{2 r_n}}  \overset{a}{\sim} N(0,1).
\]

Second, I include the asymptotic version of my test that normalizes the test statistic by the total number of parameters $k_n$, as opposed to the number of restrictions $r_n$, to illustrate the practical importance of the degrees of freedom correction. As discussed above, the normalization by $k_n$ is prevalent in the existing literature on consistent series-based specification tests in parametric models, see, e.g. \citet{hong_white_1995}. Next, I consider the wild bootstrap version of my test based on the Rademacher distribution.\footnote{The results for Mammen's distribution are very similar and are omitted for brevity.} Finally, I use the test statistic from \citet{gupta_2018} given by
\[
t_{HC,G} = \frac{\tilde{\varepsilon}' \tilde{\Sigma}^{-1} Z \left( Z' \tilde{\Sigma}^{-1} Z - Z' \tilde{\Sigma}^{-1} W (W' \tilde{\Sigma}^{-1} W)^{-1} W' \tilde{\Sigma}^{-1} Z \right)^{-1} Z' \tilde{\Sigma}^{-1} \tilde{\varepsilon} - r_n}{\sqrt{2 r_n}} \overset{a}{\sim} N(0,1).
\]

Figure~\ref{fig_simulated_size_power} plots the simulated size and power of the test as a function of the number of series terms in univariate series expansions $a_n$. For each sample size, I show the simulated size in the upper two plots and simulated power in the lower two plots. The red solid line corresponds to the asymptotic version of the proposed test with the degrees of freedom correction. The cyan dotted line corresponds to the asymptotic version of the proposed test without the degrees of freedom correction. The magenta dashed line corresponds to the bootstrap version of the proposed test. The blue dash-dotted line corresponds to the test from \citet{gupta_2018}.

There are several takeaways from the figure. First, the degrees of freedom correction is crucial for good finite sample performance of the test. The simulated size of the asymptotic test with the degrees of freedom correction is close to the nominal level, and the test has good power. In contrast, the asymptotic test without the degrees of freedom correction is severely undersized in all setups and therefore has low power. Second, the test from \citet{gupta_2018} does not seem to work well in finite samples. It is noticeably undersized and has virtually no power: its simulated power is roughly equal to the nominal size in all four setups. Finally, the performance of the asymptotic version of my test with the degrees of freedom correction is very similar to the performance of the wild bootstrap test even when $n=250$, indicating that my test has attractive finite sample properties that do not rely on the use of resampling methods.

Section~\ref{simulations_comparison} of the supplement provides a more detailed comparison of various versions of my test and the test from \citet{gupta_2018}. In short, I find that both using the heteroskedasticity robust variance estimate and using the variance estimate based on the subset of moment conditions, rather than all of them, is crucial for good finite sample behavior of the test.

As discussed above, the choice of tuning parameters plays an important role in my test. Though the results above demonstrate that the test is fairly robust to the choice of tuning parameters, it may be useful to have a data-driven way of choosing them. I investigate the finite sample performance of a data-driven test in Section~\ref{simulations_data_driven} of the supplement.

\section{Empirical Example}\label{empirical_example}

In this section, I apply the proposed test to the Canadian household gasoline consumption data from \citet{yatchew_no_2001}.\footnote{Available at \url{https://www.economics.utoronto.ca/yatchew/}.} They estimate gasoline demand ($y$), measured as the logarithm of the total distance driven in a given month, as a function of the logarithm of the gasoline price ($PRICE$), the logarithm of the household income ($INCOME$), the logarithm of the age of the primary driver of the car ($AGE$), and other variables ($z$), which include the logarithm of the number of drivers in the household ($DRIVERS$), the logarithm of the household size ($HHSIZE$), an urban dummy, a dummy for singles under 35 years old, and monthly dummies.

\citet{yatchew_no_2001} employ several demand models, including semiparametric specifications. They use differencing (see \citet{yatchew_1997} and \citet{yatchew_1998} for details) to estimate semiparametric models. The relevance of semiparametric models in gasoline demand estimation was first pointed out by \citet{hausman_newey_1995} and \citet{schmalensee_stoker_1999}. \citet{yatchew_no_2001} follow these papers in using semiparametric specifications and pay special attention to specification testing. However, they only compare semiparametric specifications with parametric ones, while I use series methods to estimate semiparametric specifications and implement the proposed specification test to assess their validity as compared to a general nonparametric model.

I focus on the model that is nonparametric in $AGE$ but parametric in $PRICE$ and $INCOME$ (roughly corresponds to Model (3.4) in \citet{yatchew_no_2001}):
\begin{eqnarray}\label{gasoline_sp_age}
y = \alpha_1 PRICE + \alpha_2 INCOME + g(AGE) + z' \beta + \varepsilon
\end{eqnarray}

In this model, the relationship between gasoline demand, price, and household income has a familiar log-log form that could be derived from a Cobb-Douglas utility function. The age of the primary driver enters the model nonparametrically to allow for possible nonlinearities, while the remaining demographics enter the model linearly. Next, I investigate whether demand model~(\ref{gasoline_sp_age}) is flexible enough as compared to a more general model.

In order to apply my test, I need to choose the series functions $P^{k_n}(x)$ or, equivalently, define a nonparametric alternative that can be approximated by these series functions. Ideally, I would want to use a fully nonparametric alternative
\[
y = h(PRICE, AGE, INCOME, z) + \varepsilon
\]
However, this is impractical in the current setting: the dataset from \citet{yatchew_no_2001} contains 12 monthly dummies, an urban dummy, and a dummy for singles under 35 years old. The fully nonparametric alternative would require me to completely saturate the model with the dummies, i.e. interact all series terms in continuous regressors with a full set of dummies. This would be equivalent to dividing the dataset into $12 \cdot 2 \cdot 2 = 48$ bins and estimating the model within each bin separately. Given that the total number of observations is 6,230, this would leave me with about 125 observations per bin on average and would make semiparametric estimation and testing problematic.

I choose a different approach to deal with the dummies. I maintain the assumption that the nonparametric alternative is separable in the continuous variables and the dummies. Separating $z$ into $z_1$, which includes the logarithm of the number of drivers in the household and the logarithm of the household size, and $z_2$, which includes the dummies, I consider the nonparametric alternative given by:
\begin{eqnarray}\label{gasoline_np}
y = h(PRICE, AGE, INCOME, z_1) + z_2' \lambda + \varepsilon
\end{eqnarray}

To implement my test, I first need to decide how many series terms in $AGE$ to use to estimate the semiparametric model~(\ref{gasoline_sp_age}). To guide this choice, I use Mallows's $C_p$ and generalized cross-validation, as discussed in Section 15.2 in \citet{li_racine_2007}. I try both power series and splines, and both procedures suggest that $a_n = 3$ (not including the constant) series terms in $AGE$ is optimal. Because power series and splines are identical for $a_n = 3$ as there are no knots yet, I only use power series with a cubic function of $AGE$ to estimate the model.

To construct the regressors $P^{k_n}$ used under the alternative, I use $a_n = 3$ power series terms in $AGE$, $PRICE$, and $INCOME$, $j_n = 2$ (not including the constant) power series terms in $DRIVERS$ and $HHSIZE$, and the set of dummies discussed above. I then use pairwise interactions (tensor products) of univariate power series, and add all possible three, four, and five element interactions between $AGE$, $PRICE$, $INCOME$, $DRIVERS$, and $HHSIZE$, without using higher powers in these interaction terms to avoid multicollinearity. This gives rise to $m_n = 21$ terms under the null, $k_n = 110$ terms under the alternative, and $r_n = 89$ restrictions.

I estimate specification~(\ref{gasoline_sp_age}) using series methods and the obtain the value of the heteroskedasticity robust test statistic $t_{HC,r_n} = 0.526$. The critical value at the 5\% level equals 1.645, so the null hypothesis that model~(\ref{gasoline_sp_age}) is correctly specified is not rejected.

This is in line with the results in \citet{yatchew_no_2001}: even though they do not test their semiparametric specifications against a general nonparametric alternative, they find no evidence against a specification similar to~(\ref{gasoline_sp_age}) when compared to the following semiparametric specification:
\[
y = g_1(PRICE,AGE) + \alpha_2 INCOME + z' \beta + \varepsilon
\]

Next, I investigate what would happen if instead of the semiparametric model~(\ref{gasoline_sp_age}) the researcher estimated the following parametric model:
\begin{eqnarray}\label{gasoline_p}
y = \alpha_1 PRICE + \alpha_2 INCOME + \gamma_0 + \gamma_1 AGE + z' \beta + \varepsilon
\end{eqnarray}

This model leads to $m_n = 19$ terms under the null, $k_n = 110$ terms under the altenative, and $r_n = 91$ restrictions. When testing it against specification~(\ref{gasoline_np}), I obtain $t_{HC,r_n} = 4.065$, so that specification~(\ref{gasoline_p}) is rejected at the 5\% significance level. Thus, it seems that controlling for $AGE$ flexibly is crucial in this gasoline demand application.

Finally, I use the proposed test to compare specification~(\ref{gasoline_p}) with the following semiparametric alternative:
\begin{eqnarray}\label{gasoline_sp}
y =  \alpha_1 PRICE + \alpha_2 INCOME + h_1(AGE, z_1) + z_2' \lambda + \varepsilon,
\end{eqnarray}
which is more restricted than the nonparametric model~(\ref{gasoline_np}). In this case, $P^{k_n}$ includes only series terms in $AGE$ and $z_1$ and their interactions. This results in $k_n = 40$ terms under the alternative and $r_n = 21$ restrictions. I obtain $t_{HC,r_n} = 10.187$, so that specification~(\ref{gasoline_p}) is again rejected at the 5\% significance level. As we can see, restricting the class of alternatives can improve the power of the test if the true model is close to the conjectured restricted class. However, this will also result in the loss of consistency against alternatives that do not belong to the restricted set.

For practical purposes, if the null model under consideration can be nested in several more general models, it may make sense to test it against these several alternatives simultaneously and apply the Bonferroni correction to control the test size. Using a general nonparametric alternative will result in consistency, while using more restrictive alternatives may result in better power if these alternatives are close to being correct.

\section{Conclusion}\label{conclusion}

In this paper, I develop a new heteroskedasticity robust Lagrange Multiplier type specification test for semiparametric conditional mean models. The proposed test achieves consistency by turning a conditional moment restriction into a growing number of unconditional moment restrictions using series methods. Because the number of series terms grows with the sample size, the usual asymptotic theory for the parametric Lagrange Multiplier test is no longer valid. I develop a new asymptotic theory that explicitly allows the number of terms to grow and show that the normalized test statistic converges in distribution to the standard normal. I demonstrate using simulations that my test outperforms the existing series-based specification tests. I apply the proposed test to the Canadian household gasoline consumption data from \citet{yatchew_no_2001} and find no evidence against one of the semiparametric specifications used in their paper. However, I show that my test does reject a less flexible parametric model.

An important avenue for future research is to develop a data-driven procedure to choose tuning parameters (the number of series terms under the null and alternative). An interesting possibility is to combine the test statistics computed over the range of tuning parameters, as suggested for parametric models in \citet{horowitz_spokoiny_2001} or \citet{gorgens_wurtz_2012}. Alternatively, one could use a modified version of the approach proposed in \citet{guay_guerre_2006} to pick tuning parameters in a data-driven way. I try the data-driven approach in simulations and find that it works well in finite samples. I plan to study whether any of these approaches can be formally extended to semiparametric models in future work.

\renewcommand\thesection{Appendix \Alph{section}}

\renewcommand\thesubsection{\Alph{section}.\arabic{subsection}}

\setcounter{section}{0}

\renewcommand{\thetheorem}{A.\arabic{theorem}}

\renewcommand{\thelemma}{A.\arabic{lemma}}

\renewcommand{\theassumption}{A.\arabic{assumption}}

\renewcommand{\theremark}{A.\arabic{remark}}

\renewcommand{\theequation}{A.\arabic{equation}}

\setcounter{theorem}{0}

\setcounter{lemma}{0}

\setcounter{assumption}{0}

\setcounter{remark}{0}

\clearpage

\section{Tables and Figures}\label{appendix_tables_figures}

\onehalfspacing

\begin{figure}[H]
\begin{center}
\caption{Simulated Size and Power of the Test}\label{fig_simulated_size_power}

$n=250$

\includegraphics[scale=0.33]{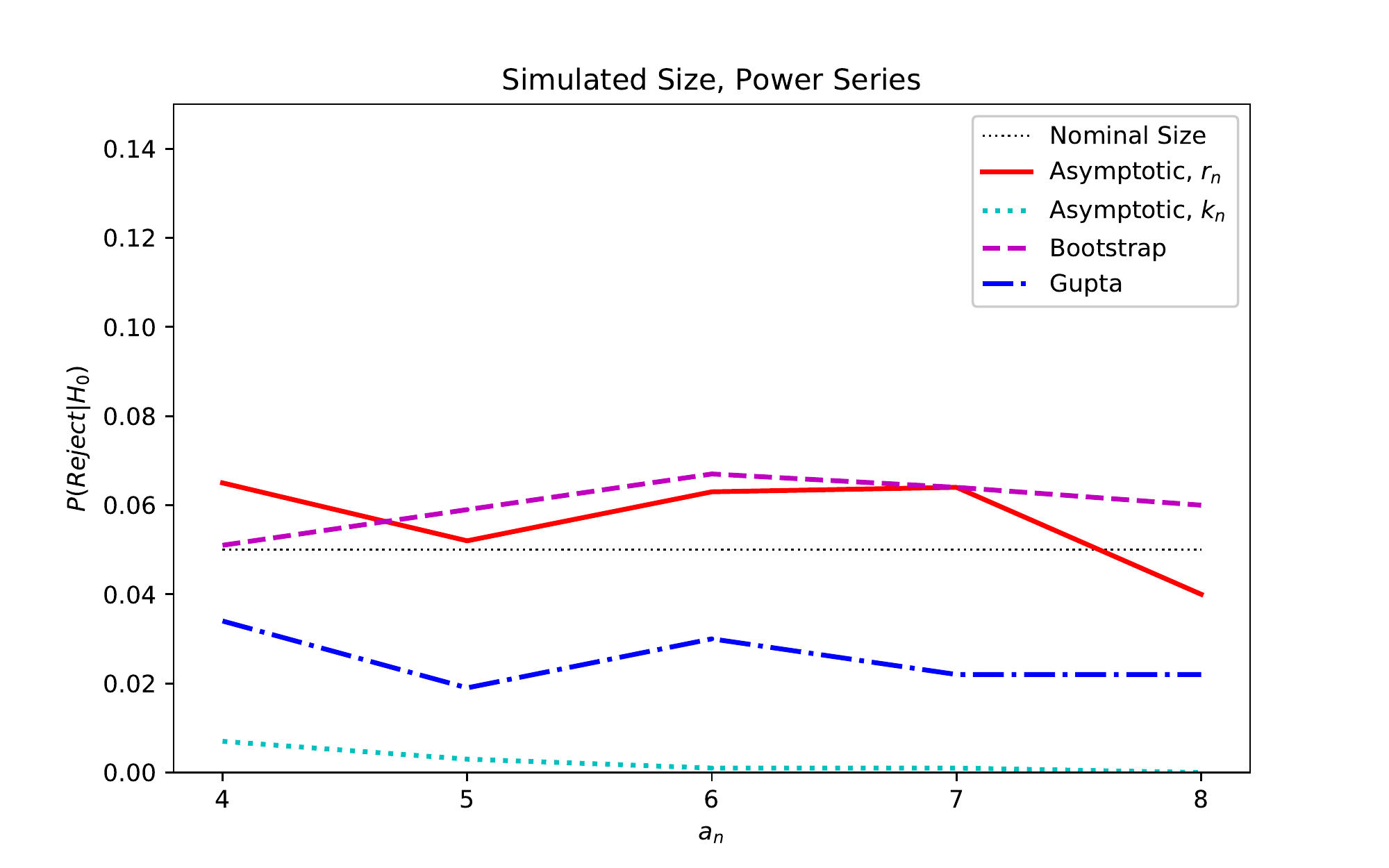} \includegraphics[scale=0.33]{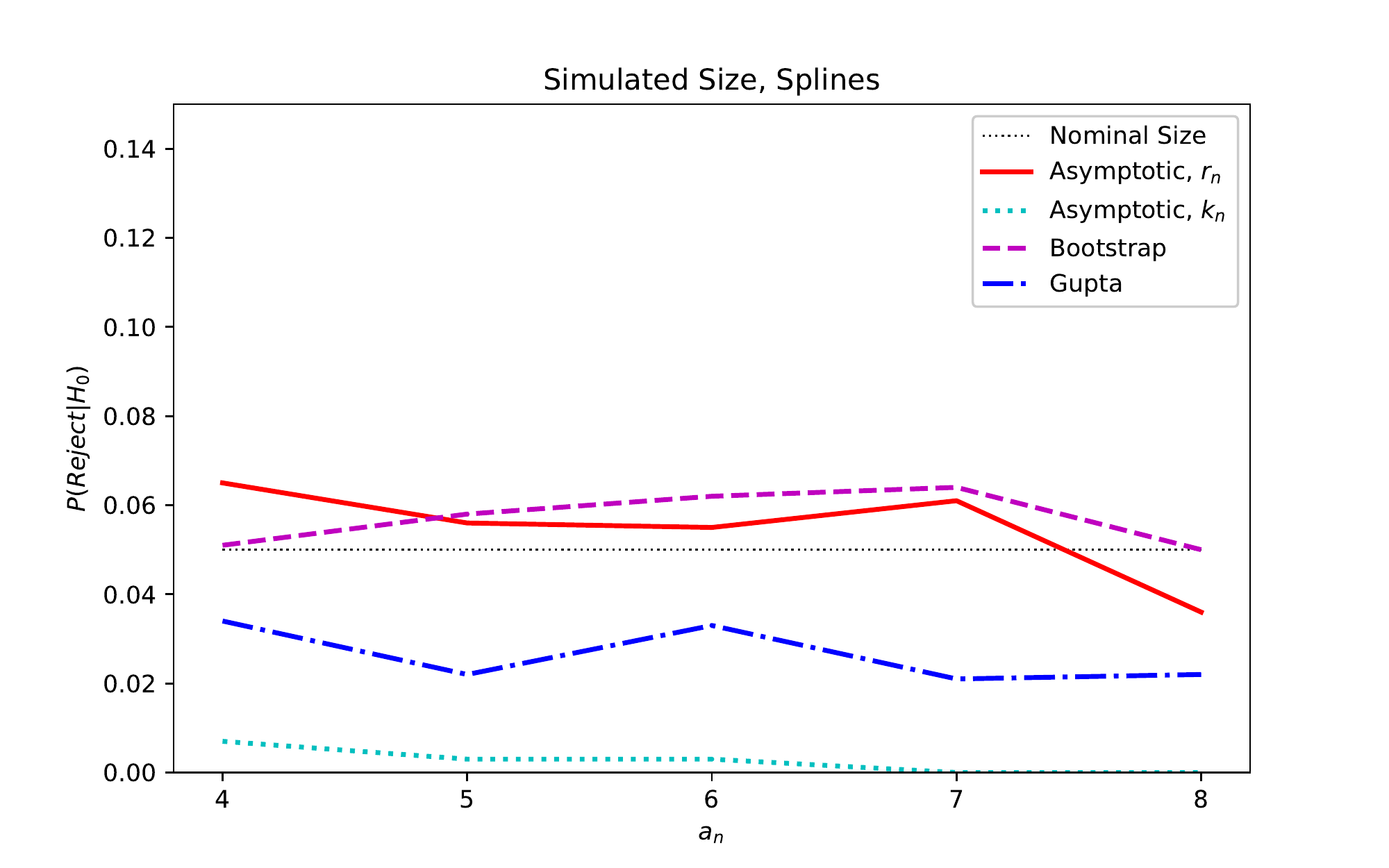}

\includegraphics[scale=0.33]{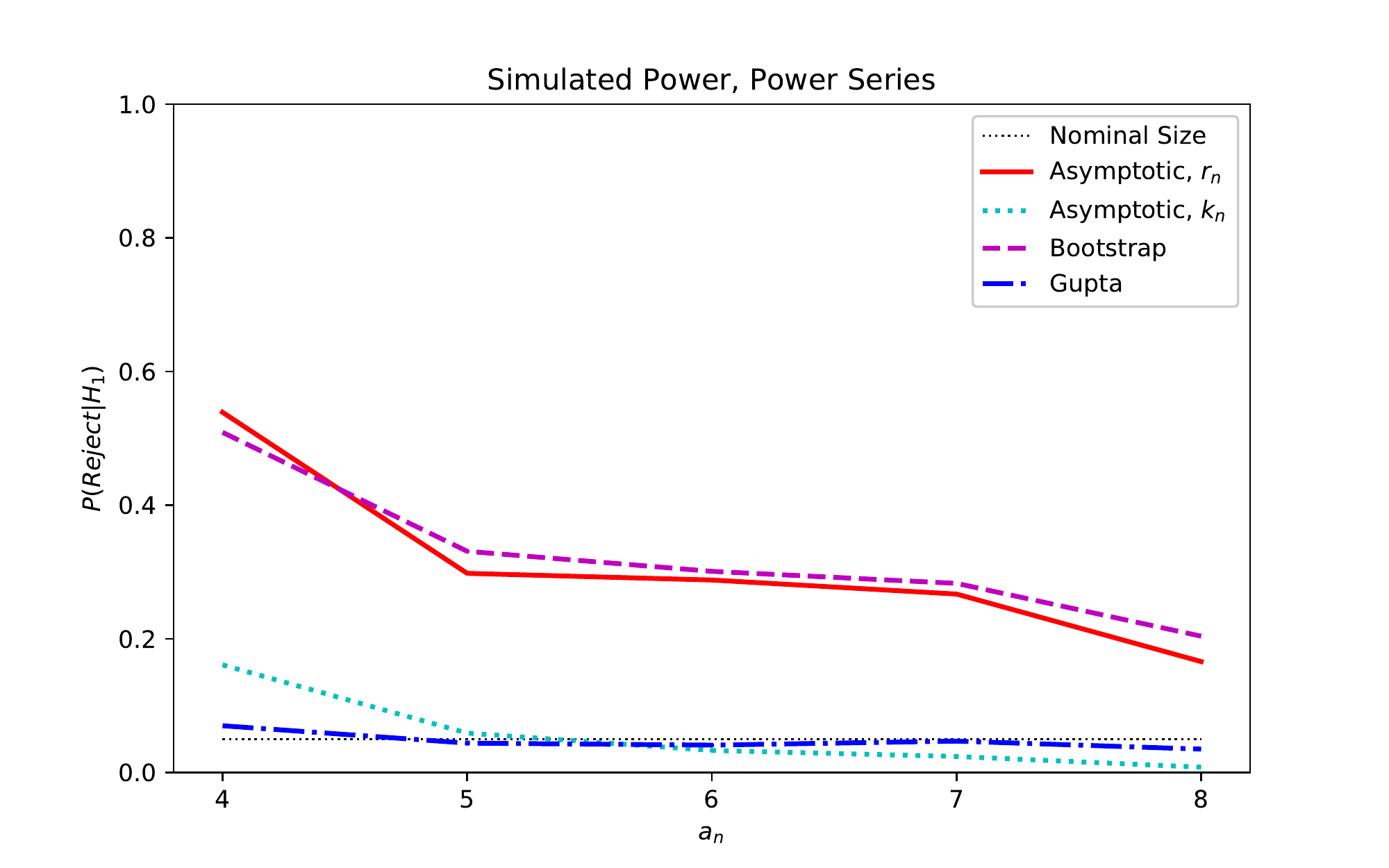} \includegraphics[scale=0.33]{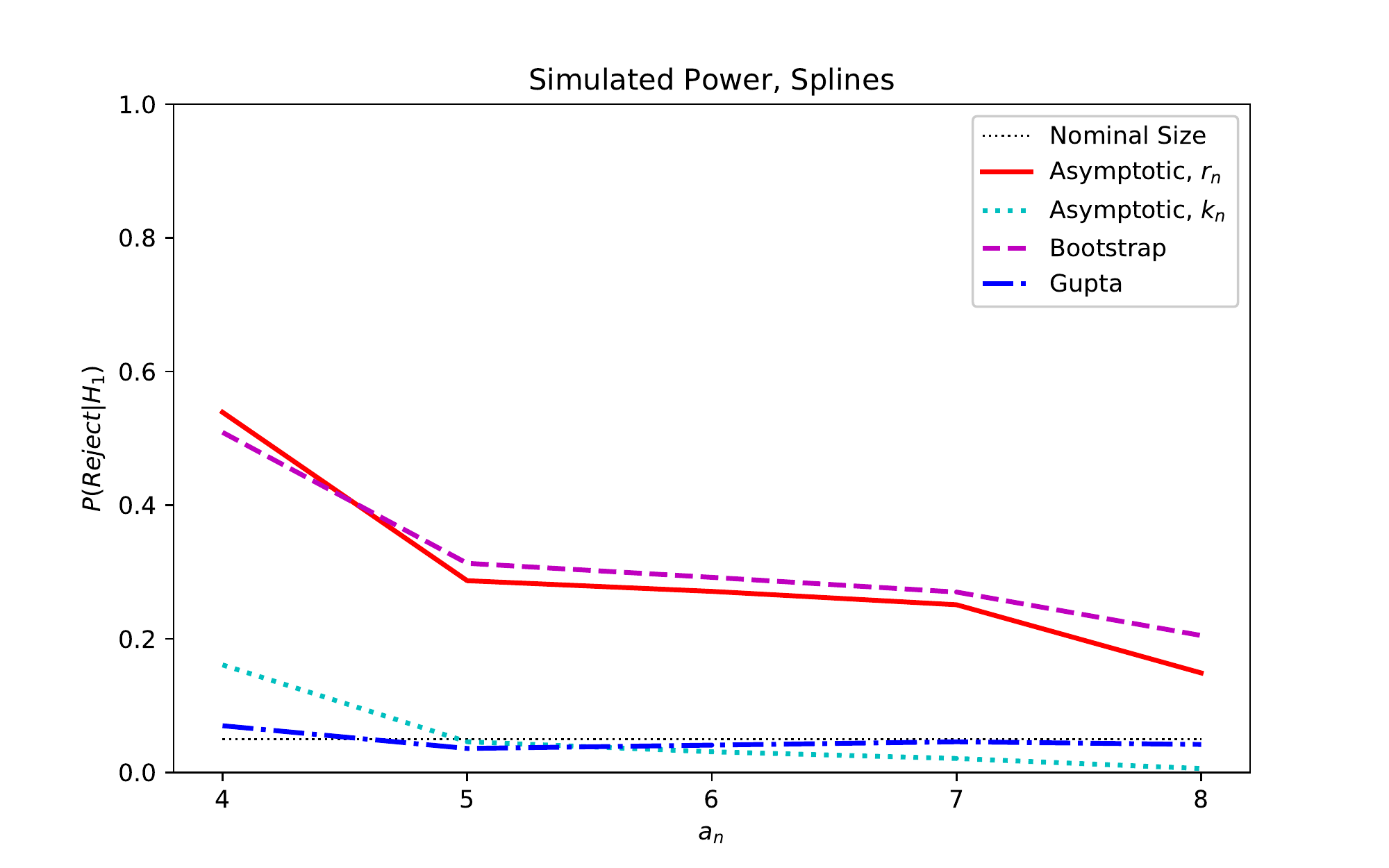}

$n=1,000$

\includegraphics[scale=0.33]{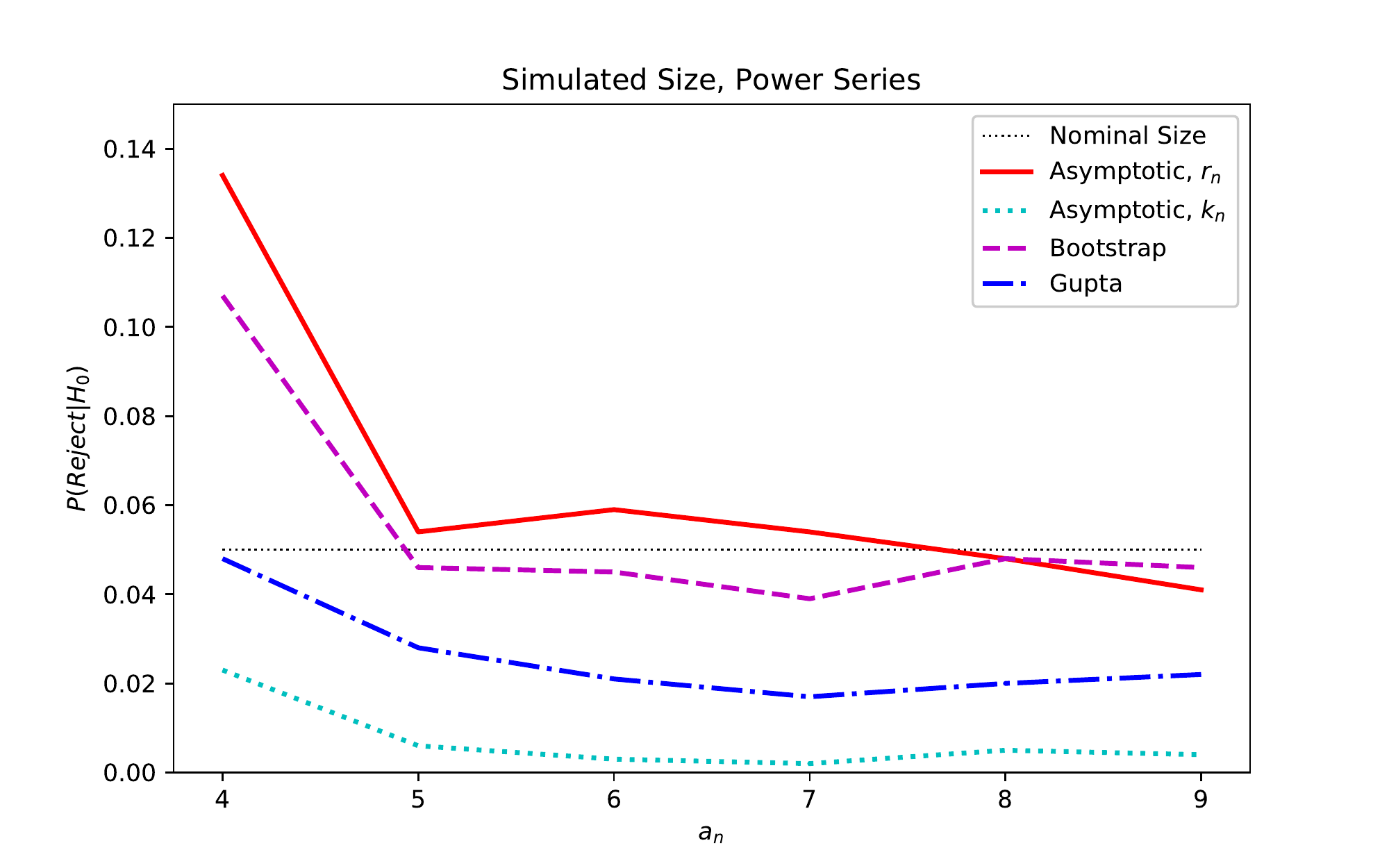} \includegraphics[scale=0.33]{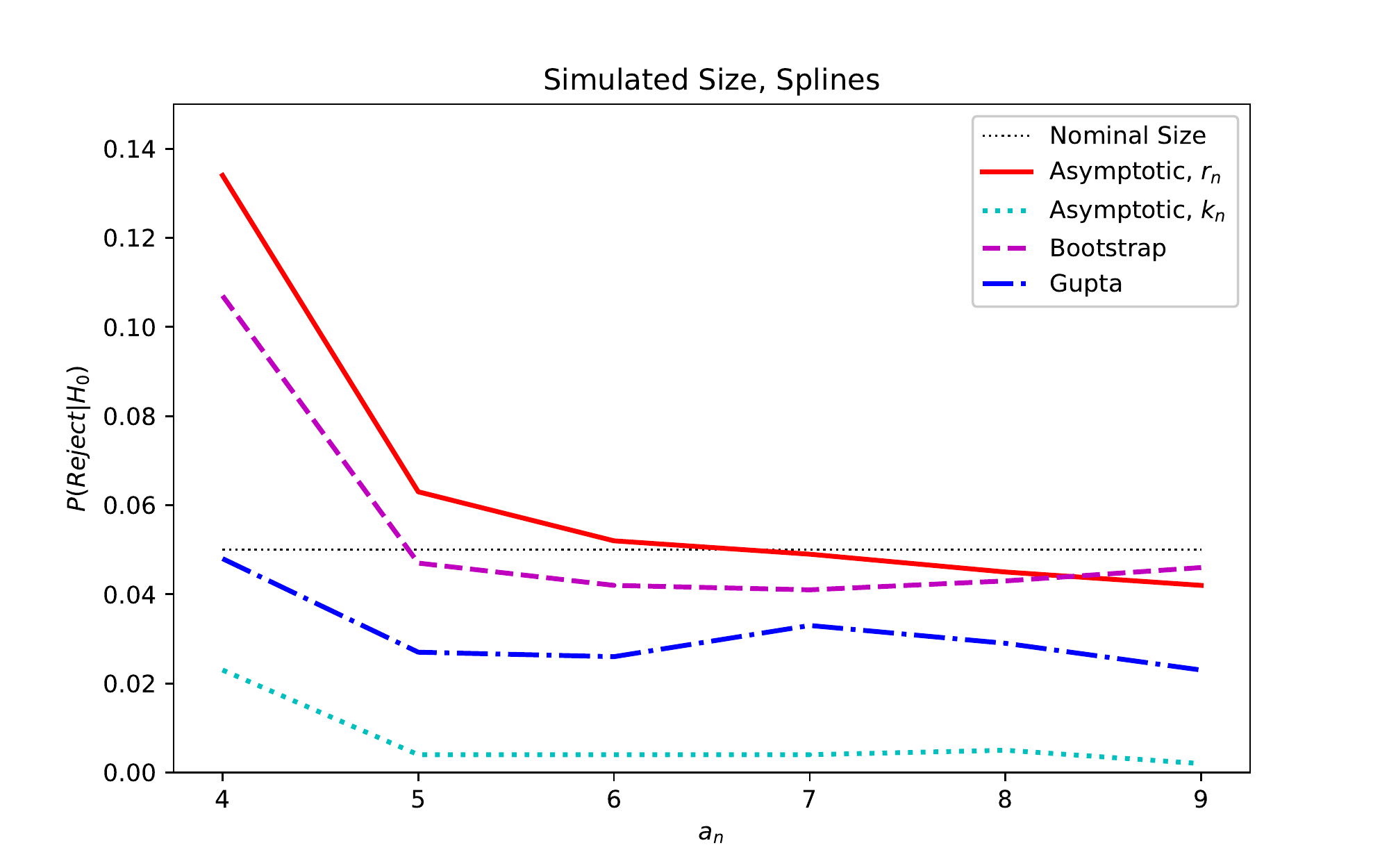}

\includegraphics[scale=0.33]{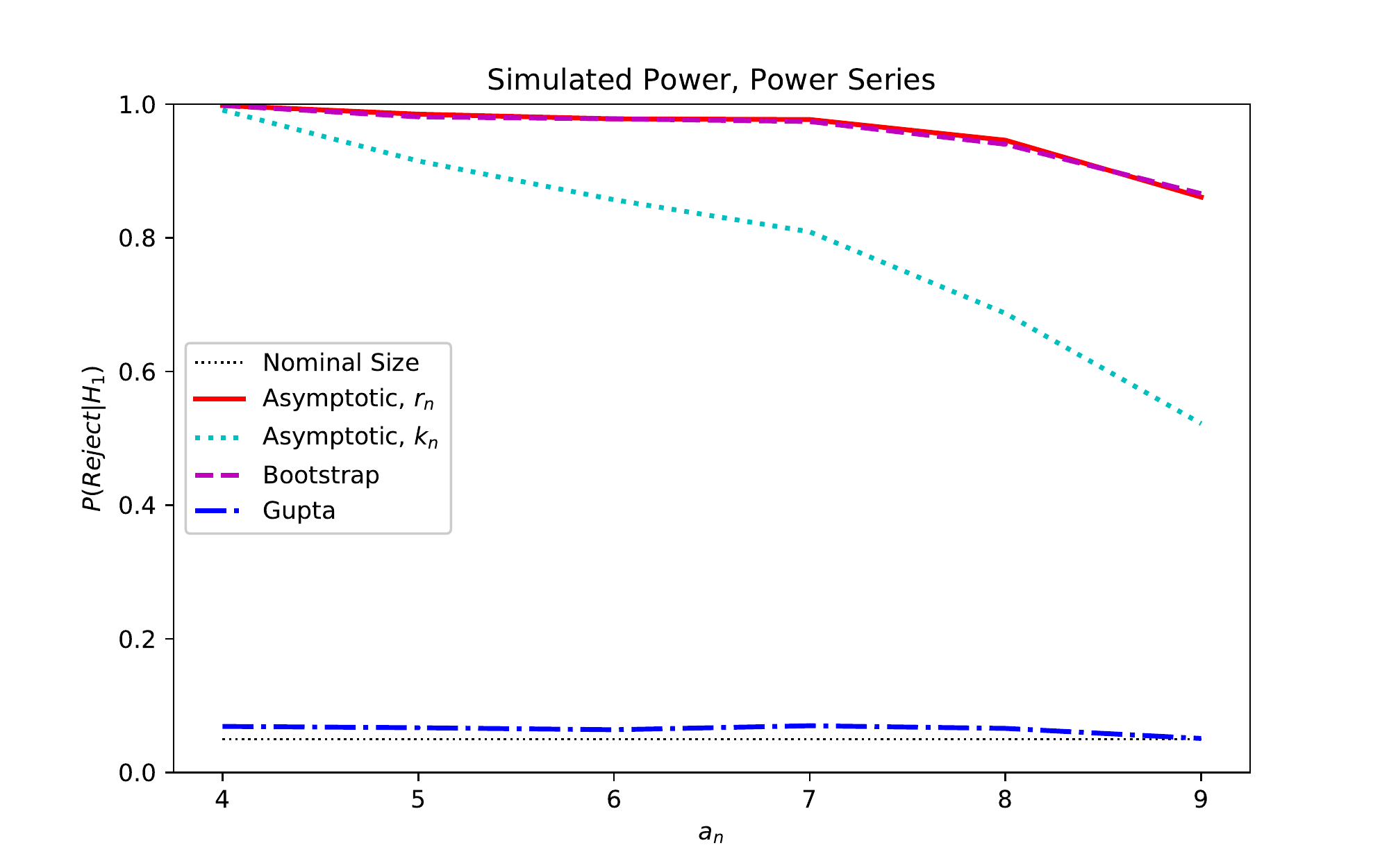} \includegraphics[scale=0.33]{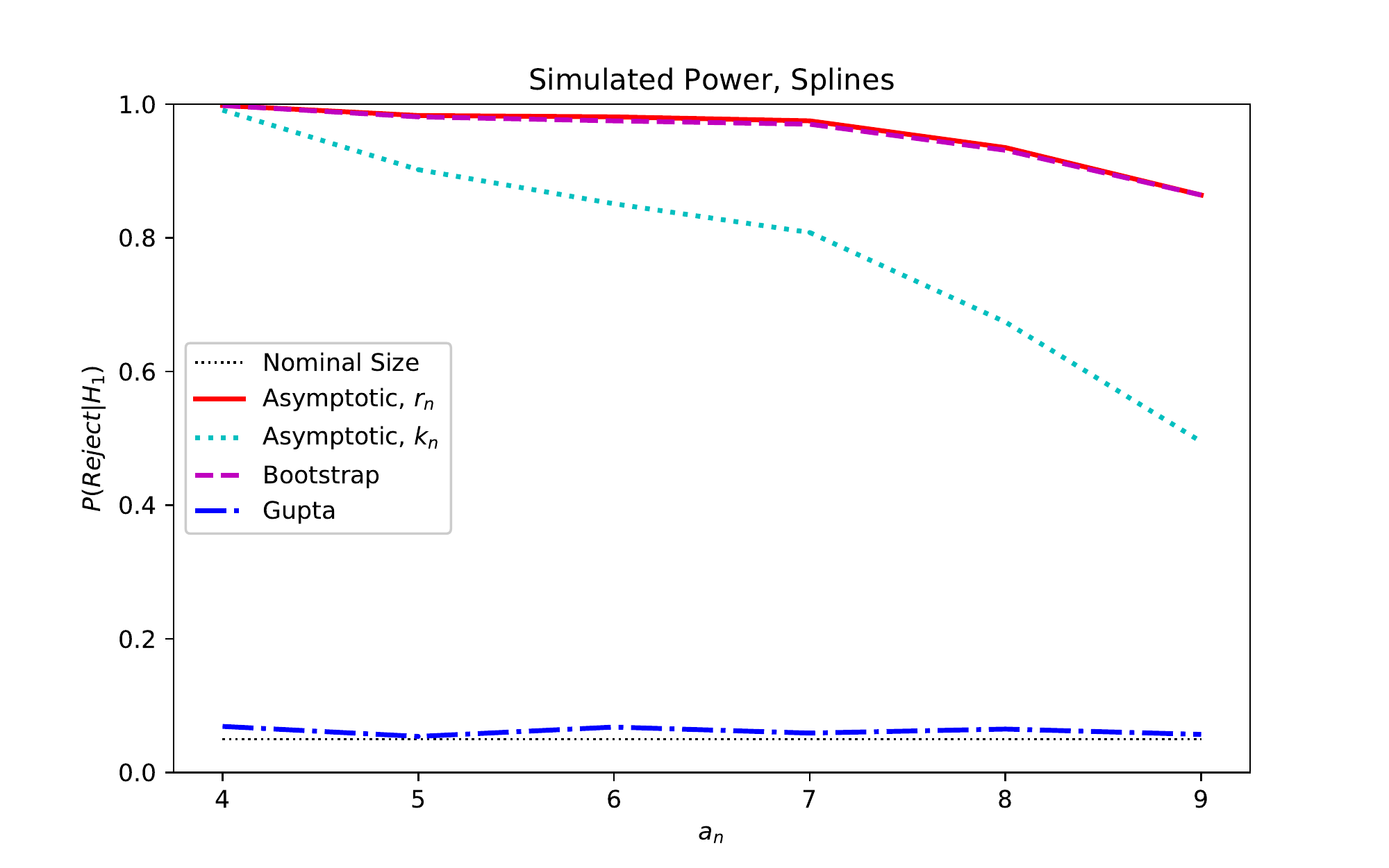}

\footnotesize{\vspace{-0.4cm}\singlespacing Left: power series. Right: splines. The results are based on $M=1,000$ simulation draws.}
\end{center}
\end{figure}

\clearpage \section{Proofs and Auxiliary Results}\label{appendix_proofs}

\onehalfspacing

\subsection{Auxiliary Lemmas}

In this section, I present auxiliary lemmas that will be used in my proofs. Their proofs are presented after the proofs of main results.

The following lemma shows that a convenient normalization can be used. This normalization is typical in the literature on series methods and will be used to simplify the subsequent proofs.

\begin{lemma}[\citet{donald_et_al_2003}, Lemma A.2]\label{series_normalization}

If Assumption \ref{series_norms_eigenvalues} is satisfied then it can be assumed without loss of generality that $\tilde{P}^{k}(x) = P^{k}(x)$ and that $E[P^{k}(X_i) P^{k}(X_i)'] = I_{k}$.

\end{lemma}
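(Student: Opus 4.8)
The plan is to exhibit the nonsingular transformation matrices promised by Assumption~\ref{series_norms_eigenvalues} and show that replacing the original basis $P^{k}(x)$ by a suitably rotated and rescaled basis $\tilde P^{k}(x)$ leaves the test statistic $\xi_{HC}$ (and hence $t_{HC,r_n}$) numerically unchanged, while simultaneously orthonormalizing the second moment matrix. Concretely, I would set $G_k := E[\bar P^{k}(X_i)\bar P^{k}(X_i)']$, where $\bar P^{k} = B\,P^{k}$ is the transformed basis from Assumption~\ref{series_norms_eigenvalues}, and define the new basis by $\tilde P^{k}(x) := G_k^{-1/2} B\, P^{k}(x)$. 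By construction $E[\tilde P^{k}(X_i)\tilde P^{k}(X_i)'] = G_k^{-1/2} G_k G_k^{-1/2} = I_k$, which gives the second claim.

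First I would verify that $G_k^{-1/2}$ is well defined and that the composite transformation is nonsingular. This is exactly where the two parts of Assumption~\ref{series_norms_eigenvalues} enter: part (b) guarantees that the smallest eigenvalue of $E[\bar P^{k}(X_i)\bar P^{k}(X_i)']$ is bounded away from zero uniformly in $k$, so $G_k$ is invertible and $G_k^{-1/2}$ exists, and the product $G_k^{-1/2} B$ of nonsingular matrices is itself nonsingular. Because the block structure $\bar P^{k} = (\bar W^{m\prime}, \bar Z^{r\prime})'$ is built from the block-diagonal transformation $B = \mathrm{diag}(B_1, B_2)$, I would be careful to preserve the partition into the $W$-block and the $Z$-block so that the normalized basis still respects the restricted/unrestricted split used to form $\tilde Z = M_W Z$.

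Second, and this is the crux, I would show invariance of the statistic under the change of basis. The key observation is that $\xi_{HC} = \tilde\varepsilon' \tilde Z (\tilde Z'\tilde\Sigma\tilde Z)^{-1}\tilde Z'\tilde\varepsilon$ depends on $W$ only through the projection $P_W = W(W'W)^{-1}W'$ and on $Z$ only through $\tilde Z = M_W Z$. Any nonsingular right-multiplication $W \mapsto W A_1$ leaves $P_W$ (and hence $M_W$ and the restricted residuals $\tilde\varepsilon = M_W Y$) unchanged, and any nonsingular $Z \mapsto Z A_2$ maps $\tilde Z \mapsto \tilde Z A_2$, whereupon the quadratic form $\tilde Z A_2 (A_2'\tilde Z'\tilde\Sigma\tilde Z A_2)^{-1} A_2'\tilde Z' = \tilde Z(\tilde Z'\tilde\Sigma\tilde Z)^{-1}\tilde Z'$ after the $A_2$ factors cancel. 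Thus both the ``hat'' projection structure and the sandwiched inverse are invariant, so $\xi_{HC}$ is identical whether computed from $P^{k}$ or from $\tilde P^{k}$.

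I expect the main obstacle to be purely bookkeeping rather than conceptual: ensuring that the transformation genuinely factors through the block-diagonal form $\mathrm{diag}(B_1,B_2)$ applied \emph{before} the joint whitening by $G_k^{-1/2}$, since $G_k$ need not be block diagonal and could mix the $W$- and $Z$-coordinates. I would argue that this mixing is harmless because the statistic is already invariant to arbitrary nonsingular reparametrization within each block and to the partialling-out operation $M_W$, so the only substantive requirement is that $\tilde P^{k}$ span the same column space with $E[\tilde P^{k}\tilde P^{k\prime}]=I_k$; one can always achieve the orthonormalization within the $W$-block first, then orthogonalize the $Z$-block against it (a Gram--Schmidt / Cholesky-type argument), which both whitens $G_k$ and preserves the $M_W$-residual structure. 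Invoking Lemma~\ref{series_normalization} thereafter lets every later proof assume $E[P^{k}(X_i)P^{k}(X_i)']=I_k$ without loss of generality.
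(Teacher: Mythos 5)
Your proposal is correct and takes essentially the same route as the paper: the paper imports this result from Lemma A.2 of \citet{donald_et_al_2003} (invariance of the statistic under nonsingular linear transformations of the basis, plus existence of the whitening matrix via the eigenvalue bound in Assumption~\ref{series_norms_eigenvalues}(b)), and its accompanying remark handles the block-mixing subtlety you flag by exactly the sequential Gram--Schmidt (block-triangular) orthogonalization you propose, which preserves the $W$/$Z$ partition because the $W$-components added to $Z$ are annihilated by $M_W$.
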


\begin{remark}
This normalization is common in the literature on series estimation, when all elements of $P^{k}(X_i)$ are used to estimate the model. In my setting, $P^{k}(X_i)$ is partitioned into $W^{m}(X_i)$, used in estimation, and $T^{r}(X_i)$, used in testing. The normalization implies that $W_i$ and $Z_i$ are orthogonal to each other. This can be justified as follows. Suppose that $W_i$ and $Z_i$ are not orthogonal. Then one can take all elements of $(W_i', Z_i')'$ and apply the Gram-Schmidt process to them. Because the orthogonalization process is sequential, it will yield the vector $(W_i^{0},Z_i^{0})'$ such that $W_i^{0}$ spans the same space as $W_i$, $Z_i^{0}$ spans the same space as $Z_i$, and $W_i^{0}$ and $Z_i^{0}$ are orthogonal. Thus, the normalization is indeed without loss of generality.
\end{remark}

The following lemma provides the rates of convergence (in the sample mean squared error sense) for semiparametric series estimators under the null hypothesis that will be used in my proofs.

\begin{lemma}[\citet{li_racine_2007}, Theorem 15.1]\label{series_f_rates}

Let $f(x) = f(x,\theta_0,h_0)$, $f_i = f(X_i)$, $\tilde{f}(x)= f(x, \tilde{\theta}, \tilde{h}) = W^{m_n}(x)' \tilde{\beta}_1$, and $\tilde{f}_i = \tilde{f}(X_i) = W_i' \tilde{\beta}_1$. Under Assumptions \ref{dgp}, \ref{series_norms_eigenvalues}, and \ref{series_approx}, the following is true:
\[
\frac{1}{n}\sum_{i=1}^{n}{(\tilde{f}_i - f_i)^2} = O_p(m_n/n + m_n^{-2\alpha})
\]

\end{lemma}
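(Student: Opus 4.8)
The plan is to exploit the fact that the restricted fitted vector $\tilde{f} = (\tilde{f}_1, \dots, \tilde{f}_n)'$ is exactly the orthogonal projection of $Y$ onto the column space of $W$, namely $\tilde{f} = W\tilde{\beta}_1 = P_W Y$ with $P_W = W(W'W)^{-1}W'$. By Lemma~\ref{series_normalization} I may normalize so that $E[W_i W_i'] = I_{m_n}$, which simplifies the bookkeeping. Writing $Y = f + \varepsilon$ under $H_0$ and using $P_W f - f = -M_W f$, I obtain the decomposition
\[
\tilde{f} - f = -M_W f + P_W \varepsilon,
\]
so that, by the elementary bound $\|a+b\|^2 \le 2\|a\|^2 + 2\|b\|^2$,
\[
\frac{1}{n}\sum_{i=1}^{n}(\tilde{f}_i - f_i)^2 = \frac{1}{n}\|\tilde{f} - f\|^2 \le \frac{2}{n}\|M_W f\|^2 + \frac{2}{n}\|P_W \varepsilon\|^2.
\]
The first term is the approximation (bias) contribution and the second the stochastic (variance) contribution, and I would bound them separately.

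For the bias term I would use that $M_W f$ is the least-squares residual from projecting $f$ onto $W$ and therefore has the smallest Euclidean norm among all vectors $f - Wb$. Choosing $b = \beta_1$ from Assumption~\ref{series_approx} gives $\|M_W f\|^2 \le \|f - W\beta_1\|^2 = \sum_i R_i^2 \le n\,(\sup_{x \in \mathcal{X}}|f(x,\theta_0,h_0) - W^{m_n}(x)'\beta_1|)^2 = n \cdot O(m_n^{-2\alpha})$, hence $\frac{1}{n}\|M_W f\|^2 = O(m_n^{-2\alpha})$. The attraction of this step is that it avoids any factor of $\zeta(m_n)$, since it relies only on the projection being norm-minimizing rather than on a cruder bound involving $W'R$.

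For the variance term I would condition on the regressors. Because $E[\varepsilon_i | X_i] = 0$ and the data are i.i.d., the cross terms vanish and $E[\varepsilon' P_W \varepsilon \mid X] = \sum_i (P_W)_{ii}\,\sigma_i^2 \le (\sup_i \sigma_i^2)\, tr(P_W)$. Since $P_W$ is an orthogonal projection onto an at most $m_n$-dimensional subspace, $tr(P_W) \le m_n$, and the uniformly bounded conditional variance of Assumption~\ref{errors_unified} yields $E[\frac{1}{n}\varepsilon' P_W \varepsilon] \le (\sup_i \sigma_i^2)\, m_n/n = O(m_n/n)$. Markov's inequality upgrades this to $\frac{1}{n}\|P_W \varepsilon\|^2 = O_p(m_n/n)$. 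Combining the two bounds delivers $\frac{1}{n}\sum_i (\tilde{f}_i - f_i)^2 = O_p(m_n/n + m_n^{-2\alpha})$, as claimed.

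The one step that requires genuine care, and what I expect to be the main obstacle, is guaranteeing that $\tilde{\beta}_1$ (equivalently $P_W$) is well defined with the correct rank, so that the projection interpretation is legitimate. For this I would invoke the smallest-eigenvalue condition in Assumption~\ref{series_norms_eigenvalues}(b) together with a law of large numbers argument showing $\|W'W/n - I_{m_n}\| \overset{p}{\to} 0$; a second-moment computation gives $E\|W'W/n - I_{m_n}\|^2 \le \zeta(m_n)^2 m_n/n$, so under the implied rate restriction $W'W/n$ is invertible with probability approaching one and its inverse is bounded in operator norm. All of the above should therefore be read as holding on this probability-approaching-one event, which does not affect the stated stochastic order; note moreover that the two norm bounds themselves remain valid even on the complement, since $\|M_W f\| \le \|R\|$ and $tr(P_W) \le m_n$ hold regardless of whether $W$ has full column rank. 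Everything else is routine.
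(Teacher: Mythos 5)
Your proof is correct and is essentially the standard argument behind the result the paper invokes: the paper gives no proof of Lemma~\ref{series_f_rates} at all, importing it by citation from Theorem 15.1 of \citet{li_racine_2007}, whose proof is exactly your bias--variance decomposition $\tilde{f} - f = -M_W f + P_W \varepsilon$ with the trace bound $E[\varepsilon' P_W \varepsilon \mid X] \le (\sup_i \sigma_i^2)\, tr(P_W) \le C m_n$ and the norm-minimizing projection bound $\tfrac{1}{n}\|M_W f\|^2 \le \sup_{x \in \mathcal{X}}|R(x)|^2 = O(m_n^{-2\alpha})$, so your argument matches the cited proof step for step (and your observation that both bounds survive without full column rank is a nice extra). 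One small point: the variance step relies on the uniformly bounded conditional variance of Assumption~\ref{errors_unified}, which the lemma's stated assumption list omits but which is among the cited theorem's hypotheses and holds wherever the lemma is applied in the paper, so your use of it is appropriate.
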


\begin{remark}[Convergence Rates for Semiparametric Series Estimators]

The exact rates given in Lemma~\ref{series_f_rates} are derived in \citet{li_racine_2007} for series estimators in nonparametric models. However, as other examples in Chapter 15 in \citet{li_racine_2007} show, similar rates can be derived in a wide class of semiparametric models, such as partially linear, varying coefficient, or additive models (see Theorems 15.5 and 15.7 in \citet{li_racine_2007}). In each of these cases, it is possible to replace the rate in Lemma~\ref{series_f_rates} with the rate for the particular case of interest. With appropriately modified rate conditions and assumptions, the results developed below will continue to hold.

\end{remark}

\begin{lemma}\label{lemma_ZZ}
Suppose that assumptions of Theorem~\ref{asy_distr_t_r_n_hc} hold. Then $\| \tilde{Z}'\tilde{Z}/n - Z'Z/ n \| = o_p(1/\sqrt{r_n})$. Moreover, the smallest and largest eigenvalues of $\tilde{Z}'\tilde{Z}/n$ converge in probability to one.
\end{lemma}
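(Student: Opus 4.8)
The plan is to exploit the idempotency of $M_W$ to collapse the difference into a single quadratic correction term and then show that term is negligible. Since $\tilde{Z} = M_W Z$ with $M_W = I - W(W'W)^{-1}W'$ symmetric and idempotent,
\[
\tilde{Z}'\tilde{Z}/n = Z'M_W Z/n = Z'Z/n - (Z'W/n)(W'W/n)^{-1}(W'Z/n),
\]
so that $\tilde{Z}'\tilde{Z}/n - Z'Z/n = -(Z'W/n)(W'W/n)^{-1}(W'Z/n)$ and the whole problem reduces to bounding this correction. By Lemma~\ref{series_normalization} I may assume $E[P^{k_n}(X_i)P^{k_n}(X_i)'] = I_{k_n}$, which in block form delivers the population orthogonality $E[W_iZ_i'] = 0$ together with $E[W_iW_i'] = I_{m_n}$ and $E[Z_iZ_i'] = I_{r_n}$.

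First I would bound the cross term $Z'W/n$. Because $E[Z_iW_i'] = 0$, it is a sample average of mean-zero terms, so a direct second-moment computation gives $E\|Z'W/n\|^2 \le n^{-1} E[\|Z_i\|^2\|W_i\|^2]$; bounding $\|Z_i\|^2 \le \zeta(r_n)^2$ by Assumption~\ref{series_norms_eigenvalues}(a) while keeping $E\|W_i\|^2 = \mathrm{tr}(E[W_iW_i']) = m_n$ yields $E\|Z'W/n\|^2 \le \zeta(r_n)^2 m_n/n$, hence $\|Z'W/n\| = O_p(\zeta(r_n)\sqrt{m_n/n})$. Next, the invertibility and eigenvalue control for $W'W/n$ that already underlie Lemma~\ref{series_f_rates} give that the smallest eigenvalue of $W'W/n$ is bounded away from zero with probability approaching one, so $(W'W/n)^{-1}$ has operator norm $O_p(1)$. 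Combining these via submultiplicativity (the Euclidean norm $\|\cdot\|$ dominates the spectral norm),
\[
\|(Z'W/n)(W'W/n)^{-1}(W'Z/n)\| \le \|Z'W/n\|^2 \cdot \lambda_{\max}\bigl((W'W/n)^{-1}\bigr) = O_p\!\left(\zeta(r_n)^2 m_n/n\right).
\]
Multiplying by $\sqrt{r_n}$, the first term of rate condition~\eqref{rate_cond_r_1_hc}, namely $(m_n/n)\zeta(r_n)^2 r_n^{1/2} \to 0$, shows this is $o_p(1/\sqrt{r_n})$, which is the first claim.

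For the eigenvalue statement I would first establish $Z'Z/n \to I_{r_n}$ in operator norm. The same second-moment computation gives $\|Z'Z/n - I_{r_n}\| = O_p(\zeta(r_n)\sqrt{r_n/n})$, and since $\zeta(r_n)\sqrt{r_n/n} \le \zeta(r_n) r_n/n^{1/2} \to 0$ by rate condition~\eqref{rate_cond_r_2_hc}, the spectral norm of $Z'Z/n - I_{r_n}$ is $o_p(1)$, so all eigenvalues of $Z'Z/n$ converge to one. Having shown the correction term is $o_p(1/\sqrt{r_n}) = o_p(1)$ in Euclidean, hence spectral, norm, Weyl's inequality places every eigenvalue of $\tilde{Z}'\tilde{Z}/n$ within $o_p(1)$ of the corresponding eigenvalue of $Z'Z/n$; in particular its smallest and largest eigenvalues converge in probability to one.

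The main obstacle is obtaining the cross-term bound sharply enough. The entire gain comes from exploiting the population orthogonality $E[W_iZ_i'] = 0$ supplied by the normalization: without it, $Z'W/n$ would not be centered and its norm would be $O_p(1)$ rather than shrinking. The delicate accounting is in the fourth-moment split — bounding $\|Z_i\|^2$ by the uniform bound $\zeta(r_n)^2$ while retaining $E\|W_i\|^2 = m_n$ (instead of the cruder $\zeta(m_n)^2$) is precisely what makes the resulting rate match the first term of~\eqref{rate_cond_r_1_hc} after multiplication by the $\sqrt{r_n}$ factor demanded by the $o_p(1/\sqrt{r_n})$ target.
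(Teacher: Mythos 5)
Your proof is correct, and while the opening decomposition $\tilde{Z}'\tilde{Z}/n - Z'Z/n = -(Z'W/n)(W'W/n)^{-1}(W'Z/n)$ is the same as the paper's (it is forced by the algebra), you bound the key cross block by a genuinely different and in fact sharper route. The paper does not exploit the mean-zero structure of $Z_iW_i'$ directly: it invokes Lemma 15.2 of \citet{li_racine_2007} for the full matrix $P'P/n - I_{k_n}$, reads off $\| W'Z/n \| = O_p(\zeta(k_n)\sqrt{k_n/n})$ as a sub-block, obtains the correction bound $O_p(\zeta(k_n)^2 k_n/n)$, and then needs the side condition $r_n/k_n \to C \leq 1$ together with rate condition~(\ref{rate_cond_r_2_hc}) to conclude $\zeta(k_n)^2 k_n r_n^{1/2}/n \to 0$. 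You instead compute the second moment of the centered cross block directly — using that the off-diagonal block of the normalization $E[P^{k_n}(X_i)P^{k_n}(X_i)'] = I_{k_n}$ from Lemma~\ref{series_normalization} gives exactly $E[W_iZ_i'] = 0$, with independence killing the cross terms — and your asymmetric split (sup bound $\zeta(r_n)^2$ on $\| Z_i \|^2$, trace identity $E\| W_i \|^2 = m_n$) yields the correction $O_p(\zeta(r_n)^2 m_n/n)$, which dies after multiplication by $\sqrt{r_n}$ by the first term of rate condition~(\ref{rate_cond_r_1_hc}) alone, with no appeal to $r_n/k_n \to C$ or to $\zeta(k_n)$. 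What each approach buys: the paper's is shorter because it recycles a known lemma and a bound it needs elsewhere anyway, while yours replaces $\zeta(k_n)^2 k_n$ by $\zeta(r_n)^2 m_n$ and so remains valid even if $r_n/k_n$ does not converge to a positive constant. For the eigenvalue claim both arguments run through $\| Z'Z/n - I_{r_n} \| = O_p(\zeta(r_n)\sqrt{r_n/n}) = o_p(1)$; the paper leaves the last step implicit, whereas your explicit Weyl perturbation step is the clean justification. The one loose stitch in your write-up is the eigenvalue control of $W'W/n$, which you attribute to the underpinnings of Lemma~\ref{series_f_rates}; to be self-contained, note $\| W'W/n - I_{m_n} \| = O_p(\zeta(m_n)\sqrt{m_n/n}) = o_p(1)$ under the maintained rate conditions, which is how the paper itself secures the same step.
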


\begin{lemma}\label{lemma_Z}
Suppose that assumptions of Theorem~\ref{asy_distr_t_r_n_hc} hold. Then $\sup_{x \in \mathcal{X}}{\| \tilde{Z}^{r}(x) \|} \leq \zeta(r)(1 + o_p(1))$ and $\| Z^{r}(X_i) \| = O_p(r_n^{1/2})$.
\end{lemma}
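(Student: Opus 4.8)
The plan is to treat the two claims separately, since the bound on $\| Z^{r}(X_i) \|$ follows at once from the normalization while the uniform bound on $\| \tilde{Z}^{r}(x) \|$ requires controlling the sampling error in the projection of $Z$ onto $W$. For the second claim I would invoke the normalization of Lemma~\ref{series_normalization}, under which $E[Z^{r}(X_i) Z^{r}(X_i)'] = I_{r_n}$. Taking traces gives $E[\| Z^{r}(X_i) \|^2] = tr(I_{r_n}) = r_n$, so $\| Z^{r}(X_i) \|^2$ is a nonnegative random variable with mean $r_n$, and Markov's inequality yields $P(\| Z^{r}(X_i) \|^2 > M r_n) \le 1/M$ for every $M > 0$, which is exactly $\| Z^{r}(X_i) \| = O_p(r_n^{1/2})$.

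For the first claim I would write the residualized series vector explicitly as a function of $x$. Since $\tilde{Z} = M_W Z$, the corresponding function is $\tilde{Z}^{r}(x) = Z^{r}(x) - \hat{\Pi}' W^{m}(x)$, where $\hat{\Pi} = (W'W)^{-1} W'Z$ is the $m_n \times r_n$ matrix of sample coefficients from regressing $Z$ on $W$. Writing $\| \cdot \|_{op}$ for the operator norm, the triangle inequality gives
\[
\sup_{x \in \mathcal{X}} \| \tilde{Z}^{r}(x) \| \le \sup_{x \in \mathcal{X}} \| Z^{r}(x) \| + \sup_{x \in \mathcal{X}} \| \hat{\Pi}' W^{m}(x) \| \le \zeta(r) + \zeta(m) \, \| \hat{\Pi} \|_{op},
\]
where the first term is bounded using Assumption~\ref{series_norms_eigenvalues}(a) (after the normalization, $\bar{Z} = Z$) and the second uses $\| \hat{\Pi}' W^{m}(x) \| \le \| \hat{\Pi} \|_{op} \| W^{m}(x) \| \le \zeta(m) \| \hat{\Pi} \|_{op}$. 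It therefore suffices to show that the projection term is negligible relative to $\zeta(r)$, i.e. $\zeta(m) \| \hat{\Pi} \|_{op} = o_p(\zeta(r))$.

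The main work, and the step I expect to be the crux, is bounding $\| \hat{\Pi} \|$, and here I would route the argument through Lemma~\ref{lemma_ZZ} rather than through a crude moment bound on $W'Z/n$. Since $\tilde{Z}'\tilde{Z}/n = Z'Z/n - Z'P_W Z/n$, that lemma gives $\| Z'P_W Z/n \| = o_p(r_n^{-1/2})$. Setting $A = W'Z/n$ and $S = W'W/n$, one has $Z'P_W Z/n = A'S^{-1}A$ and $\hat{\Pi}'\hat{\Pi} = A'S^{-2}A$; because the smallest eigenvalue of $S$ is bounded away from zero (the $W$-analogue of the eigenvalue statement in Lemma~\ref{lemma_ZZ}, implicit already in Lemma~\ref{series_f_rates}), we have $S^{-2} \preceq \| S^{-1} \|_{op} S^{-1}$ in the positive semidefinite order, and hence
\[
\| \hat{\Pi} \|_F^2 = tr(A'S^{-2}A) \le \| S^{-1} \|_{op} \, tr(Z'P_W Z/n).
\]
As $Z'P_W Z/n$ is positive semidefinite of dimension $r_n$, the Cauchy--Schwarz bound $tr(Z'P_W Z/n) \le \sqrt{r_n} \, \| Z'P_W Z/n \| = \sqrt{r_n} \cdot o_p(r_n^{-1/2}) = o_p(1)$ applies, and with $\| S^{-1} \|_{op} = O_p(1)$ this yields $\| \hat{\Pi} \|_{op} \le \| \hat{\Pi} \|_F = o_p(1)$.

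Finally I would close the argument using the maintained relative-dimension setup $m_n / k_n \to C < 1$, equivalently $m_n = O(r_n)$, which together with the polynomial order of $\zeta(\cdot)$ gives $\zeta(m) = O(\zeta(r))$; therefore $\zeta(m) \| \hat{\Pi} \|_{op} = O(\zeta(r)) \, o_p(1) = o_p(\zeta(r))$, and combining with the display above delivers $\sup_{x \in \mathcal{X}} \| \tilde{Z}^{r}(x) \| \le \zeta(r)(1 + o_p(1))$. The delicate points I anticipate are exactly these last two: justifying $\| S^{-1} \|_{op} = O_p(1)$ (the standard well-conditioning of the sample second-moment matrix of the included regressors) and the comparison $\zeta(m) = O(\zeta(r))$, where the relative-dimension assumption enters. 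A purely elementary bound on $\| W'Z/n \|_F$ would instead force the stronger requirement $\zeta(m)^2 = o(\sqrt{n})$, so leaning on Lemma~\ref{lemma_ZZ} is what keeps the conclusion within the stated rate conditions.
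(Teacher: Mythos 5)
Your proof is correct, and it shares the paper's skeleton---the decomposition $\tilde{Z}^{r}(x) = Z^{r}(x) - \hat{\Pi}' W^{m}(x)$ with $\hat{\Pi} = (W'W)^{-1}W'Z$, the triangle inequality, and a Markov/trace argument for $\|Z^{r}(X_i)\| = O_p(r_n^{1/2})$---but you handle the crux, the control of $\hat{\Pi}$, by a genuinely different route. The paper bounds $\|(Z'W/n)(W'W/n)^{-1}\|$ directly: it reuses the Li--Racine matrix bound from the proof of Lemma~\ref{lemma_ZZ} to get $\|W'Z/n\| = O_p(\zeta(k_n)\sqrt{k_n/n})$, absorbs $(W'W/n)^{-1}$ into a constant via the eigenvalue bound, and then verifies the explicit rate display $\zeta(k_n)k_n^{1/2}m_n^{1/2}/(r_n^{1/2}n^{1/2}) \to 0$ implied by condition~(\ref{rate_cond_r_2_hc}). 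You instead recycle the already-established \emph{conclusion} of Lemma~\ref{lemma_ZZ}, $\|Z'P_W Z/n\| = o_p(r_n^{-1/2})$, and convert it into $\|\hat{\Pi}\|_F^2 \leq \|S^{-1}\|_{op}\, tr(Z'P_W Z/n) \leq \|S^{-1}\|_{op}\sqrt{r_n}\,\|Z'P_W Z/n\| = o_p(1)$ using the psd comparison $S^{-2} \preceq \|S^{-1}\|_{op} S^{-1}$ and Cauchy--Schwarz on the trace; every step checks out, and there is no circularity since Lemma~\ref{lemma_ZZ} does not rely on Lemma~\ref{lemma_Z}. What each buys: your route needs only the qualitative output of Lemma~\ref{lemma_ZZ} plus $\zeta(m_n) = O(\zeta(r_n))$ (from $m_n = O(r_n)$ and polynomial $\zeta$, both of which the paper also uses implicitly), so it avoids re-deriving a rate display; the paper's route yields the sharper quantitative order $\|\hat{\Pi}\| = O_p(\zeta(k_n)\sqrt{k_n/n})$, which it then reuses verbatim for the second half of the lemma. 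Two small caveats. First, your closing aside is inaccurate: the paper's own proof uses precisely the ``elementary'' bound on $\|W'Z/n\|$ and stays within the stated rate conditions, so leaning on Lemma~\ref{lemma_ZZ} is an elegant shortcut, not a necessity. Second, the paper's proof additionally establishes $\|\tilde{Z}_i\| = O_p(r_n^{1/2})$, which is the form actually invoked later (e.g., in Lemmas~\ref{omegas_hc} and~\ref{asy_norm}); your Markov argument covers only $\|Z^{r}(X_i)\|$ as literally stated, but your machinery supplies the rest in one line: $\|\tilde{Z}_i\| \leq \|Z_i\| + \|\hat{\Pi}\|_{op}\|W_i\| = O_p(r_n^{1/2}) + o_p(1)\,O_p(m_n^{1/2}) = O_p(r_n^{1/2})$.
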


\begin{lemma}\label{omegas_hc}
Let  $\tilde{\Omega}_{HC} = \tilde{Z}' \tilde{\Sigma} \tilde{Z}/n$, $\bar{\Omega}_{HC} = \sum_{i}{\varepsilon_i^2 \tilde{Z}_i \tilde{Z}_i'}/n$, and $\Omega_{HC} = \sum_{i}{\sigma_i^2 \tilde{Z}_i \tilde{Z}_i'}/n$, where $\sigma_i^2 = E[\varepsilon_i^2 | X_i]$.Suppose that assumptions of Theorem~\ref{asy_distr_t_r_n_hc} hold. Then
\begin{align*}
\| \tilde{\Omega}_{HC} - \bar{\Omega}_{HC} \| &=  O_p \left( \zeta(r_n)^2 (m_n/n + m_n^{-2\alpha}) \right) \\
\| \bar{\Omega}_{HC} - \Omega_{HC} \| &= O_p(\zeta(r_n) r_n^{1/2}/n^{1/2})
\end{align*}

Moreover, $1/C \leq \lambda_{\min}(\Omega_{HC}) \leq \lambda_{\max}(\Omega_{HC}) \leq C$, and w.p.a. 1, $1/C \leq \lambda_{\min}(\tilde{\Omega}_{HC}) \leq \lambda_{\max}(\tilde{\Omega}_{HC}) \leq C$ and $1/C \leq \lambda_{\min}(\bar{\Omega}_{HC}) \leq \lambda_{\max}(\bar{\Omega}_{HC}) \leq C$.

\end{lemma}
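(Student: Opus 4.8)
The plan is to establish the three norm bounds first and then read off every eigenvalue bound from them by perturbation. Throughout I write $M_i = \tilde{Z}_i \tilde{Z}_i'$, $d_i = \tilde{f}_i - f_i$ (so that $\tilde{\varepsilon}_i = \varepsilon_i - d_i$ with $f_i,\tilde{f}_i$ as in Lemma~\ref{series_f_rates}), and $\delta_n = m_n/n + m_n^{-2\alpha}$. I would begin with the eigenvalue bounds for $\Omega_{HC} = \tfrac1n\sum_i \sigma_i^2 M_i$. Since Assumption~\ref{errors_unified} gives $0 < \underline{\sigma}^2 \le \sigma_i^2 \le \bar{\sigma}^2 < \infty$, the matrix is sandwiched as $\underline{\sigma}^2\,(\tilde{Z}'\tilde{Z}/n) \preceq \Omega_{HC} \preceq \bar{\sigma}^2\,(\tilde{Z}'\tilde{Z}/n)$, and Lemma~\ref{lemma_ZZ} states that the extreme eigenvalues of $\tilde{Z}'\tilde{Z}/n$ converge to one; hence $1/C \le \lambda_{\min}(\Omega_{HC}) \le \lambda_{\max}(\Omega_{HC}) \le C$ w.p.a.\ 1. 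Once the two difference bounds are in hand, Weyl's inequality propagates these bounds to $\bar{\Omega}_{HC}$ and $\tilde{\Omega}_{HC}$, because conditions~(\ref{rate_cond_r_1_hc}) and~(\ref{rate_cond_r_2_hc}) force both differences to be $o_p(1)$.

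For $\|\bar{\Omega}_{HC} - \Omega_{HC}\|$ I would condition on $X_1,\dots,X_n$. Writing $\xi_i = \varepsilon_i^2 - \sigma_i^2$, the $\xi_i$ are independent across $i$, mean zero given the regressors, and have conditional variance bounded uniformly by Assumption~\ref{errors_unified}. Since $\bar{\Omega}_{HC} - \Omega_{HC} = \tfrac1n\sum_i \xi_i M_i$ is symmetric, the cross terms vanish by conditional independence and
\[
E\big[ \|\bar{\Omega}_{HC} - \Omega_{HC}\|^2 \mid X \big] = \frac{1}{n^2}\sum_i \mathrm{Var}(\xi_i \mid X_i)\,\|\tilde{Z}_i\|^4 .
\]
Bounding $\|\tilde{Z}_i\|^2 \le \zeta(r_n)^2(1+o_p(1))$ by Lemma~\ref{lemma_Z} and $\sum_i \|\tilde{Z}_i\|^2 = \mathrm{tr}(\tilde{Z}'\tilde{Z}) = O_p(n r_n)$ by Lemma~\ref{lemma_ZZ}, the right-hand side is $O_p(\zeta(r_n)^2 r_n/n)$, so a conditional Chebyshev argument yields $\|\bar{\Omega}_{HC} - \Omega_{HC}\| = O_p(\zeta(r_n)\,r_n^{1/2}/n^{1/2})$, as claimed.

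The bound on $\|\tilde{\Omega}_{HC} - \bar{\Omega}_{HC}\|$ is the heart of the proof. Expanding $\tilde{\varepsilon}_i^2 - \varepsilon_i^2 = d_i^2 - 2\varepsilon_i d_i$ splits the difference into $\tfrac1n\sum_i d_i^2 M_i$ and $-\tfrac2n\sum_i \varepsilon_i d_i M_i$. The first is immediate: $\|\tfrac1n\sum_i d_i^2 M_i\| \le \zeta(r_n)^2(1+o_p(1))\cdot \tfrac1n\sum_i d_i^2 = O_p(\zeta(r_n)^2 \delta_n)$ by Lemma~\ref{lemma_Z} and the mean-square rate of Lemma~\ref{series_f_rates}. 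The cross term is the main obstacle, because the naive scalar bound $\tfrac1n\sum_i |\varepsilon_i||d_i| \le (\tfrac1n\sum_i \varepsilon_i^2)^{1/2}(\tfrac1n\sum_i d_i^2)^{1/2} = O_p(\delta_n^{1/2})$ loses a square root and is too weak for the target. Instead I would condition on the regressors and exploit the projection structure $d_i = (P_W\varepsilon)_i - (M_W R)_i$, which makes the conditional mean explicit:
\[
E[\varepsilon_i d_i \mid X] = (P_W)_{ii}\,\sigma_i^2,
\]
since $E[\varepsilon_i (P_W\varepsilon)_i \mid X] = (P_W)_{ii}\sigma_i^2$ and $E[\varepsilon_i (M_W R)_i \mid X] = 0$. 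The conditional mean of the cross-term matrix is therefore $\tfrac1n\sum_i (P_W)_{ii}\sigma_i^2 M_i$, whose norm is at most $\zeta(r_n)^2 \bar{\sigma}^2\, n^{-1}\mathrm{tr}(P_W) = O(\zeta(r_n)^2 m_n/n)$, already inside the target. It then remains to control the centered part $\tfrac1n\sum_i (\varepsilon_i d_i - E[\varepsilon_i d_i \mid X]) M_i$ by the same conditional second-moment bookkeeping as above, now tracking separately the quadratic-in-$\varepsilon$ piece generated by $(P_W\varepsilon)_i$ (which uses the fourth-moment bound of Assumption~\ref{errors_unified} together with $\mathrm{tr}(P_W) = m_n$) and the conditionally mean-zero piece generated by $(M_W R)_i$ (whose variance is governed by $\|M_W R\|^2 \le \|R\|^2 = O(n\, m_n^{-2\alpha})$). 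Combining mean and fluctuation gives $\|\tilde{\Omega}_{HC} - \bar{\Omega}_{HC}\| = O_p(\zeta(r_n)^2 \delta_n)$.

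Finally, with both difference bounds established, I would invoke Weyl's inequality once more: since $\|\tilde{\Omega}_{HC} - \bar{\Omega}_{HC}\| + \|\bar{\Omega}_{HC} - \Omega_{HC}\| = o_p(1)$ under the rate conditions, the extreme eigenvalues of $\bar{\Omega}_{HC}$ and $\tilde{\Omega}_{HC}$ lie within $o_p(1)$ of those of $\Omega_{HC}$ and hence inherit the bounds $1/C \le \lambda_{\min} \le \lambda_{\max} \le C$ w.p.a.\ 1. The delicate step, on which I would spend the most care, is the conditional-variance bound for the centered cross term: there $d_i$ is itself a linear functional of the whole error vector, so $\varepsilon_i d_i$ is genuinely quadratic in $\varepsilon$, and I must show its fluctuation is of the same order as, or smaller than, its conditional mean $\frac1n\sum_i (P_W)_{ii}\sigma_i^2 M_i$ in order not to reintroduce the spurious $\delta_n^{1/2}$ rate.
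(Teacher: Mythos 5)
Your proposal is correct, and it reproduces the paper's overall skeleton: the same expansion $\tilde{\varepsilon}_i^2 - \varepsilon_i^2 = d_i^2 - 2\varepsilon_i d_i$ with the $d_i^2$ term killed by $\sup_i \|\tilde{Z}_i\|^2 \leq \zeta(r_n)^2(1+o_p(1))$ and Lemma~\ref{series_f_rates}; the same conditional-Chebyshev bound $E[\|\bar{\Omega}_{HC}-\Omega_{HC}\|^2 \mid X] = n^{-2}\sum_i \mathrm{Var}(\varepsilon_i^2 \mid X_i)\|\tilde{Z}_i\|^4 = O_p(\zeta(r_n)^2 r_n/n)$ (the paper delegates this to Lemma A.6 of \citet{donald_et_al_2003}); and the same sandwich-plus-perturbation route to all the eigenvalue statements via Lemma~\ref{lemma_ZZ}. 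Where you genuinely diverge is the cross term, which is indeed the only delicate step. The paper reduces to the scalar sum $\frac{1}{n}\sum_i \varepsilon_i(f_i - \tilde{f}_i)$ and asserts, without detail, the rate $O_p\bigl(n^{-1/2} m_n^{1/2}(m_n/n+m_n^{-2\alpha})^{1/2}\bigr)$ --- a rate that, as you correctly observe, cannot come from naive Cauchy--Schwarz (which loses a square root) and implicitly rests on the same projection structure you invoke, e.g.\ via $\varepsilon'(\tilde{f}-f) = (P_W\varepsilon)'(\tilde{f}-f) - \varepsilon' M_W R$ together with $\|P_W \varepsilon\| = O_p(m_n^{1/2})$. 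You instead keep the matrix form, compute the conditional mean $E[\varepsilon_i d_i \mid X] = (P_W)_{ii}\sigma_i^2$ exactly, bound its contribution by $\zeta(r_n)^2\, n^{-1}\mathrm{tr}(P_W) = O_p(\zeta(r_n)^2 m_n/n)$, and center the remainder. The step you flag but leave open does close: for the quadratic-in-$\varepsilon$ piece, the conditional Frobenius computation gives $E[\|\cdot\|_F^2 \mid X] \leq C\,n^{-2}\zeta(r_n)^4\,\mathrm{tr}(P_W^2) = O_p(\zeta(r_n)^4 m_n/n^2)$ using $\mathrm{tr}(P_W^2)=\mathrm{tr}(P_W)=m_n$ w.p.a.\ 1, i.e.\ a fluctuation of order $\zeta(r_n)^2 m_n^{1/2}/n$, strictly dominated by the conditional mean; the $(M_W R)$ piece contributes $O_p(\zeta(r_n)^2 m_n^{-\alpha} n^{-1/2})$, which is $O_p(\zeta(r_n)^2(m_n/n + m_n^{-2\alpha}))$ by AM--GM. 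So both routes land on the stated bound; yours buys an explicit, self-contained verification of precisely the step the paper states in one line, at the cost of a quadratic-form variance computation. One minor caution: you invoke uniformly bounded conditional variances of $\varepsilon_i^2$, whereas Assumption~\ref{errors_unified} states an unconditional $(4+\delta)$-moment bound and merely defines $\mu_{4,i}$; the paper itself treats $\mu_{4,i}$ as uniformly bounded in its other proofs (e.g.\ Lemma~\ref{asy_norm}), so your usage is consistent with the paper's, but it is worth stating the uniform bound on $\mu_{4,i}$ explicitly as what is actually being used.
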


\begin{lemma}\label{diff_r_n_small_hc}
Suppose that Assumptions of Theorem~\ref{asy_distr_t_r_n_hc} hold. Then
\begin{align}
\frac{\varepsilon' \tilde{Z} (n \tilde{\Omega})_{HC}^{-1} \tilde{Z}' \varepsilon - \varepsilon' \tilde{Z} (n \Omega_{HC})^{-1} \tilde{Z}' \varepsilon}{\sqrt{r_n}} \overset{p}{\to} 0
\end{align}
\end{lemma}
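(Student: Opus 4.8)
The goal is to show that replacing the robust variance estimate $\tilde{\Omega}_{HC} = \tilde{Z}'\tilde{\Sigma}\tilde{Z}/n$ with its conditional counterpart $\Omega_{HC} = \sum_i \sigma_i^2 \tilde{Z}_i \tilde{Z}_i'/n$ in the quadratic form $\varepsilon' \tilde{Z} (n\cdot)^{-1} \tilde{Z}' \varepsilon$ produces a difference that is $o_p(\sqrt{r_n})$. The plan is to write the numerator difference as a single quadratic form in the matrix $(n\tilde{\Omega}_{HC})^{-1} - (n\Omega_{HC})^{-1}$ and bound it using the standard operator-norm identity together with the rate bounds already supplied by Lemma~\ref{omegas_hc}. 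The key algebraic step is the resolvent identity
\[
\tilde{\Omega}_{HC}^{-1} - \Omega_{HC}^{-1} = \tilde{\Omega}_{HC}^{-1} (\Omega_{HC} - \tilde{\Omega}_{HC}) \Omega_{HC}^{-1},
\]
which lets me factor the difference of inverses into the difference $\Omega_{HC} - \tilde{\Omega}_{HC}$ flanked by two inverses whose operator norms are $O_p(1)$ by the eigenvalue bounds in Lemma~\ref{omegas_hc}.

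First I would bound the numerator by
\[
\left| \varepsilon' \tilde{Z}\big((n\tilde{\Omega}_{HC})^{-1} - (n\Omega_{HC})^{-1}\big)\tilde{Z}'\varepsilon \right| \le \frac{1}{n}\,\big\| \tilde{Z}'\varepsilon/\sqrt{n} \big\|^2 \, \big\| \tilde{\Omega}_{HC}^{-1} - \Omega_{HC}^{-1} \big\|,
\]
where the matrix norm $\|\cdot\|$ is controlled by $\|\tilde{\Omega}_{HC}^{-1}\|\,\|\Omega_{HC} - \tilde{\Omega}_{HC}\|\,\|\Omega_{HC}^{-1}\|$ via the resolvent identity above. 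By Lemma~\ref{omegas_hc}, both inverses have operator norm bounded by $C$ w.p.a.~1, and the middle factor satisfies $\|\Omega_{HC} - \tilde{\Omega}_{HC}\| \le \|\tilde{\Omega}_{HC} - \bar{\Omega}_{HC}\| + \|\bar{\Omega}_{HC} - \Omega_{HC}\| = O_p\!\big(\zeta(r_n)^2(m_n/n + m_n^{-2\alpha}) + \zeta(r_n) r_n^{1/2}/n^{1/2}\big)$. The remaining factor $\|\tilde{Z}'\varepsilon/\sqrt{n}\|^2$ is a quadratic form whose expectation, conditional on the regressors, equals $\mathrm{tr}(\Omega_{HC}) = O_p(r_n)$ since $\Omega_{HC}$ has bounded eigenvalues and dimension $r_n$; a Markov bound then gives $\|\tilde{Z}'\varepsilon/\sqrt{n}\|^2 = O_p(r_n)$.

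Combining these, the numerator is $O_p\!\big(r_n \cdot [\zeta(r_n)^2(m_n/n + m_n^{-2\alpha}) + \zeta(r_n) r_n^{1/2}/n^{1/2}]\big)$, so after dividing by $\sqrt{r_n}$ I need to verify that
\[
r_n^{1/2}\,\zeta(r_n)^2(m_n/n + m_n^{-2\alpha}) \to 0 \quad\text{and}\quad r_n\,\zeta(r_n)/n^{1/2} \to 0.
\]
The first expression is exactly rate condition~\eqref{rate_cond_r_1_hc}, and the second is exactly rate condition~\eqref{rate_cond_r_2_hc}, so both vanish and the claim follows. I expect the main obstacle to be the bookkeeping in the quadratic-form bound: the inequality $\varepsilon'\tilde Z A \tilde Z'\varepsilon \le \|\tilde Z'\varepsilon\|^2 \,\|A\|$ is clean only when the sandwiched matrix is treated in operator norm, so I must be careful that the eigenvalue bounds from Lemma~\ref{omegas_hc} are applied to control the operator (not Euclidean) norm of the inverses, and that the $\mathrm{tr}(\Omega_{HC}) = O_p(r_n)$ estimate for $\|\tilde Z'\varepsilon/\sqrt n\|^2$ is justified by the conditional variance computation rather than asserted. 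Everything else reduces to substituting the two supplied rate conditions.
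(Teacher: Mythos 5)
Your proposal is correct and follows essentially the same route as the paper's proof: both reduce the difference of inverses to the difference $\tilde{\Omega}_{HC} - \Omega_{HC}$ (the paper via a bound of the form $\|\Delta\| + C\|\Delta\|^2$, you via the exact resolvent identity with $\|\tilde{\Omega}_{HC}^{-1}\|$ bounded w.p.a.\ 1), invoke the eigenvalue and rate bounds from Lemma~\ref{omegas_hc}, control the score term by the conditional trace computation (the paper writes this as $\|\Omega_{HC}^{-1} n^{-1}\tilde{Z}'\varepsilon\|^2 = O_p(r_n/n)$, equivalent to your $\|\tilde{Z}'\varepsilon/\sqrt{n}\|^2 = O_p(r_n)$), and verify that conditions~(\ref{rate_cond_r_1_hc}) and~(\ref{rate_cond_r_2_hc}) make the combined bound $o_p(\sqrt{r_n})$. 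One minor slip: your displayed inequality carries a spurious factor $1/n$ --- it should read $\bigl|\varepsilon'\tilde{Z}\bigl((n\tilde{\Omega}_{HC})^{-1}-(n\Omega_{HC})^{-1}\bigr)\tilde{Z}'\varepsilon\bigr| \leq \bigl\|\tilde{Z}'\varepsilon/\sqrt{n}\bigr\|^2\,\bigl\|\tilde{\Omega}_{HC}^{-1}-\Omega_{HC}^{-1}\bigr\|$ --- but your final combination uses the correct order $O_p(r_n)$ for the score term, so the conclusion as stated is right.
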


\begin{lemma}\label{asy_norm}
Suppose that Assumptions of Theorem~\ref{asy_distr_t_r_n_hc} hold. Let $\mathcal{H}_n =  \tilde{Z} (n \Omega_{HC})^{-1} \tilde{Z}'$. Then
\begin{align}
\frac{\varepsilon' \mathcal{H}_n \varepsilon - r_n}{\sqrt{2 r_n}} \overset{d}{\to} N(0,1)
\end{align}
\end{lemma}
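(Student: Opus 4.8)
The plan is to condition on the regressors, reduce the quadratic form to one in a genuine orthogonal projection matrix, compute its exact conditional mean and its asymptotic conditional variance, and then obtain asymptotic normality via a martingale central limit theorem. Writing $\Sigma = \mathrm{diag}(\sigma_1^2,\dots,\sigma_n^2)$, I would first note that $n\Omega_{HC} = \tilde{Z}'\Sigma\tilde{Z}$, so with $G := \Sigma^{1/2}\tilde{Z}$ and $u := \Sigma^{-1/2}\varepsilon$ one has
\[
\varepsilon'\mathcal{H}_n\varepsilon = u'\,\Sigma^{1/2}\tilde{Z}(\tilde{Z}'\Sigma\tilde{Z})^{-1}\tilde{Z}'\Sigma^{1/2}\,u = u'P_G u, \qquad P_G := G(G'G)^{-1}G'.
\]
By Lemma~\ref{omegas_hc} the eigenvalues of $\Omega_{HC}$ are bounded away from $0$ and $\infty$ w.p.a.\ 1, so $G$ has full column rank $r_n$ and $P_G$ is an orthogonal projection of rank $r_n$; the standardized errors satisfy $E[u_i\mid X]=0$, $E[u_i^2\mid X]=1$, with uniformly bounded $(4+\delta)$-th moments by Assumption~\ref{errors_unified} (since $\sigma_i^2$ is bounded away from $0$ and $\infty$). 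Conditioning on $\{X_i\}_{i=1}^n$ throughout, the entries $a_{ij}:=(P_G)_{ij}$ are fixed and $E[u'P_G u\mid X]=\mathrm{tr}(P_G)=r_n$ exactly, which is precisely the centering in the statement.

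Next I would pin down the conditional variance. For an idempotent $P_G$ and conditionally independent $u_i$,
\[
\mathrm{Var}(u'P_G u\mid X) = \sum_i a_{ii}^2\big(E[u_i^4\mid X]-1\big) + 2\sum_{i\ne j} a_{ij}^2 = 2r_n + \sum_i a_{ii}^2\big(E[u_i^4\mid X]-3\big),
\]
where I used $\sum_{i,j}a_{ij}^2 = \mathrm{tr}(P_G^2) = \mathrm{tr}(P_G) = r_n$. The governing quantity is the maximal leverage $h_{\max}:=\max_i a_{ii}$. Since $a_{ii} = \sigma_i^2\,\tilde{Z}_i'(n\Omega_{HC})^{-1}\tilde{Z}_i \le C\,\|\tilde{Z}_i\|^2/(n\,\lambda_{\min}(\Omega_{HC}))$, Lemma~\ref{lemma_Z} (giving $\sup_x\|\tilde{Z}^{r}(x)\|\le\zeta(r_n)(1+o_p(1))$) together with the eigenvalue bound of Lemma~\ref{omegas_hc} yields $h_{\max}=O_p(\zeta(r_n)^2/n)=o_p(1)$, because rate condition~(\ref{rate_cond_r_2_hc}) forces $\zeta(r_n)^2/n\to 0$. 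Using $\sum_i a_{ii}^2 \le h_{\max}\sum_i a_{ii}=h_{\max}r_n$ and the bounded fourth moments, the remainder is $o_p(r_n)$, so $\mathrm{Var}(u'P_G u\mid X)=2r_n(1+o_p(1))$.

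For the limiting distribution I would write $u'P_G u - r_n = \sum_{i=1}^n D_i$ with $D_i := a_{ii}(u_i^2-1) + 2u_i\sum_{j<i}a_{ij}u_j$. Enlarging the filtration so that $\mathcal{F}_i := \sigma(\{X_l\}_{l=1}^n,\varepsilon_1,\dots,\varepsilon_i)$ makes the $a_{ij}$ measurable with respect to $\mathcal{F}_0$ and renders $\{D_i,\mathcal{F}_i\}$ a martingale difference sequence, so the martingale CLT of \citet{scott_1973} applies and delivers the \emph{unconditional} limit once its hypotheses are checked. The crucial simplification is that, $P_G$ being idempotent, $\sum_j a_{ij}^2 = (P_G^2)_{ii} = a_{ii}\le h_{\max}$, whence $\sum_i\big(\sum_j a_{ij}^2\big)^2 = \sum_i a_{ii}^2 \le h_{\max}r_n = o_p(r_n)$. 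This one bound drives both CLT conditions: the normalized conditional variance $\sum_i E[D_i^2\mid\mathcal{F}_{i-1}]/(2r_n)$ has mean $1+o(1)$ and concentrates because its own variance involves only fourth-order sums of the $a_{ij}$ that are $o(r_n^2)$; and the Lyapunov ratio of order $2+\delta/2$ vanishes, the order being dictated by the available $(4+\delta)$ moments since $D_i$ is quadratic in $u_i$.

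The main obstacle is the martingale-CLT bookkeeping in this last step, in particular the conditional-variance stability $\mathrm{Var}\big(\sum_i E[D_i^2\mid\mathcal{F}_{i-1}]\big)=o(r_n^2)$, which requires expanding $\sum_i\big(\sum_{j<i}a_{ij}u_j\big)^2$ and bounding the resulting fourth-order sums of projection entries, and verifying the Lyapunov condition under only $4+\delta$ moments. Both are controlled by $h_{\max}=o_p(1)$ and the uniform moment bound of Assumption~\ref{errors_unified}. Combining the CLT for $\sum_i D_i/\sqrt{2r_n}$ with $\mathrm{Var}(u'P_G u\mid X)=2r_n(1+o_p(1))$ then gives $(\varepsilon'\mathcal{H}_n\varepsilon - r_n)/\sqrt{2r_n}\overset{d}{\to}N(0,1)$, as claimed.
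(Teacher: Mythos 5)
Your proposal is correct and takes essentially the same route as the paper's own proof: the paper also centers the quadratic form at its exact conditional mean $r_n$, decomposes $\varepsilon'\mathcal{H}_n\varepsilon - r_n$ into the martingale differences $(\varepsilon_i^2-\sigma_i^2)h_{ii,n} + 2\varepsilon_i\sum_{j<i}\varepsilon_j h_{ij,n}$, and verifies the Lyapunov and conditional-variance conditions of Theorem 2 of \citet{scott_1973}, with the key bounds driven by the idempotency of $\Sigma^{1/2}\mathcal{H}_n\Sigma^{1/2}$ and the leverage estimates from Lemmas~\ref{lemma_Z} and~\ref{omegas_hc}. Your only departure is organizational: you standardize to $u=\Sigma^{-1/2}\varepsilon$ and the exact projection $P_G$ at the outset, whereas the paper introduces the same idempotent matrix $\tilde{\mathcal{H}}_n=\Sigma^{1/2}\mathcal{H}_n\Sigma^{1/2}$ only partway through the variance bookkeeping.
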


\begin{lemma}\label{psi_exp_var}
Suppose that Assumptions of Theorem~\ref{asy_distr_t_r_n_hc} hold. Then, conditional on the elements of $\mathcal{H}_n$,
\[
E\left[ \frac{1}{\sqrt{2 r_n}} \left( \varepsilon' \mathcal{H}_n \varepsilon - r_n \right) \right] = 0
\]
and
\[
Var \left( \frac{1}{\sqrt{2 r_n}} \left( \varepsilon' \mathcal{H}_n \varepsilon - r_n \right) \right) \overset{p}{\to} 1
\]
\end{lemma}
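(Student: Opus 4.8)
The plan is to condition throughout on the regressors $X = (X_1,\ldots,X_n)$; since $\tilde{Z} = M_W Z$ and $\Omega_{HC} = \sum_i \sigma_i^2 \tilde{Z}_i \tilde{Z}_i'/n$ are functions of $X$ only, this fixes $\mathcal{H}_n$, $\Sigma := diag(\sigma_1^2,\ldots,\sigma_n^2)$, and the conditional moments $\mu_{4,i}$, while by Assumption~\ref{dgp} the $\varepsilon_i$ remain independent across $i$ with $E[\varepsilon_i \mid X]=0$, $E[\varepsilon_i^2 \mid X]=\sigma_i^2$, and $E[\varepsilon_i^4\mid X]=\mu_{4,i}$. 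Writing $\mathcal{H}_n=(h_{ij})$, which is symmetric and positive semidefinite, I would expand the quadratic form as $\varepsilon'\mathcal{H}_n\varepsilon = \sum_i h_{ii}\varepsilon_i^2 + \sum_{i\ne j} h_{ij}\varepsilon_i\varepsilon_j$. The off-diagonal terms vanish in conditional mean, so $E[\varepsilon'\mathcal{H}_n\varepsilon\mid X] = \sum_i h_{ii}\sigma_i^2 = tr(\mathcal{H}_n\Sigma)$. The cornerstone is the exact identity $\tilde{Z}'\Sigma\tilde{Z} = \sum_i \sigma_i^2 \tilde{Z}_i\tilde{Z}_i' = n\Omega_{HC}$, which is merely the definition of $\Omega_{HC}$; cyclicity of the trace then gives $tr(\mathcal{H}_n\Sigma) = tr((n\Omega_{HC})^{-1}\tilde{Z}'\Sigma\tilde{Z}) = tr(I_{r_n}) = r_n$. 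Hence $E[(\varepsilon'\mathcal{H}_n\varepsilon - r_n)/\sqrt{2r_n}\mid\mathcal{H}_n]=0$ exactly, which is the first claim.

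For the second claim I would use the standard formula for the variance of a quadratic form in independent mean-zero variables,
\[
Var(\varepsilon'\mathcal{H}_n\varepsilon\mid X) = \sum_i h_{ii}^2(\mu_{4,i}-\sigma_i^4) + 2\sum_{i\ne j} h_{ij}^2 \sigma_i^2\sigma_j^2,
\]
where the cross terms combine into the stated form because $\mathcal{H}_n$ is symmetric. The same algebraic device collapses the leading term: since $\sum_{i,j} h_{ij}^2\sigma_i^2\sigma_j^2 = tr((\Sigma\mathcal{H}_n)^2)$ and $\tilde{Z}'\Sigma\tilde{Z} = n\Omega_{HC}$, cyclicity yields $tr((\Sigma\mathcal{H}_n)^2) = tr((n\Omega_{HC})^{-1}(n\Omega_{HC})(n\Omega_{HC})^{-1}(n\Omega_{HC})) = r_n$. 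Peeling off the diagonal gives $2\sum_{i\ne j} h_{ij}^2\sigma_i^2\sigma_j^2 = 2r_n - 2\sum_i h_{ii}^2\sigma_i^4$, so that after dividing by $2r_n$ the conditional variance equals $1 + (2r_n)^{-1}\sum_i h_{ii}^2(\mu_{4,i}-3\sigma_i^4)$.

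What remains, and what I expect to be the main obstacle, is showing that the remainder $(2r_n)^{-1}\sum_i h_{ii}^2(\mu_{4,i}-3\sigma_i^4)$ converges to zero in probability; this is a leverage-type control on the diagonal of $\mathcal{H}_n$. I would first bound $h_{ii} = \tilde{Z}_i'(n\Omega_{HC})^{-1}\tilde{Z}_i \le \|\tilde{Z}_i\|^2/(n\lambda_{\min}(\Omega_{HC}))$ and combine Lemma~\ref{lemma_Z} ($\sup_{x}\|\tilde{Z}^{r}(x)\|\le\zeta(r)(1+o_p(1))$) with the eigenvalue bound $\lambda_{\min}(\Omega_{HC})\ge 1/C$ w.p.a.\ 1 from Lemma~\ref{omegas_hc} to obtain the uniform bound $\max_i h_{ii} \le C\zeta(r_n)^2/n\,(1+o_p(1))$. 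Using $\sum_i h_{ii}^2 \le (\max_i h_{ii})\sum_i h_{ii}$ together with $\sum_i h_{ii} = tr(\mathcal{H}_n) = tr(\Omega_{HC}^{-1}\,\tilde{Z}'\tilde{Z}/n) = O_p(r_n)$---which follows from Lemmas~\ref{lemma_ZZ} and~\ref{omegas_hc}---and the uniform boundedness of $\mu_{4,i}$ and $\sigma_i^4$ from Assumption~\ref{errors_unified}, I would conclude $(2r_n)^{-1}\sum_i h_{ii}^2|\mu_{4,i}-3\sigma_i^4| = O_p(\zeta(r_n)^2/n)$. Finally, rate condition~(\ref{rate_cond_r_2_hc}), $\zeta(r_n) r_n/n^{1/2}\to 0$, implies $\zeta(r_n)^2/n\to 0$ (as $r_n\ge 1$), so the remainder is $o_p(1)$ and the conditional variance converges in probability to $1$.
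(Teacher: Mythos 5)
Your proof is correct and follows essentially the same route as the paper: the exact trace identities $tr(\mathcal{H}_n\Sigma)=r_n$ and $tr\left((\Sigma^{1/2}\mathcal{H}_n\Sigma^{1/2})^2\right)=r_n$ (exploiting $\tilde{Z}'\Sigma\tilde{Z}=n\Omega_{HC}$), the quadratic-form variance formula yielding the remainder $(2r_n)^{-1}\sum_i h_{ii,n}^2(\mu_{4,i}-3\sigma_i^4)$, and a leverage bound on the diagonal of $\mathcal{H}_n$ killed by rate condition~(\ref{rate_cond_r_2_hc}). The only (immaterial) deviation is your bound $\sum_i h_{ii,n}^2 \leq (\max_i h_{ii,n})\sum_i h_{ii,n} = O_p(\zeta(r_n)^2 r_n/n)$ via the sup-norm bound of Lemma~\ref{lemma_Z}, where the paper instead uses $|h_{ij,n}| = O_p(n^{-1}\|\tilde{Z}_i\|\,\|\tilde{Z}_j\|) = O_p(n^{-1}r_n)$ to get $O_p(n^{-1}r_n^2)$; both give an $o_p(1)$ remainder under the same rate condition (yours is in fact the more careful uniform-in-$i$ argument, at the cost of a slightly weaker rate for power series).
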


\subsection{Proof of Theorem~\ref{asy_distr_t_r_n_hc}}

Recall that $\tilde{\Omega}_{HC} = \tilde{Z}' \tilde{\Sigma} \tilde{Z}/n$. The test statistic becomes
\begin{align}\label{eqn_a1_hc}
t_{HC, r_n} = \frac{\xi_{HC} - r_n}{\sqrt{2 r_n}} = \frac{\tilde{\varepsilon}' \tilde{Z} (\tilde{Z}' \tilde{\Sigma} \tilde{Z})^{-1} \tilde{Z}' \tilde{\varepsilon} - r_n}{\sqrt{2 r_n}} = \frac{n^{-1} (\varepsilon + R)' \tilde{Z} \tilde{\Omega}_{HC}^{-1} \tilde{Z}' (\varepsilon + R) - r_n}{\sqrt{2 r_n}}
\end{align}
The proof consists of several steps.

Step 1. Decompose the test statistic and bound the remainder terms.
\begin{align}\label{eqn_a2_hc}
n^{-1} (\varepsilon + R)' \tilde{Z} \tilde{\Omega}_{HC}^{-1} \tilde{Z}' (\varepsilon + R) &= n^{-1} \varepsilon' \tilde{Z} \tilde{\Omega}_{HC}^{-1} \tilde{Z}' \varepsilon + 2 n^{-1} R' \tilde{Z} \tilde{\Omega}_{HC}^{-1} \tilde{Z}' \varepsilon + n^{-1} R' \tilde{Z} \tilde{\Omega}_{HC}^{-1} \tilde{Z}' R
\end{align}

By Lemma~\ref{lemma_ZZ}, the smallest and largest eigenvalues of $\tilde{Z}'\tilde{Z}/n$ converge to one. Because $\tilde{Z}' \tilde{Z}/n$ and $\tilde{Z} \tilde{Z}'/n$ have the same nonzero eigenvalues, $\lambda_{\max}(\tilde{Z} \tilde{Z}'/n)$ converges in probability to 1. Moreover, by Lemma~\ref{omegas_hc}, the eigenvalues of $\tilde{\Omega}_{HC}$ are bounded below and above. Thus, by Assumption \ref{series_approx},  w.p.a.1,
\[
n^{-1} R' \tilde{Z} \tilde{\Omega}_{HC}^{-1} \tilde{Z}' R \leq C R' (n^{-1} \tilde{Z} \tilde{Z}') R \leq C R' R = O_p(n m_n^{-2\alpha})
\]

Next, w.p.a.1,
\[
\Big{|} n^{-1} R' \tilde{Z} \tilde{\Omega}_{HC}^{-1} \tilde{Z}' \varepsilon \Big{|} \leq \Big{|} C \lambda_{\max}(\tilde{Z} \tilde{Z}'/n) R' \varepsilon \Big{|} \leq \Big{|} C R' \varepsilon \Big{|} = O_p(n^{1/2} m_n^{-\alpha})
\]

Thus,
\[
n^{-1} (\varepsilon + R)' \tilde{Z} \tilde{\Omega}_{HC}^{-1} \tilde{Z}' (\varepsilon + R) = n^{-1} \varepsilon' \tilde{Z} \tilde{\Omega}_{HC}^{-1} \tilde{Z}' \varepsilon + O_p(n m_n^{-2\alpha}) + O_p(n^{1/2} m_n^{-\alpha})
\]

Step 2. Deal with the leading term.

By Lemma~\ref{diff_r_n_small_hc},
\begin{align}\label{eqn_a3_hc}
\frac{\varepsilon' \tilde{Z} (n \tilde{\Omega}_{HC})^{-1} \tilde{Z}' \varepsilon}{\sqrt{r_n}} = \frac{\varepsilon' \tilde{Z} (n \Omega_{HC})^{-1} \tilde{Z}' \varepsilon}{\sqrt{r_n}} + o_p(1)
\end{align}

Finally, by Lemma~\ref{asy_norm},
\begin{align}\label{eqn_a4_hc}
\frac{\varepsilon' \tilde{Z} (n \Omega_{HC})^{-1} \tilde{Z}' \varepsilon - r_n}{\sqrt{2 r_n}} \overset{d}{\to} N(0,1)
\end{align}

The result of the theorem follows from Equations~(\ref{eqn_a2_hc}), (\ref{eqn_a3_hc}), and (\ref{eqn_a4_hc}). \qed

\subsection{Proof of Theorem~\ref{global_alternative_t_r}}

Note that
\[
\xi_{HC} = \tilde{\varepsilon}' \tilde{Z} (\tilde{Z}' \tilde{\Sigma} \tilde{Z})^{-1} \tilde{Z}' \tilde{\varepsilon}
\]
where $\tilde{\varepsilon} = M_W Y = M_W (W \beta_1 + Z \beta_2 + R + \varepsilon) = \tilde{Z} \beta_2 + M_W R + M_W \varepsilon$. Thus,
\begin{align*}
&t_{HC, r_n} = \frac{\xi_{HC} - r_n}{\sqrt{2 r_n}} = \frac{n^{-1} \varepsilon' \tilde{Z} \tilde{\Omega}_{HC}^{-1} \tilde{Z}' \varepsilon - r_n}{\sqrt{2 r_n}} + \frac{n^{-1} R' \tilde{Z} \tilde{\Omega}_{HC}^{-1} \tilde{Z}' R}{\sqrt{2 r_n}} + 2 \frac{n^{-1} R' \tilde{Z} \tilde{\Omega}_{HC}^{-1} \tilde{Z}' \varepsilon}{\sqrt{2 r_n}} \\
&+\frac{n^{-1} \beta_2 \tilde{Z}' \tilde{Z} \tilde{\Omega}_{HC}^{-1} \tilde{Z}' \tilde{Z} \beta_2}{\sqrt{2 r_n}}
 + 2 \frac{n^{-1} R' \tilde{Z} \tilde{\Omega}_{HC}^{-1} \tilde{Z}' \tilde{Z} \beta_2}{\sqrt{2 r_n}} + 2 \frac{n^{-1} \varepsilon' \tilde{Z} \tilde{\Omega}_{HC}^{-1} \tilde{Z}' \tilde{Z} \beta_2}{\sqrt{2 r_n}}
\end{align*}

As shown in the proof of Theorem~\ref{asy_distr_t_r_n_hc}, the first term converges in distribution to $N(0,1)$ and thus is $O_p(1)$. Because the eigenvalues of $\tilde{\Omega}_{HC}$ are bounded away from zero, $\sup_{x \in \mathcal{X}}{|R(x)|} \overset{p}{\to} 0$ and $\varepsilon_i$ have mean zero, the second term is $o_p(n r_n^{-1/2})$, the third term is $o_p(n^{1/2} r_n^{-1/2})$, the fifth term is $o_p(n r_n^{-1/2})$, and the last term is $o_p(n^{1/2} r_n^{-1/2})$. As for the fourth term, because $\| \tilde{\Omega}_{HC} - \Omega_{\eta,HC} \| \overset{p}{\to} 0$ and the eigenvalues of $\Omega_{\eta,HC}$ are bounded away from zero, the eigenvalues of $\tilde{\Omega}_{HC}$ are also bounded away from zero w.p.a. 1. Thus, w.p.a. 1,
\[
\frac{n^{-1} \beta_2 \tilde{Z}' \tilde{Z} \tilde{\Omega}_{HC}^{-1} \tilde{Z}' \tilde{Z} \beta_2}{\sqrt{2 r_n}} \leq C \frac{n \beta_2' \beta_2}{\sqrt{2 r_n}} = O_p(n r_n^{-1/2})
\]

Thus, the fourth term dominates the other terms, and $t_{HC, r_n} = O_p(n r_n^{-1/2})$. Note also that $n^{-1} \beta_2 \tilde{Z}' \tilde{Z} \tilde{\Omega}_{HC}^{-1} \tilde{Z}' \tilde{Z} \beta_2 \geq 0$. The result of the theorem follows.  \qed

\subsection{Sketch of Proof of Theorem~\ref{asy_distr_t_r_n_hc_boot}}

The bootstrap test statistic is given by
\[
\xi_{HC}^* = \tilde{\varepsilon}^{*\prime} \tilde{Z} (\tilde{Z}' \tilde{\Sigma}^* \tilde{Z})^{-1} \tilde{Z}' \tilde{\varepsilon}^*,
\]
where $\tilde{\varepsilon}^* = M_W \varepsilon^*$ and $\tilde{\Sigma}^* = diag(\tilde{\varepsilon}_1^{*2},...,\tilde{\varepsilon}_m^{*2})$. Note that because $M_W$ is idempotent,
\[
\xi_{HC}^* = \varepsilon^{*\prime} \tilde{Z} (\tilde{Z}' \tilde{\Sigma}^* \tilde{Z})^{-1} \tilde{Z}' \varepsilon^*
\]

Next, one can use an analog of Lemma~\ref{diff_r_n_small_hc} to show that
\[
\frac{ \varepsilon^{*\prime} \tilde{Z} (\tilde{Z}' \tilde{\Sigma}^* \tilde{Z})^{-1} \tilde{Z}' \varepsilon^* -  \varepsilon^{*\prime} \tilde{Z} (\tilde{Z}' \Sigma^* \tilde{Z})^{-1} \tilde{Z}' \varepsilon^*}{\sqrt{r_n}} \overset{p}{\to} 0,
\]
where $\Sigma^* = diag(\tilde{\varepsilon}_1^2,...,\tilde{\varepsilon}_n^2)$.

Finally, one can apply Lemma~\ref{asy_norm} to $\varepsilon^{*\prime} \tilde{Z} (n \Omega_{HC}^*)^{-1} \tilde{Z}' \varepsilon^*$, where $\Omega_{HC}^* =  \tilde{Z}' \Sigma^* \tilde{Z}/n$, conditional on the data $\mathcal{U}_n$, to obtain convergence in distribution in probability. \qed

\subsection{Proofs of Auxiliary Lemmas}

\begin{proof}[Proof of Lemma~\ref{lemma_ZZ}]
Note that
\[
\tilde{Z}'\tilde{Z}/n = Z'Z/ n - (Z'W/n) (W'W/n)^{-1} (W'Z/ n)
\]

Thus,
\[
\| \tilde{Z}'\tilde{Z}/n - Z'Z/ n \| = \| (Z'W/n) (W'W/n)^{-1} (W'Z/ n) \|
\]

By Lemma 15.2 in \citet{li_racine_2007}, $E[ \| P'P/n - I_{k_n}\|^2 ] = O_p(\zeta(k_n)^2 k_n/n)$ and $\| P'P/n - I_{k_n}\| = O_p(\zeta(k_n) \sqrt{k_n/n})$. Similarly, it can be shown that $\| W'W/n - I_{m_n}\| = O_p(\zeta(m_n) \sqrt{m_n/n})$ and $\| Z'Z/ n - I_{r_n}\| = O_p(\zeta(r_n) \sqrt{r_n/n})$. Moreover, note that
\[
P'P/n - I_{k_n} = \begin{pmatrix}
W'W/n & W'Z/ n \\
Z'W/n & Z'Z/ n
\end{pmatrix} - 
\begin{pmatrix}
I_{m_n} & \textbf{0}_{m_n \times r_n} \\
\textbf{0}_{r_n \times m_n} & I_{r_n}
\end{pmatrix}
\]

Hence, $\| W'Z/ n \| = O_p(\zeta(k_n) \sqrt{k_n/n})$. Thus, the eigenvalues of $W'W/n$ are bounded below and above w.p.a.1, and
\begin{align*}
&\Big{\|} (Z'W/n) (W'W/n)^{-1} (W'Z/ n) \Big{\|} \leq C \Big{\|} (Z'W/n) (W'Z/ n) \Big{\|} \\
&\leq C \Big{\|} (Z'W/n) \Big{\|} \; \Big{\|} (W'Z/ n) \Big{\|} = O_p(\zeta(k_n)^2 k_n/n),
\end{align*}
where the last inequality is due to the fact that $\|A B \|^2 \leq \| A \|^2 \| B \|^2$.

Note that $r_n/k_n \to C \leq 1$. Thus, if condition~(\ref{rate_cond_r_2_hc}) holds, then $\zeta(k_n)^2 k_n r_n^{1/2}/n \to 0$, and thus $\| \tilde{Z}'\tilde{Z}/n - Z'Z/ n \| = o_p(1/\sqrt{r_n})$. This also implies that the smallest and largest eigenvalues of $\tilde{Z}'\tilde{Z}/n$ converge to one. 
\end{proof}

\begin{proof}[Proof of Lemma~\ref{lemma_Z}]

Note that $\tilde{Z}_i = Z_i - (Z'W/n) (W'W/n)^{-1} W_i$. Thus,
\begin{align*}
\sup_{x \in \mathcal{X}}{\| \tilde{Z}^{r}(x) \|} &= \sup_{x \in \mathcal{X}}{\Big{\|} Z^{r}(x) - (Z'W/n) (W'W/n)^{-1} W^{m}(x) \Big{\|}} \\
&\leq \sup_{x \in \mathcal{X}}{\| Z^{r}(x) \|} + \Big{\|}  (Z'W)/n (W'W/n)^{-1} \Big{\|}  \sup_{x \in \mathcal{X}}{\| W^{m}(x) \|} \\
&\leq \zeta(r) + C O_p \left( \zeta(k_n) \sqrt{k_n/n} \right) \zeta(m)  = \zeta(r)(1+o_p(1))
\end{align*}

Similarly,
\begin{align*}
&\| \tilde{Z}_i \| = \Big{\|} Z_i - (Z'W/n) (W'W/n)^{-1} W_i \Big{\|} \leq \| Z_i \| + \Big{\|} (Z'W/n) (W'W/n)^{-1} W_i \big{\|} \\
&= O_p(r_n^{1/2}) + C O_p \left( \zeta(k_n) \sqrt{k_n/n} \right) O_p(m_n^{1/2}) = O_p(r_n^{1/2})
\end{align*}

In both results, I rely on the following:
\[
\zeta(k_n) k_n^{1/2} m_n^{1/2}/(r_n^{1/2} n^{1/2}) \sim \zeta(k_n) m_n^{1/2}/n^{1/2} \to 0
\]
by rate condition~(\ref{rate_cond_r_2_hc}).

\end{proof}

\begin{proof}[Proof of Lemma~\ref{omegas_hc}]

First,
\begin{align*}
&\| \tilde{\Omega}_{HC} - \bar{\Omega}_{HC} \| = \| \sum_{i}{\tilde{Z}_i \tilde{Z}_i' (\tilde{\varepsilon}_i^2 - \varepsilon_i^2)}/n \| =  \| \sum_{i}{\tilde{Z}_i \tilde{Z}_i' \left( (\varepsilon_i + f_i - \tilde{f}_i)^2 - \varepsilon_i^2 \right)}/n \| \\
&= \| \sum_{i}{\tilde{Z}_i \tilde{Z}_i' \left( (f_i - \tilde{f}_i)^2 + 2 \varepsilon_i (f_i - \tilde{f}_i) \right)}/n \| \leq \sup_{i}{\| \tilde{Z}_i \|^2} \Big{|} \sum_{i}{\left( (f_i - \tilde{f}_i)^2 + 2 \varepsilon_i (f_i - \tilde{f}_i) \right)}/n \Big{|} \\
&= \zeta(r_n)^2 \left[ O_p \left( m_n/n + m_n^{-2\alpha} \right) + O_p \left( n^{-1/2} m_n^{1/2} (m_n/n + m_n^{-2\alpha})^{1/2} \right) \right] = O_p \left( \zeta(r_n)^2 (m_n/n + m_n^{-2\alpha}) \right)
\end{align*}

The following result can be obtained exactly as in Lemma A.6 in \citet{donald_et_al_2003}:
\begin{align*}
\| \bar{\Omega}_{HC} - \Omega_{HC} \| &= \| \sum_{i}{\tilde{Z}_i \tilde{Z}_i' (\varepsilon_i^2 - \sigma_i^2)}/n \| = O_p(\zeta(r_n) \sqrt{r_n/n})
\end{align*}

Finally, the results about the eigenvalues can also be obtained in the same way as in Lemma A.6 in \citet{donald_et_al_2003}. 

\end{proof}

\begin{proof}[Proof of Lemma~\ref{diff_r_n_small_hc}]

Given the result of Lemma~\ref{omegas_hc},
\begin{align*}
&\Bigg{|} \frac{\varepsilon' \tilde{Z} (n \tilde{\Omega})_{HC}^{-1} \tilde{Z}' \varepsilon}{\sqrt{2 r_n}} - \frac{\varepsilon' \tilde{Z} (n \Omega_{HC})^{-1} \tilde{Z}' \varepsilon}{\sqrt{2r_n}} \Bigg{|} = \Bigg{|} \frac{n (n^{-1} \varepsilon' \tilde{Z}) (\tilde{\Omega}_{HC}^{-1} - \Omega_{HC}^{-1}) (n^{-1} \tilde{Z}' \varepsilon)}{\sqrt{2r_n}} \Bigg{|} \\
&\leq \frac{n \| \Omega_{HC}^{-1} n^{-1} \tilde{Z}' \varepsilon \|^2 (\| \tilde{\Omega}_{HC} - \Omega_{HC} \| + C \| \tilde{\Omega}_{HC} - \Omega_{HC} \|^2)}{\sqrt{2r_n}}
\end{align*}

Next,
\[
\| \Omega_{HC}^{-1} (n^{-1} \tilde{Z}' \varepsilon) \| \leq C \sqrt{(n^{-1} \varepsilon' \tilde{Z}) \Omega_{HC} ^{-1} (n^{-1} \tilde{Z}' \varepsilon)} = O_p(\sqrt{r_n/n})
\]

Then
\begin{align*}
\frac{n \| \Omega_{HC}^{-1} n^{-1} \tilde{Z}' \varepsilon \|^2 (\| \tilde{\Omega}_{HC} - \Omega_{HC} \| + C \| \tilde{\Omega}_{HC} - \Omega_{HC} \|^2)}{\sqrt{2r_n}} = \frac{n O_p(r_n/n) o_p(1/\sqrt{r_n})}{\sqrt{2r_n}} = \frac{o_p(\sqrt{r_n})}{\sqrt{2r_n}} = o_p(1),
\end{align*}
provided that $\|\tilde{\Omega}_{HC} - \Omega_{HC} \| = o_p(1/\sqrt{r_n})$, which holds under rate conditions~(\ref{rate_cond_r_1_hc}) and (\ref{rate_cond_r_2_hc}).

\end{proof}

\begin{proof}[Proof of Lemma~\ref{asy_norm}]

Denote $\mathcal{H}_n =  \tilde{Z} (n \Omega_{HC})^{-1} \tilde{Z}'$. Because the limiting distribution is independent of the elements of $\mathcal{H}_n$, one can prove the results conditional on these. All expectations below will be conditional on the elements of $\mathcal{H}_n$.
Denote by $h_{ij,n}$ the $(i,j)$th element of $\mathcal{H}_n$. Note that
\begin{align*}
&\frac{1}{\sqrt{2 r_n}} \left( \varepsilon' \tilde{Z} (n \Omega_{HC})^{-1} \tilde{Z}' \varepsilon - r_n \right) = \frac{1}{\sqrt{2 r_n}} \left( \varepsilon' \mathcal{H}_n \varepsilon - r_n \right) \\
& = \frac{1}{\sqrt{2 r_n}} \left( \sum_{i=1}^{n}{(\varepsilon_i^2 - \sigma_i^2) h_{ii,n}} + 2 \sum_{i=1}^{n}{\varepsilon_i \sum_{j < i}{\varepsilon_j h_{ij,n}}} \right) = \sum_{i=1}^{n}{\psi_{in}},
\end{align*}
with
\[
\psi_{in} = \frac{1}{\sqrt{2 r_n}} (\varepsilon_i^2 - \sigma_i^2) h_{ii,n} +  \frac{2}{\sqrt{2 r_n}} \varepsilon_i \sum_{j < i}{\varepsilon_j h_{ij,n}}
\]

Next, I follow the approach used in \citet{gupta_2018} and verify the conditions of Theorem 2 in \citet{scott_1973}. I need to prove:
\begin{align}
&\sum_{i=1}^{n}{\psi_{in}^2 \mathbbm{1}\{ |\psi_{in} | \geq \epsilon \}} \overset{p}{\to} 0, \text{ for any } \epsilon > 0, \label{scott_suff_cond_1} \\
&\sum_{i=1}^{n}{E[\psi_{in}^2 | \varepsilon_j, j < i]} \overset{p}{\to} 1 \label{scott_suff_cond_2}
\end{align}

To show (\ref{scott_suff_cond_1}), I check the sufficient Lyapunov condition:
\[
\sum_{i=1}^{n}{E[|\psi_{in}|^{2 + \delta/2}]} \overset{p}{\to} 0
\]

To verify it, I show that
\begin{align}
&\frac{1}{r_n^{1+\delta/4}} \sum_{i=1}^{n}{E\left[ \big{|} \varepsilon_i^2 - \sigma_i^2 \big{|}^{2+\delta/2} \big{|} h_{ii,n} \big{|}^{2+\delta/2}\right]} \overset{p}{\to} 0 \label{lyapunov_1} \\
&\frac{1}{r_n^{1+\delta/4}} \sum_{i=1}^{n}{E\left[ | \varepsilon_i |^{2+\delta/2} \right] E\left[ \Big{|} \sum_{j < i}{\varepsilon_j h_{ij,n}} \Big{|}^{2+\delta/2} \right]} \overset{p}{\to} 0 \label{lyapunov_2}
\end{align}

Observe that (\ref{lyapunov_1}) is bounded by
\[
C \frac{n}{r_n^{1+\delta/4}} \max_{1 \leq i \leq n}{ \big{|} h_{ii,n} \big{|}^{2+\delta/2} } = O_p \left( \frac{n r_n^{2 + \delta/2}}{r_n^{1+\delta/4} n^{2+\delta/2}} \right) = O_p \left( \frac{r_n^{1 + \delta/4}}{n^{1+\delta/2}} \right) = o_p(1)
\]

Next, by Jensen's, $c_r$, von Bahr-Esseen, triangle, and Cauchy-Schwarz inequalities, (\ref{lyapunov_2}) is bounded by
\begin{align*} 
&C \frac{n}{r_n^{1+\delta/4}} \max_{1 \leq i \leq n}{E\left[ \Big{|} \sum_{j < i}{\varepsilon_j h_{ij,n}} \Big{|}^{2+\delta/2} \right]} 
\leq C \frac{n}{r_n^{1+\delta/4}} \max_{1 \leq i \leq n}{\left( E\left[ \Big{|} \sum_{j < i}{\varepsilon_j h_{ij,n}} \Big{|}^{4+\delta} \right] \right)^{1/2}} \\ 
&\leq C \frac{n}{r_n^{1+\delta/4}} \max_{1 \leq i \leq n}{\left( E\left[ \Big{|} \sum_{j < i}{\varepsilon_j^2 h_{ij,n}^2} \Big{|}^{2+\delta/2} \right] \right)^{1/2}} 
\leq C \frac{n}{r_n^{1+\delta/4}} \max_{1 \leq i \leq n}{\left( E\left[ \Big{|} \sum_{j < i}{\varepsilon_j^4 h_{ij,n}^4} \Big{|}^{1+\delta/4} \right] \right)^{1/2}} \\ 
&\leq C \frac{n}{r_n^{1+\delta/4}} \max_{1 \leq i \leq n}{\left( E\left[ \Big{|} \sum_{j < i}{(\varepsilon_j^4 - \mu_{4,i}) h_{ij,n}^4} \Big{|}^{1+\delta/4} \right] + \Big{|} \sum_{j < i}{\mu_{4,i} h_{ij,n}^4} \Big{|}^{1+\delta/4}\right)^{1/2}} \\ 
&\leq C \frac{n}{r_n^{1+\delta/4}} \max_{1 \leq i \leq n}{\left( \sum_{j < i}{E\left[ \big{|} \varepsilon_j^4 - \mu_{4,i} \big{|}^{1 + \delta/4} \right] \big{|} h_{ij,n}^4 \big{|}^{1+\delta/4}} + \Big{|} \sum_{j < i}{h_{ij,n}^4} \Big{|}^{1+\delta/4}\right)^{1/2}} \\ 
&\leq C \frac{n}{r_n^{1+\delta/4}} \max_{1 \leq i \leq n}{ \left( \left(\sum_{j < i}{h_{ij,n}^4}\right)^{1+\delta/4} \right)^{1/2}} 
\leq C \frac{n}{r_n^{1+\delta/4}} \max_{1 \leq i \leq n}{ \left(\sum_{j < i}{h_{ij,n}^2}\right)^{1+\delta/4}} 
\end{align*}

Next, note that
\begin{align*}
&\sum_{j = 1}^{n}{h_{ij,n}^2} = \tilde{Z}_i' (\tilde{Z}' \Sigma \tilde{Z})^{-1} \tilde{Z}' \tilde{Z} (\tilde{Z}' \Sigma \tilde{Z})^{-1} \tilde{Z}_i \\
&= n^{-1}  \tilde{Z}_i' (\tilde{Z}' \Sigma \tilde{Z}/n)^{-1} (\tilde{Z}' \tilde{Z}/n) (\tilde{Z}' \Sigma \tilde{Z}/n)^{-1} \tilde{Z}_i \\
&\leq n^{-1} \| \tilde{Z}_i \|^2 \| (\tilde{Z}' \Sigma \tilde{Z}/n)^{-1} \|^2 \| (\tilde{Z}' \tilde{Z}/n) \|  = n^{-1} C \| \tilde{Z}_i \|^2 = O_p(n^{-1} r_n)
\end{align*}

Thus,
\[
\frac{n}{r_n^{1+\delta/4}} \max_{1 \leq i \leq n}{ \left(\sum_{j < i}{h_{ij,n}^2}\right)^{1+\delta/4}}  = O_p\left( \frac{n r_n^{1+\delta/4}}{n^{1+\delta/4} r_n^{1+\delta/4}} \right) = O_p(n^{-\delta/4}) = o_p(1)
\]

Hence, (\ref{scott_suff_cond_1}) holds. Next, I show that (\ref{scott_suff_cond_2}) is satisfied. Because
\[
\psi_{in}^2 = \frac{1}{2 r_n}(\varepsilon_i^4 - 2 \sigma_i^2 \varepsilon_i^2 + \sigma_i^4) h_{ii,n}^2 + \frac{4}{2 r_n} (\varepsilon_i^2 - \sigma_i^2) h_{ii,n} \varepsilon_i \sum_{j < i}{\varepsilon_j h_{ij,n}} + \frac{4}{2 r_n} \varepsilon_i^2 \left( \sum_{j < i}{\varepsilon_j h_{ij,n}} \right)^2,
\]
one can show that
\begin{align*}
&\sum_{i=1}^{n}{E[\psi_{in}^2 | \varepsilon_j, j < i]} - 1 = \frac{1}{2 r_n} \sum_{i=1}^{n}{(\mu_{4,i} - 3 \sigma_i^4) h_{ii,n}^2} + \frac{4}{2 r_n}\sum_{i=1}^{n}{\sum_{j < i}{\sum_{\begin{subarray}{c} l<i \\ l\neq j \end{subarray}}{\varepsilon_j \varepsilon_l h_{ij,n} h_{il,n}}}} \\
&+\frac{4}{2 r_n} \sum_{i=1}^{n}{\sigma_i^2 \sum_{j < i}{(\varepsilon_j^2 - \sigma_j^2) h_{ij,n}^2}} + \frac{4}{2 r_n} \sum_{i=1}^{n}{\mu_{3,i} h_{ii,n} \sum_{j < i}{\varepsilon_j h_{ij,n}}} \equiv I_1 + I_2 + I_3 + I_4
\end{align*}

As shown in the proof of Lemma~\ref{psi_exp_var} below, $I_1 = O_p(n^{-1} r_n) = o_p(1)$. Next, $I_2$ has zero mean and conditional variance bounded by
\begin{align}
&\frac{C}{r_n^2} \sum_{i=1}^{n}{\sum_{k=1}^{n}{\sum_{j < i}{\sum_{l < k}{\sigma_i^2 \sigma_k^2 \sigma_j^2 \sigma_l^2 h_{ij,n} h_{il,n} h_{kj,n} h_{kl,n}}}}} \leq \frac{C}{r_n^2} \sum_{i=1}^{n}{\sum_{k=1}^{n}{\sum_{j=1}^{n}{\sum_{l=1}^{n}{\sigma_i^2 \sigma_k^2 \sigma_j^2 \sigma_l^2 h_{ij,n} h_{il,n} h_{kj,n} h_{kl,n}}}}} \nonumber \\
&=\frac{C}{r_n^2} \sum_{j=1}^{n}{\sum_{l=1}^{n}{\sum_{i=1}^{n}{\sigma_l \sigma_j \sigma_i^2 h_{ij,n} h_{il,n}} \sum_{k=1}^{n}{\sigma_l \sigma_j \sigma_k^2 h_{kj,n} h_{kl,n}}}} \label{temp_1}
\end{align}

Denote $\tilde{\mathcal{H}}_n = \Sigma^{1/2} \mathcal{H}_n \Sigma^{1/2}$. Note that $\tilde{\mathcal{H}}_n$ is idempotent and symmetric and the $(i,j)$th element of $\tilde{\mathcal{H}}_n$ is given by $\tilde{h}_{ij,n} = \sigma_i \sigma_j h_{ij,n}$.
\[
\sigma_i \sigma_j h_{ij,n} = \tilde{h}_{ij,n} = \sum_{l=1}^{n}{\tilde{h}_{il,n} \tilde{h}_{lj,n}} = \sum_{l=1}^{n}{\sigma_i \sigma_j \sigma_l^2 h_{il,n} h_{lj,n}}
\]

Thus, it is also true that $h_{ij,n} = \sum_{l=1}^{n}{\sigma_l^2 h_{il,n} h_{lj,n}}$.

Thus, (\ref{temp_1}) becomes
\begin{align*}
\frac{C}{r_n^2} \sum_{j=1}^{n}{\sum_{l =1}^{n}{\sigma_j^2 \sigma_l^2 h_{jl,n}^2}} = \frac{1}{r_n^2} r_n = r_n^{-1}
\end{align*}

Thus, $I_2 \overset{p}{\to} 0$. Next, $I_3$ also has zero mean and conditional variance bounded by
\begin{align}
&\frac{C}{r_n^2} \sum_{i=1}^{n}{\sum_{k=1}^{n}{\sum_{j < i, k}{\sigma_i^2 \sigma_k^2 (\mu_{4,j} - \sigma_j^4) h_{ij,n}^2 h_{kj,n}^2}}} 
\leq \frac{C}{r_n^2} \sum_{i=1}^{n}{\sum_{k=1}^{n}{\sum_{j=1}^{n}{\sigma_i^2 \sigma_k^2 (\mu_{4,j} - \sigma_j^4) h_{ij,n}^2 h_{kj,n}^2}}} \nonumber \\
&\leq \frac{C}{r_n^2} \sum_{j=1}^{n}{\sum_{i=1}^{n}{\sigma_i^2 h_{ij,n}^2} \sum_{k=1}^{n}{\sigma_k^2  h_{kj,n}^2}} 
= \frac{C}{r_n^2} \sum_{j=1}^{n}{h_{jj,n}^2} = \frac{C}{r_n^2} O_p(n^{-1} r_n^2) = O_p(n^{-1})
\end{align}

Thus, $I_3 \overset{p}{\to} 0$. Finally, $I_4$ also has zero mean conditional variance bounded by
\begin{align}
&\frac{C}{r_n^2} \sum_{i=1}^{n}{\sum_{k=1}^{n}{\sum_{j < i, k}{\mu_{3,i} \mu_{3,k} h_{ii,n} h_{kk,n} \sigma_j^2 h_{ij,n} h_{kj,n}}}}
\leq \frac{C}{r_n^2} \sum_{i=1}^{n}{\sum_{k=1}^{n}{h_{ii,n} h_{kk,n} \sum_{j=1}^{n}{ \sigma_j^2 h_{ij,n} h_{kj,n}}}} \nonumber \\
&= \frac{C}{r_n^2} \sum_{i=1}^{n}{\sum_{k=1}^{n}{h_{ii,n} h_{kk,n} h_{ik,n}}} = \frac{C}{r_n^2} \max_{1 \leq i, k \leq n}{h_{ik,n}} \left(\sum_{i=1}^{n}{h_{ii,n}}\right)^2 \label{temp_2}
\end{align}

Note that because $C^{-1} \leq \sigma_i^2 \leq C$,
\[
C^{-1} \sum_{i=1}^{n}{\sigma_i^2 h_{ii}} \leq \sum_{i=1}^{n}{h_{ii}} \leq C \sum_{i=1}^{n}{\sigma_i^2 h_{ii}}
\]

Hence, $\sum_{i=1}^{n}{h_{ii}} = O_p(r_n)$, and (\ref{temp_2}) is
\[
\frac{C}{r_n^2} O_p(n^{-1} r_n) O_p(r_n^2) = O_p(n^{-1} r_n),
\]

Therefore, $I_4 \overset{p}{\to} 0$, (\ref{scott_suff_cond_2}) holds, and the lemma is now proved.

\end{proof}

\begin{proof}[Proof of Lemma~\ref{psi_exp_var}]

First,
\begin{align*}
&E\left[\varepsilon' \mathcal{H}_n \varepsilon \right] = E \left[ tr \left( \varepsilon' \tilde{Z} (n \Omega_{HC})^{-1} \tilde{Z}' \varepsilon \right) \right]  = E \left[ tr \left( (n \Omega_{HC})^{-1} \tilde{Z}' \varepsilon \varepsilon' \tilde{Z} \right) \right]  \\
&= E \left[ tr \left( (\tilde{Z}' \Sigma \tilde{Z})^{-1} \tilde{Z}' \varepsilon \varepsilon' \tilde{Z} \right) \right]  = tr(I_{r_n}) = r_n
\end{align*}

Next,
\begin{align*}
Var \left( \frac{1}{\sqrt{2 r_n}} \left( \varepsilon' \mathcal{H}_n \varepsilon - r_n \right) \right) = \frac{1}{2 r_n} Var \left(\varepsilon' \mathcal{H}_n \varepsilon\right) = \frac{1}{2 r_n} \left( E\left[ \left(\varepsilon' \mathcal{H}_n \varepsilon\right)^2 \right] - r_n^2 \right)
\end{align*}

Now, note that
\begin{align*}
&E\left[ \left(\varepsilon' \mathcal{H}_n \varepsilon\right)^2 \right] = E\left[ \left( \sum_{i=1}^{n}{\sum_{j=1}^{n}{\varepsilon_i \varepsilon_j h_{ij,n}}} \right)^2 \right] = E\left[ \sum_{i=1}^{n}{\sum_{j=1}^{n}{\sum_{k=1}^{n}{\sum_{l=1}^{n}{\varepsilon_i \varepsilon_j \varepsilon_k \varepsilon_l h_{ij,n} h_{kl,n}}}}}\right] \\
&= E\left[ \sum_{i=1}^{n}{\varepsilon_i^4 h_{ii,n}^2} \right]  + E\left[ \sum_{i=1}^{n}{\sum_{k \neq i}{\varepsilon_i^2 \varepsilon_k^2 h_{ii,n} h_{kk,n}}} \right] + 2 E\left[ \sum_{i=1}^{n}{\sum_{j \neq i}{\varepsilon_i^2 \varepsilon_j^2 h_{ij,n}^2}} \right] \\
&= \sum_{i=1}^{n}{(\mu_{4,i} - 3 \sigma_i^4) h_{ii,n}^2} + \sum_{i=1}^{n}{\sum_{k = 1}^{n}{\sigma_i^2 \sigma_k^2 h_{ii,n} h_{kk,n}}} + 2 \sum_{i=1}^{n}{\sum_{j = 1}^{n}{\sigma_i^2 \sigma_j^2 h_{ij,n}^2}}
\end{align*}

Next,
\begin{align*}
&\sum_{i=1}^{n}{\sum_{k = 1}^{n}{\sigma_i^2 \sigma_k^2 h_{ii,n} h_{kk,n}}} = \left( \sum_{i=1}^{n}{\sigma_i^2 h_{ii,n}}\right)^2 = tr(\Sigma^{1/2} \mathcal{H}_n \Sigma^{1/2})^2 \\
&= tr\left( \Sigma^{1/2} \tilde{Z} (\tilde{Z}' \Sigma \tilde{Z})^{-1} \tilde{Z}' \Sigma^{1/2} \right)^2 = tr \left( (\tilde{Z}' \Sigma \tilde{Z})^{-1}  \tilde{Z}' \Sigma \tilde{Z} \right)^2 = r_n^2
\end{align*}
and
\begin{align*}
&\sum_{i=1}^{n}{\sum_{j = 1}^{n}{\sigma_i^2 \sigma_j^2 h_{ij,n}^2}} = tr \left( (\Sigma^{1/2} \mathcal{H}_n \Sigma^{1/2})^2 \right) = tr\left( \Sigma^{1/2}  \tilde{Z} (\tilde{Z}' \Sigma \tilde{Z})^{-1}\tilde{Z}' \Sigma  \tilde{Z} (\tilde{Z}' \Sigma \tilde{Z})^{-1} \tilde{Z}' \Sigma^{1/2} \right) = r_n
\end{align*}

Finally,
\begin{align*}
\sum_{i=1}^{n}{(\mu_{4,i} - 3 \sigma_i^4) h_{ii,n}^2} \leq C \sum_{i=1}^{n}{h_{ii,n}^2} = O_p(n^{-1} r_n^2),
\end{align*}
because
\[
| h_{ij,n} | = | \tilde{Z}_i' (n \Omega_{HC})^{-1} \tilde{Z}_j | = O_p(n^{-1} \| \tilde{Z}_i \| \: \| \tilde{Z}_j \|) = O_p(n^{-1} r_n)
\]

Thus,
\[
Var \left( \frac{1}{\sqrt{2 r_n}} \left( \varepsilon' \mathcal{H}_n \varepsilon - r_n \right) \right) = \frac{1}{2 r_n} (r_n^2 + 2 r_n + O_p(n^{-1} r_n^2) - r_n^2) = 1 + O_p(n^{-1} r_n) = 1 + o_p(1)
\]

\end{proof}

\clearpage

\bibliography{literature_spec_testing}

\begin{mytitlepage}

\setcounter{page}{1}

\title{Supplement to ``A Consistent Heteroskedasticity Robust LM Type Specification Test for Semiparametric Models''}
\author{Ivan Korolev}
\date{November 7, 2019}

\maketitle

\begin{abstract}
This supplement is divided in several sections. Section~\ref{implementation} contains additional details on the implementation of the test and computation of the test statistic. Section~\ref{simulations_additional} describes the implementation of my simulation analysis in greater detail and presents additional simulation results. \ref{supplement_appendix_tables_figures} contains all relevant tables and figures.
\end{abstract}

\end{mytitlepage}



\renewcommand\thesection{S.\arabic{section}}

\renewcommand\thesubsection{S.\arabic{section}.\arabic{subsection}}

\renewcommand\thetheorem{S.\arabic{theorem}}

\renewcommand\theassumption{S.\arabic{assumption}}

\setcounter{section}{0}

\newpage

\doublespacing

\section{Implementation of the Test}\label{implementation}

\subsection{Computing the Test Statistic}

In this section I describe how to implement the proposed test. The test statistic can be computed using the following steps:

\begin{enumerate}

\item

Pick the sequence of approximating functions $W^{m_n}(x) = (W_1(x), ..., W_{m_n}(x))'$ that will be used to estimate the semiparametric model. Let $W_i := W^{m_n}(X_i)$.

\item

Estimate the semiparametric model $Y_i = f(X_i,\theta,h) + \varepsilon_i \approx W_i' \beta_1 + \varepsilon_i$ using series methods. Obtain the estimates $\tilde{\beta}_1 = (W'W)^{-1} W'Y$ and residuals $\tilde{\varepsilon}_i = Y_i - W_i' \tilde{\beta}_1$.

\item

Pick the sequence of approximating functions $Z^{r_n}(x) = (Z_1(x), ..., Z_{r_n}(x))'$ that will complement $W^{m_n}(x)$ to form the matrix $P^{k_n}(x) = (W^{m_n}(x)', Z^{r_n}(x)')'$, $k_n = m_n + r_n$, which corresponds to a general nonparametric model. $P^{k_n}(x)$ should be able to approximate any unknown function in a desired class sufficiently well. Common choices of basis functions include power series  (see Equation~(\ref{series}) in the main text) and splines (see Equation~(\ref{splines}) in the main text). Let $Z_i := Z^{r_n}(X_i)$ and $P_i := P^{k_n}(X_i)$.

\item

Compute the quadratic form  $\xi_{HC} = \tilde{\varepsilon}' \tilde{Z} (\tilde{Z}' \tilde{\Sigma} \tilde{Z})^{-1} \tilde{Z}' \tilde{\varepsilon}$, where $\tilde{\Sigma} = diag(\tilde{\varepsilon}_1^2,...,\tilde{\varepsilon}_n^2)$ and $\tilde{Z}_i = Z_i - W_i' (W'W)^{-1} W'Z$ are the residuals from the regression of each element of $Z_i$ on $W_i$.

Note that $\xi_{HC}$ can be computed as $nR^2$ from the regression of 1 on $\tilde{Z}_i \tilde{\varepsilon}_i$. The matrix of regressors can be written as $\tilde{\Sigma}^{1/2} \tilde{Z}$, where $\tilde{\Sigma}^{1/2} = diag(\tilde{\varepsilon}_1,...,\tilde{\varepsilon}_n)$. Let $\iota_n$ be a $n$-vector of ones. Note that $\tilde{\Sigma}^{1/2} \iota_n = \tilde{\varepsilon}$. Then
\[
n R^2 = n \left( 1 - \frac{e'e}{\iota_n' \iota_n} \right) = n \frac{\iota_n' \tilde{\Sigma}^{1/2} \tilde{Z} (\tilde{Z}' \tilde{\Sigma} \tilde{Z})^{-1} \tilde{Z}' \tilde{\Sigma}^{1/2} \iota_n}{n} = \tilde{\varepsilon}' \tilde{Z} ( \tilde{Z}' \tilde{\Sigma}  \tilde{Z})^{-1}  \tilde{Z}' \tilde{\varepsilon}
\]

\item

Compute the test statistic which is asymptotically standard normal under the null:
\[
t_{HC} = \frac{\xi_{HC} - r_n}{\sqrt{2 r_n}} \overset{a}{\sim} N(0,1)
\]

Reject the null if $t > z_{1-\alpha}$, the $(1-\alpha)$-quantile of the standard normal distribution.

Alternatively, use the $\chi^2$ approximation directly: $\xi_{HC} \overset{a}{\sim} \chi^2(r_n)$, reject the null if $\xi_{HC} > \chi_{1-\alpha}^2(r_n)$, the $(1-\alpha)$-quantile of the $\chi^2$ distribution with $r_n$ degrees of freedom.

\end{enumerate}

\subsection{Computing the Bootstrap Test Statistic}

The restricted residuals are given by $\tilde{\varepsilon} = M_W Y$. The bootstrap data satisfies $Y^* = W \tilde{\beta}_1 + \varepsilon^*$. In turn, the bootstrap residuals are equal to $\tilde{\varepsilon}^* = M_W Y^* = M_W (W \tilde{\beta}_1 + \varepsilon^*) = M_W \varepsilon^*$. Thus, in fact, one does not need to obtain $Y_i^*$ and re-estimate the model during each bootstrap iteration. It suffices to compute $\tilde{\varepsilon}^* = M_W \varepsilon^*$. The bootstrap test statistic is then given by
\[
\xi_{HC}^* = \tilde{\varepsilon}^{*\prime} \tilde{Z} (\tilde{Z}' \tilde{\Sigma}^* \tilde{Z})^{-1} \tilde{Z}' \tilde{\varepsilon}^*,
\]
where $\tilde{\Sigma}^* = diag(\tilde{\varepsilon}_1^{*2},...,\tilde{\varepsilon}_n^{*2})$.

\section{Simulations}\label{simulations_additional}

\subsection{DGPs and Implementation Details}\label{simulations_implementation}

In my simulations, I use the data generating process given by
\begin{align*}
Y_i &= 3+ 2 X_{1i} + 2(\exp(X_{2i})-2 \ln(X_{2i}+3)) + \varepsilon_i \\
\end{align*}

To estimate the restricted model, I replace the function $g(x_2)$ with its series expansion: $g(x_2) \approx Q^{a_n}(x_2)' \gamma$, where $Q^{a_n}$ is an $a_n$-dimensional vector of approximating functions. I use power series with
\[
Q^{a_n}(x_2) = (1, x_2, ..., x_2^{a_n-1})'
\]
and cubic splines with
\[
Q^{a_n}(z) = (1, x_2, ..., x_2^3, \mathbbm{1}\{ x_2 > t_1 \} (x_2-t_1)^3, ..., \mathbbm{1}\{ x_2 > t_{a_n - 4} \} (x_2 - t_{a_n - 4})^3)
\]

When $a_n = 4$, power series and splines coincide, as there are no knots yet. When $a_n > 4$, I place knots uniformly at the empirical quantiles of $X_2$.

To compute the test statistic, I need to construct $P^{k_n}(X_{1i},X_{2i})$. In order to do this, I first construct series terms in $X_1$, $Q^{a_n}(x_1)$. Next, I have two options: I can either use all possible interactions of $Q^{a_n}(x_1)$ and $Q^{a_n}(x_2)$, which would lead to $k_n = a_n^2$, or I could restrict the number of interaction terms.

Using all possible interaction terms may lead to severe multicollinearity and unstable behavior of the test. In order to make the test more stable and include more series terms in the restricted model, in my main analysis I reduce the number of interaction terms as follows. Suppose that $a_n$ is the number of interaction terms in univariate series expansions in $X_1$ and $X_2$. To construct the interaction terms, let $\bar{a}_n = \max \{ \min \{a_n,5 \}, \floor{a_n^{0.9}} \}$. In other words,
\[ \bar{a}_n =
    \begin{cases}
    a_n       & \quad \text{if } a_n \leq 5 \\
    5 & \quad \text{if } 5 < a_n \leq 7 \\
    \floor{a_n^{0.9}} & \quad \text{if } a_n > 7
  \end{cases}
\]

Then I form series terms $\bar{Q}^{\bar{a}_n}(X_{1i})$ and $\bar{Q}^{\bar{a}_n}(X_{1i})$, drop the constant, and take all their element by element interactions to construct the interaction terms. The total number of terms under the alternative is given by $k_n = 2 a_n - 1 + (\bar{a}_n - 1)^2$. Table~\ref{tbl_number_terms} shows how the number of terms under the null, $m_n$, the number of terms under the alternative, $k_n$, and the number of restrictions, $r_n$, change with $a_n$. While my approach is somewhat arbitrary, it presents a practical way to accommodate larger $m_n$ without running into severe multicollinearity issues. Alternatively, I could restrict the growth of $m_n$ by requiring that $a_n$ not grow beyond 6 or 7, and then include all possible interaction terms. But because I am interested in estimating and testing the semiparametric model, I take the approach that allows me to estimate it more flexibly. I should also note that the approach that restricts the number of interaction terms is not unique to my paper: \citet{chen_2007} also uses it in her simulations in Section 2.4.

\subsection{Comparison of Various Tests}\label{simulations_comparison}

In the main text, I compare the performance of my test,
\begin{align}\label{korolev_main}
t_{HC} = \frac{\tilde{\varepsilon}' \tilde{Z} (\tilde{Z}' \tilde{\Sigma} \tilde{Z})^{-1} \tilde{Z}' \tilde{\varepsilon} - r_n}{\sqrt{2 r_n}}
\end{align}
with that of the test in \citet{gupta_2018}:
\begin{align}\label{gupta_main}
t_{HC,G} = \frac{\tilde{\varepsilon}' \tilde{\Sigma}^{-1} Z \left( Z' \tilde{\Sigma}^{-1} Z - Z' \tilde{\Sigma}^{-1} W (W' \tilde{\Sigma}^{-1} W)^{-1} W' \tilde{\Sigma}^{-1} Z \right)^{-1} Z' \tilde{\Sigma}^{-1} \tilde{\varepsilon} - r_n}{\sqrt{2 r_n}}
\end{align}

I find that my test has superior finite sample behavior, but it may not be exactly clear why. Namely, my test is different from the one in \citet{gupta_2018} in two respects. First, my test is based on the quadratic form in $\tilde{\varepsilon}$, i.e. the OLS residuals, while Gupta's test is based on the quadratic form in $\tilde{\Sigma}^{-1/2} \tilde{\varepsilon}$, i.e. the FGLS residuals. Second, when estimating variance, I only take into account the moments $\tilde{Z}' \tilde{\varepsilon}$ and use the estimate $\tilde{Z}' \tilde{\Sigma} \tilde{Z}$, while Gupta's test estimates the variance of the entire set of moment conditions $P' \tilde{\Sigma}^{-1} \tilde{\varepsilon}$ and then takes the corresponding block of its inverse, $( Z' \tilde{\Sigma}^{-1} Z - Z' \tilde{\Sigma}^{-1} W (W' \tilde{\Sigma}^{-1} W)^{-1} W' \tilde{\Sigma}^{-1} Z)^{-1}$.

To better understand the finite sample performance of the two tests, I consider two alternative test statistics. The first one corresponds to my test, i.e. is based on the OLS residuals, but uses the ``long'' variance estimate:
\begin{align}\label{korolev_alt}
t_{HC,alt} = \frac{\tilde{\varepsilon}' Z \left( Z' \tilde{\Sigma} Z - Z' \tilde{\Sigma}W (W' \tilde{\Sigma} W)^{-1} W' \tilde{\Sigma} Z \right)^{-1} Z' \tilde{\varepsilon} - r_n}{\sqrt{2 r_n}}
\end{align}

The second one corresponds to the test in \citet{gupta_2018}, i.e. is based on the FGLS residuals, but uses the ``short'' variance estimate:
\begin{align}\label{gupta_alt}
t_{HC,G,alt} = \frac{\tilde{\varepsilon}' \tilde{\Sigma}^{-1} \tilde{Z} (\tilde{Z}' \tilde{\Sigma}^{-1} \tilde{Z})^{-1} \tilde{Z}' \tilde{\Sigma}^{-1} \tilde{\varepsilon} - r_n}{\sqrt{2 r_n}}
\end{align}

Figure~\ref{fig_simulated_size_power_comparison} presents the simulated size and power of the four tests. For each sample size, the upper two plots show the simulated size and the lower two graphs show the simulated power. The red solid line corresponds to my test in Equation~(\ref{korolev_main}), the cyan dash-dotted line to my test in Equation~(\ref{korolev_alt}), the blue dashed line to Gupta's test in Equation~(\ref{gupta_main}), and the magenta dotted line to Gupta's test in Equation~(\ref{gupta_alt}). As we can see, both version of Gupta's test are undersized and have low power. In fact, the version with the ``short'' variance estimate performs even worse than the version with the ``long`` one. In turn, the version of my test with the ``long'' variance estimate is severely oversized.

Next, Figure~\ref{fig_simulated_size_power_comparison_infeasible} presents the simulated size and power of the infeasible tests that replace the estimated matrix $\tilde{\Sigma} = diag(\tilde{\varepsilon}_1^2,...,\tilde{\varepsilon}_n^2)$ with the true matrix $\Sigma$. The infeasible version of my test with the ``long'' variance estimate is still severely oversized. The infeasible version of my test with the ``short'' variance estimate is somewhat oversized when $n=250$ but has decent size control when $n=1,000$. The infeasible version of Gupta's test with the ``long'' variance estimate is actually very similar to the infeasible version of my test with the ``short'' variance estimate. Finally, the infeasible version of Gupta's test with the ``short'' variance estimate has the best size control, but also has somewhat lower power.

All in all, it appears that the problems with the test in \citet{gupta_2018} are caused by estimation of the variance. The infeasible versions of Gupta's test work perfectly fine, while the feasible versions do not. This is perhaps not surprising, as FGLS typically relies on consistency of the variance-covariance matrix estimate, which does not hold in the current setup. My test with the ``long'' variance estimate does not work either, but my test with the ``short'' variance estimate works. Thus, it appears that both using the OLS residuals, as opposed to FGLS, and using the ``short'' variance estimate are crucial for good finite sample behavior of my test.

\subsection{Test with Data-Driven Choice of Tuning Parameters}\label{simulations_data_driven}

Below, I analyze the finite sample performance of my test when tuning parameters are chosen in a data-driven way. I use Mallows's $C_p$ and generalized cross-validation, as discussed in Section 15.2 in \citet{li_racine_2007}, to select the number of series terms under the null. These methods provide a fast and computationally simple alternative to leave-one-out cross-validation. Then, to choose the number of terms under the alternative, I use a modified version of the approach proposed in \citet{guay_guerre_2006}. I pick the value $\hat{r}_n$ that maximizes
\[
\xi_{HC}(r_n) - r_n - \gamma_n \sqrt{2(r_n - r_{n,min})},
\]
where $\gamma_n = c \sqrt{2 \ln{\Card(r_n)}}$, $c$ is a constant that satisfies $c \geq 1 + \epsilon$ for some $\epsilon > 0$, $\Card(r_n)$ is the cardinality of the set of possible numbers of restrictions, and $r_{n,min}$ is the lowest possible number of restrictions across different choices of $r_n$. The notation $\xi_{HC}(r_n)$ emphasizes the dependence of the test statistic on the number $r_n$ of elements in $\tilde{Z}$. Intuitively, $r_n$ is the center term of $\xi_{HC}(r_n)$, while $\gamma_n \sqrt{2(r_n - r_{n,min})}$ is the penalty term that rewards simpler alternatives. In my analysis, I set $c=3$.

The resulting test is based on the test statistic $\xi_{HC}(\hat{r}_n)$ and asymptotic $\chi^2(r_{n,min})$ critical values. Table~\ref{tbl_simulated_size_data_driven} presents the results. We can see that the data-driven test controls size well and has good power. While the theoretical development of data-driven methods for tuning parameter choice is beyond the scope of this paper, it appears that the simple procedure presented here works well in practice.

\renewcommand\thesection{Appendix S.\Alph{section}}

\renewcommand\thesubsection{S.\Alph{section}.\arabic{subsection}}

\setcounter{section}{0}

\renewcommand{\thetheorem}{S.\arabic{theorem}}

\renewcommand{\thelemma}{S.\arabic{lemma}}

\renewcommand{\theassumption}{S.\arabic{assumption}}

\renewcommand{\theremark}{S.\arabic{remark}}

\renewcommand{\theequation}{S.\arabic{equation}}

\renewcommand{\thetable}{S\arabic{table}}

\renewcommand{\thefigure}{S\arabic{figure}}

\setcounter{theorem}{0}

\setcounter{lemma}{0}

\setcounter{assumption}{0}

\setcounter{remark}{0}

\setcounter{table}{0}

\setcounter{figure}{0}

\clearpage

\section{Tables and Figures}\label{supplement_appendix_tables_figures}

\onehalfspacing

\begin{figure}[H]
\begin{center}
\caption{Simulated Size and Power of the Feasible Tests}\label{fig_simulated_size_power_comparison}

$n=250$

\includegraphics[scale=0.33]{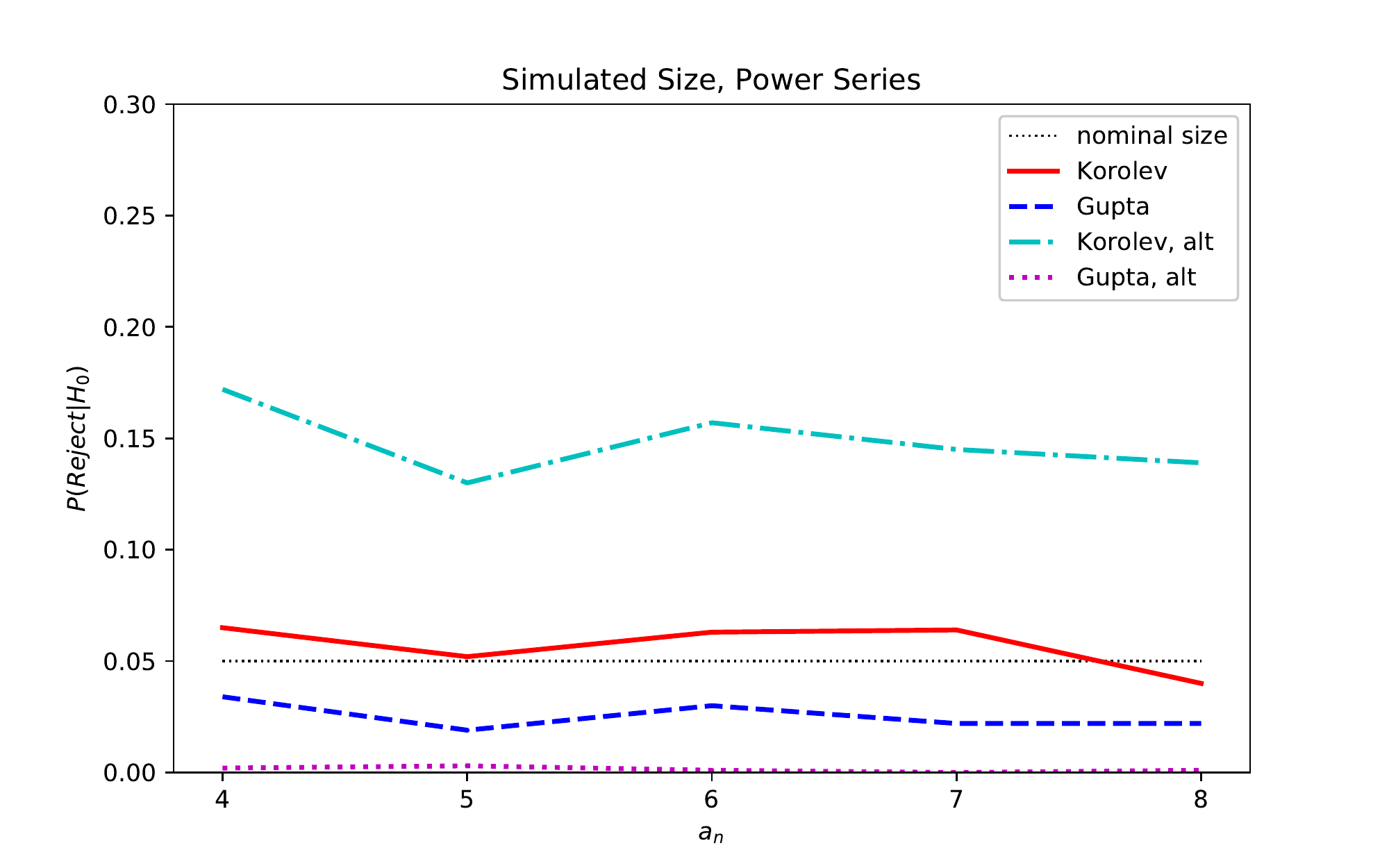} \includegraphics[scale=0.33]{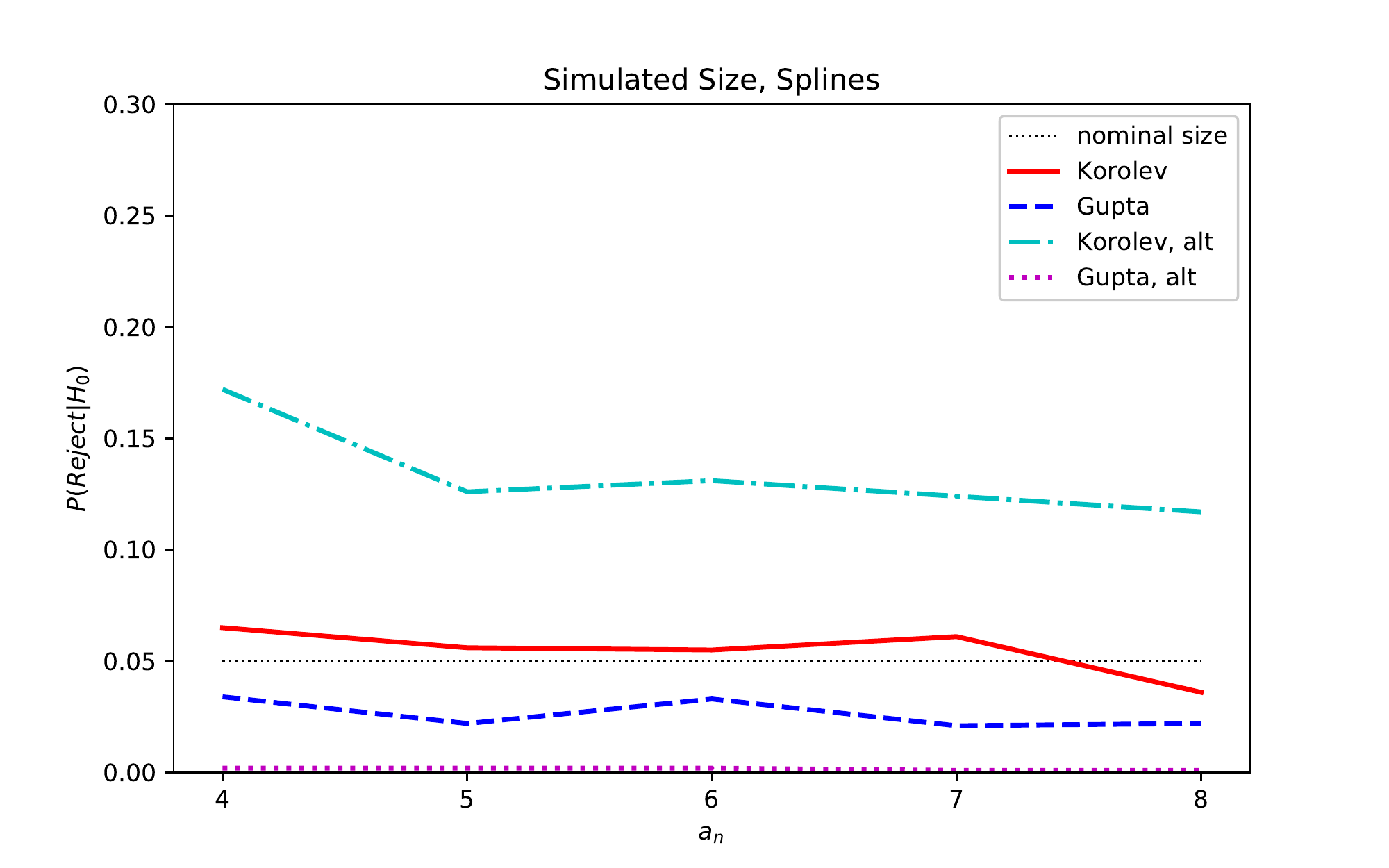}

\includegraphics[scale=0.33]{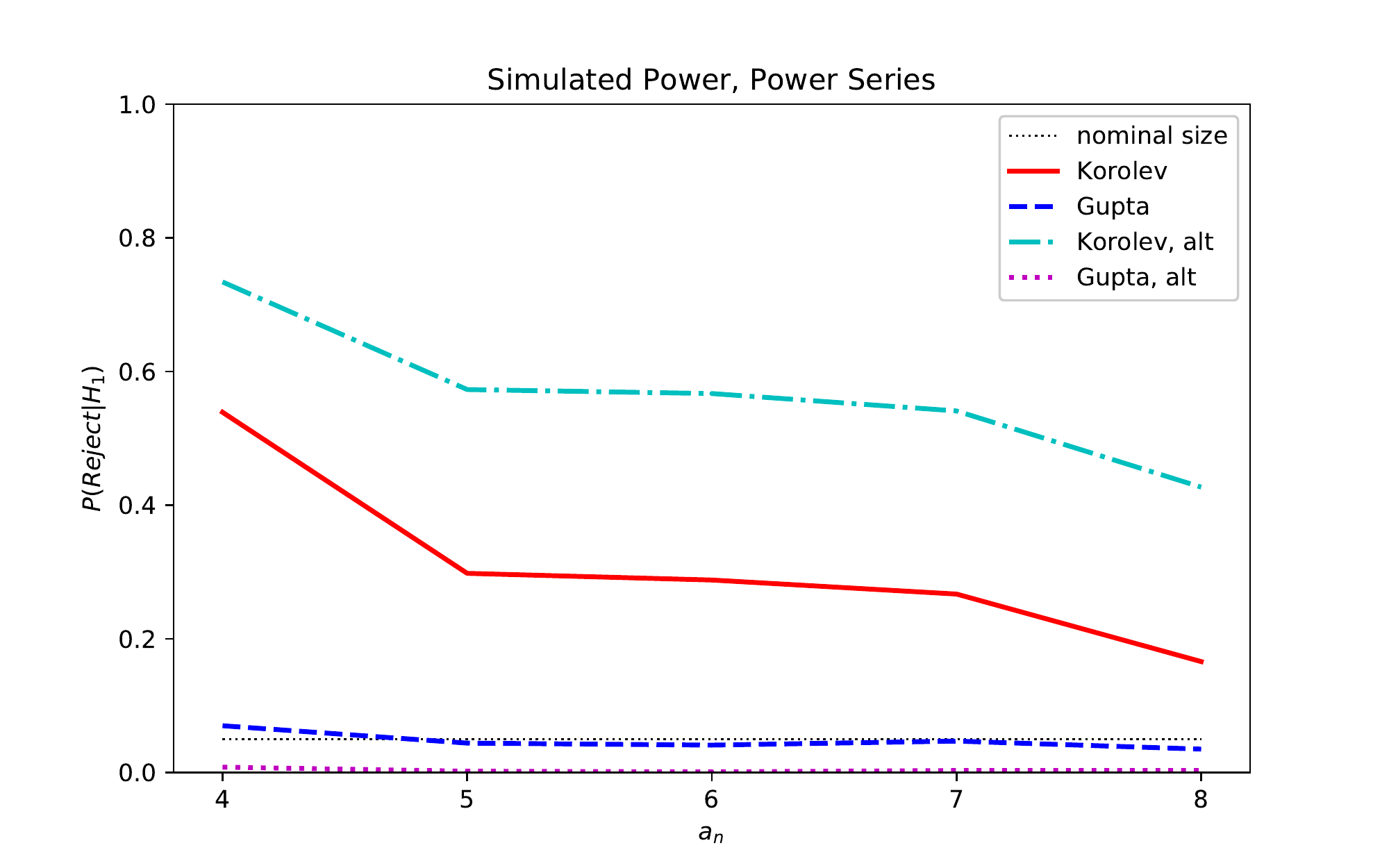} \includegraphics[scale=0.33]{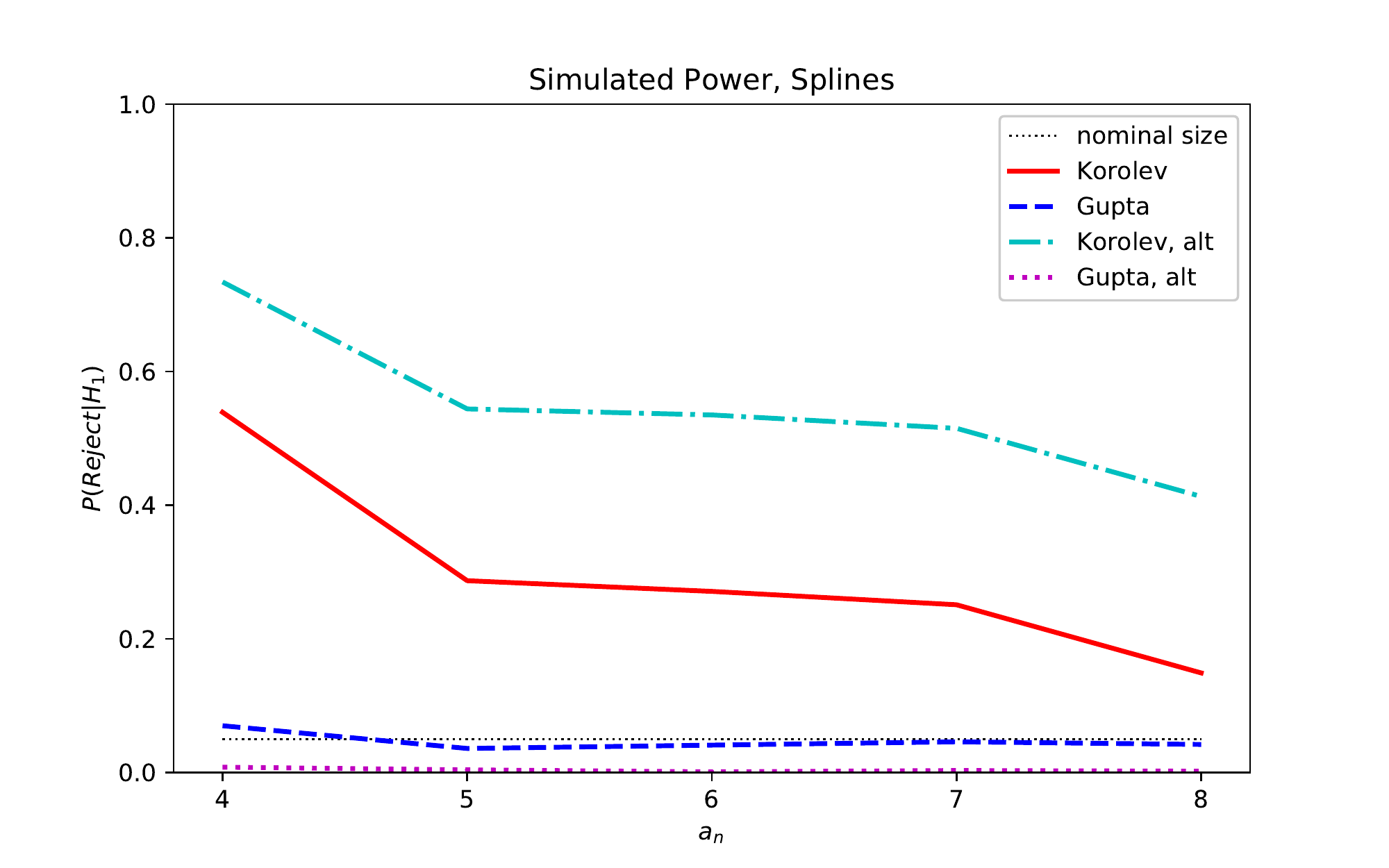}

$n=1,000$

\includegraphics[scale=0.33]{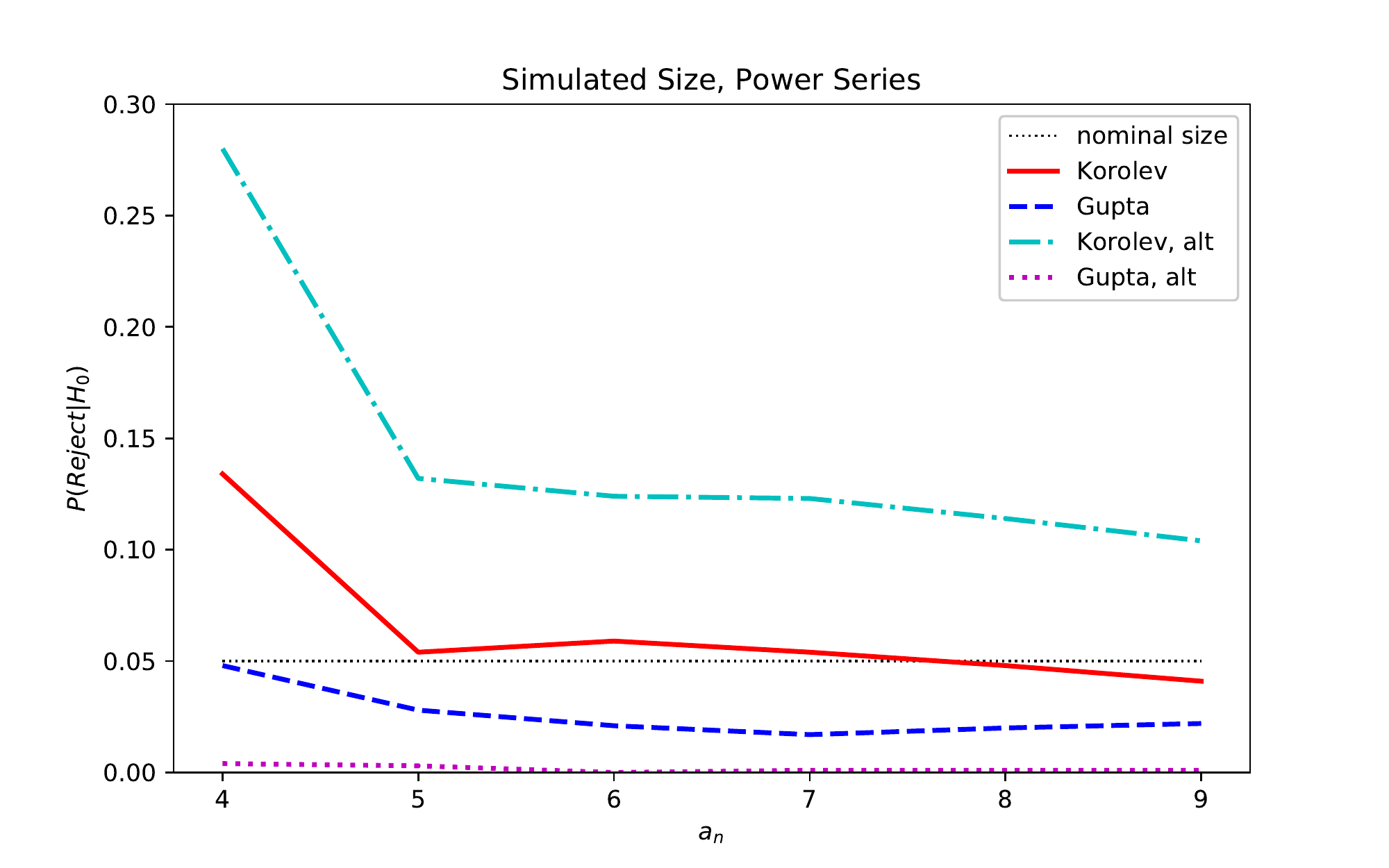} \includegraphics[scale=0.33]{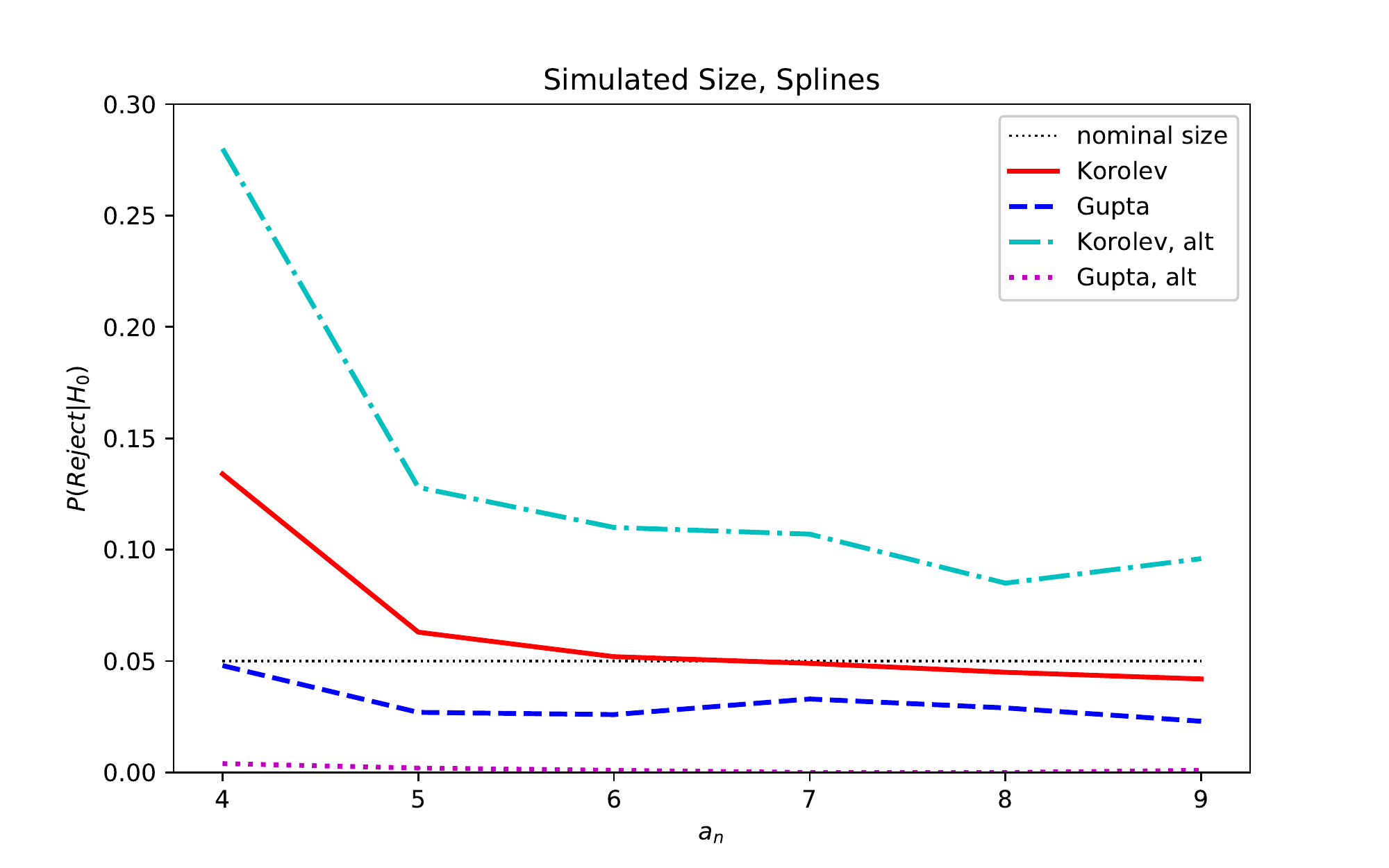}

\includegraphics[scale=0.33]{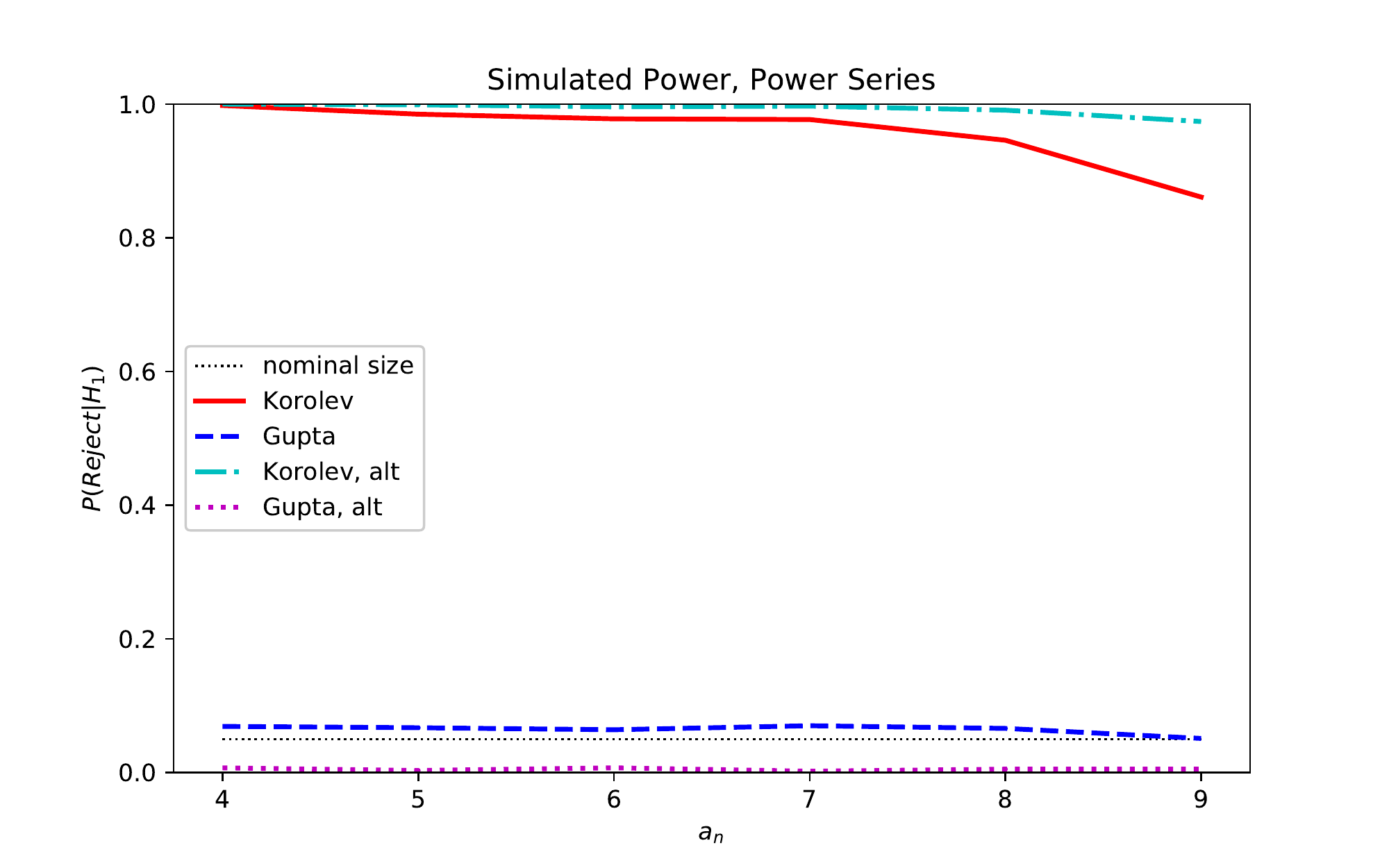} \includegraphics[scale=0.33]{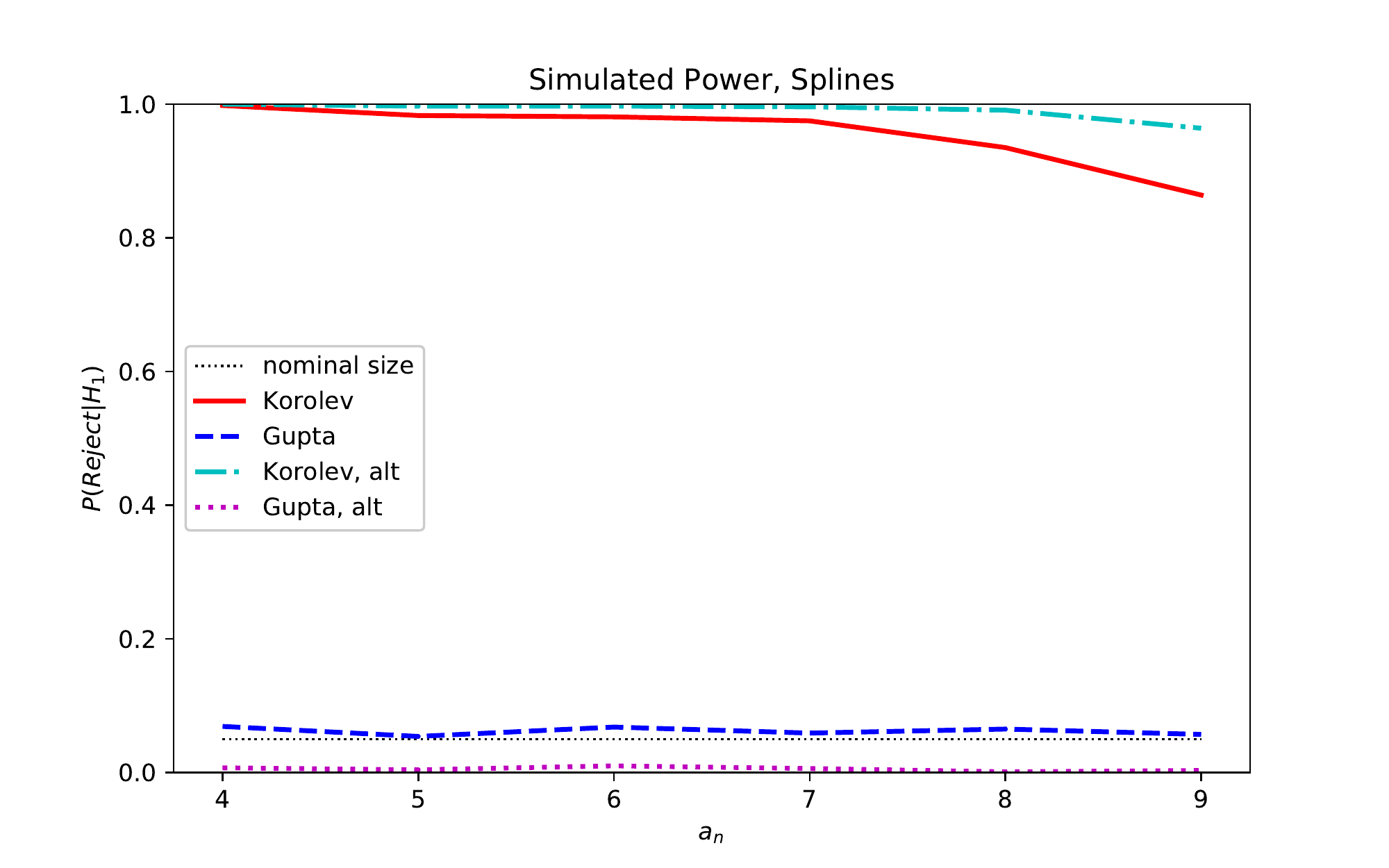}

\footnotesize{\vspace{-0.4cm}\singlespacing Left: power series. Right: splines. The results are based on $M=1,000$ simulation draws.}
\end{center}
\end{figure}

\begin{figure}[H]
\begin{center}
\caption{Simulated Size and Power of the Infeasible Tests}\label{fig_simulated_size_power_comparison_infeasible}

$n=250$

\includegraphics[scale=0.33]{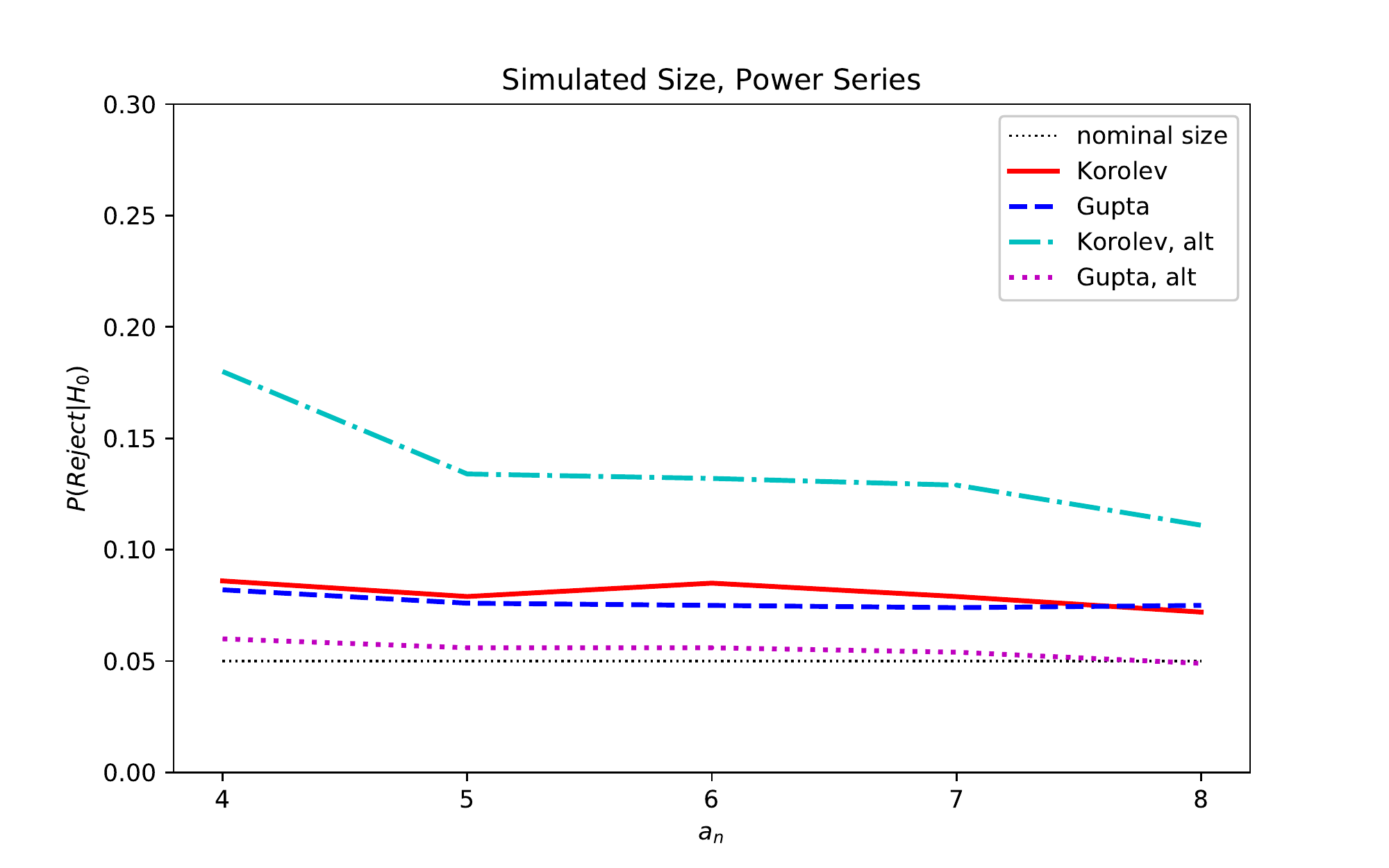} \includegraphics[scale=0.33]{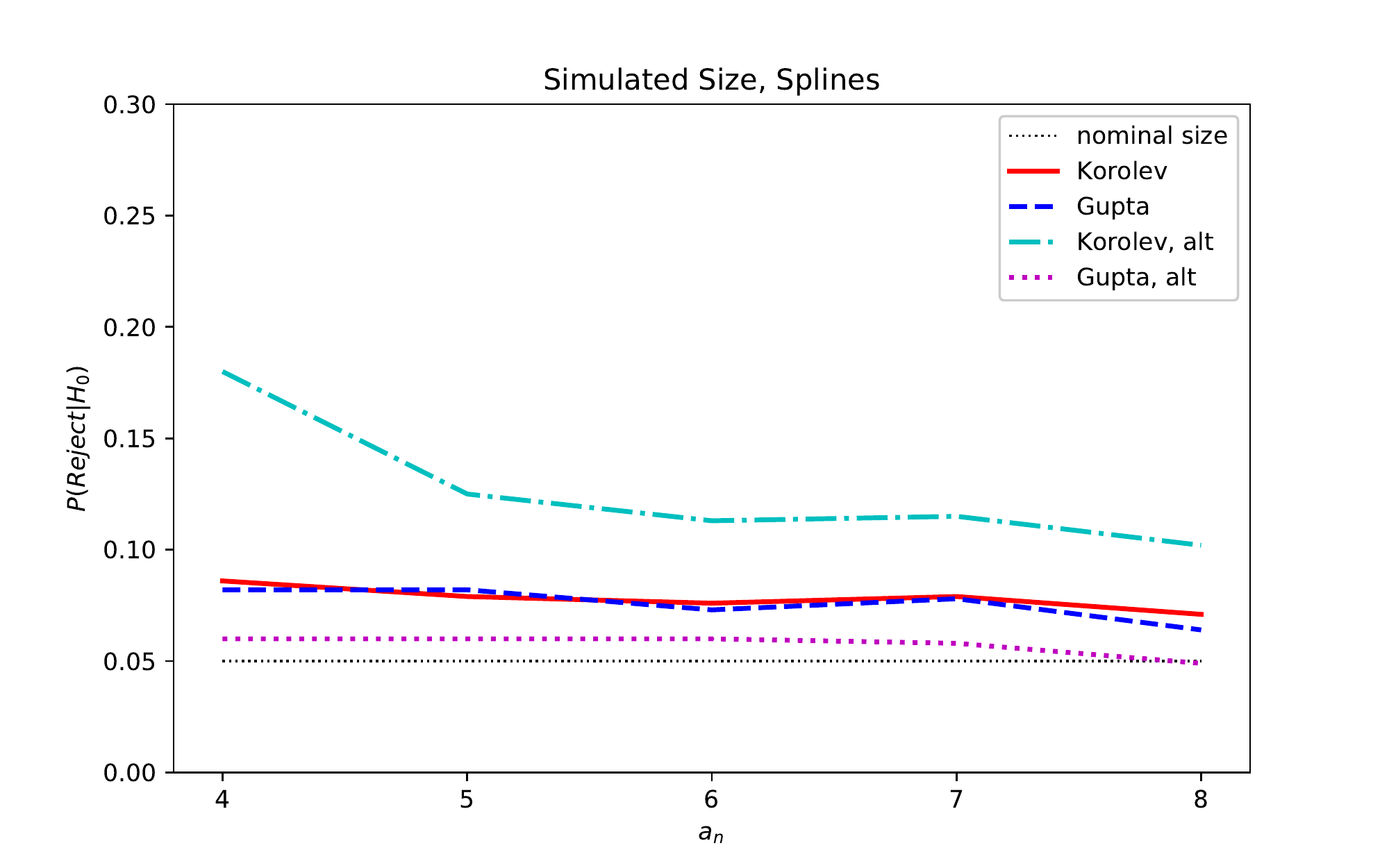}

\includegraphics[scale=0.33]{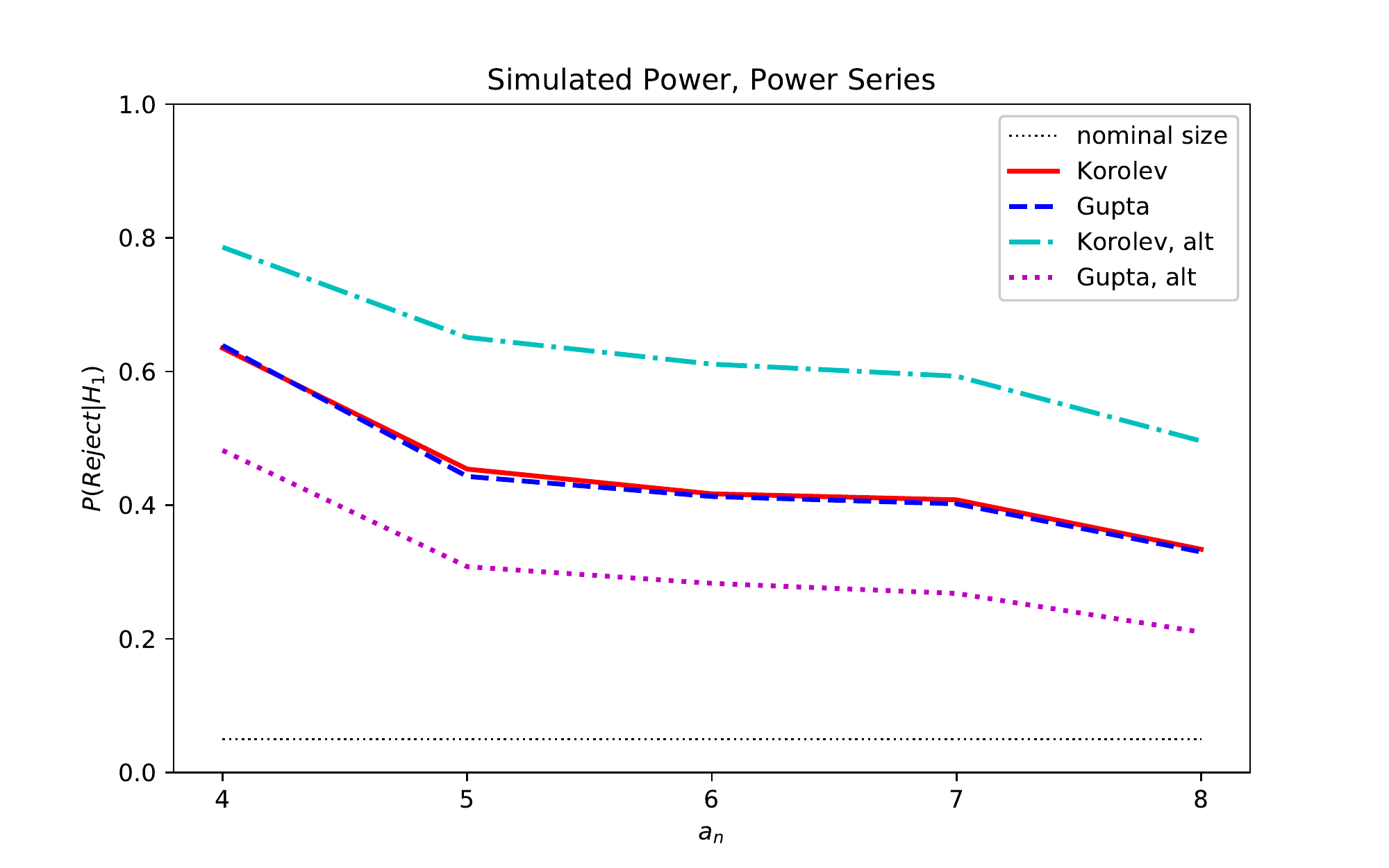} \includegraphics[scale=0.33]{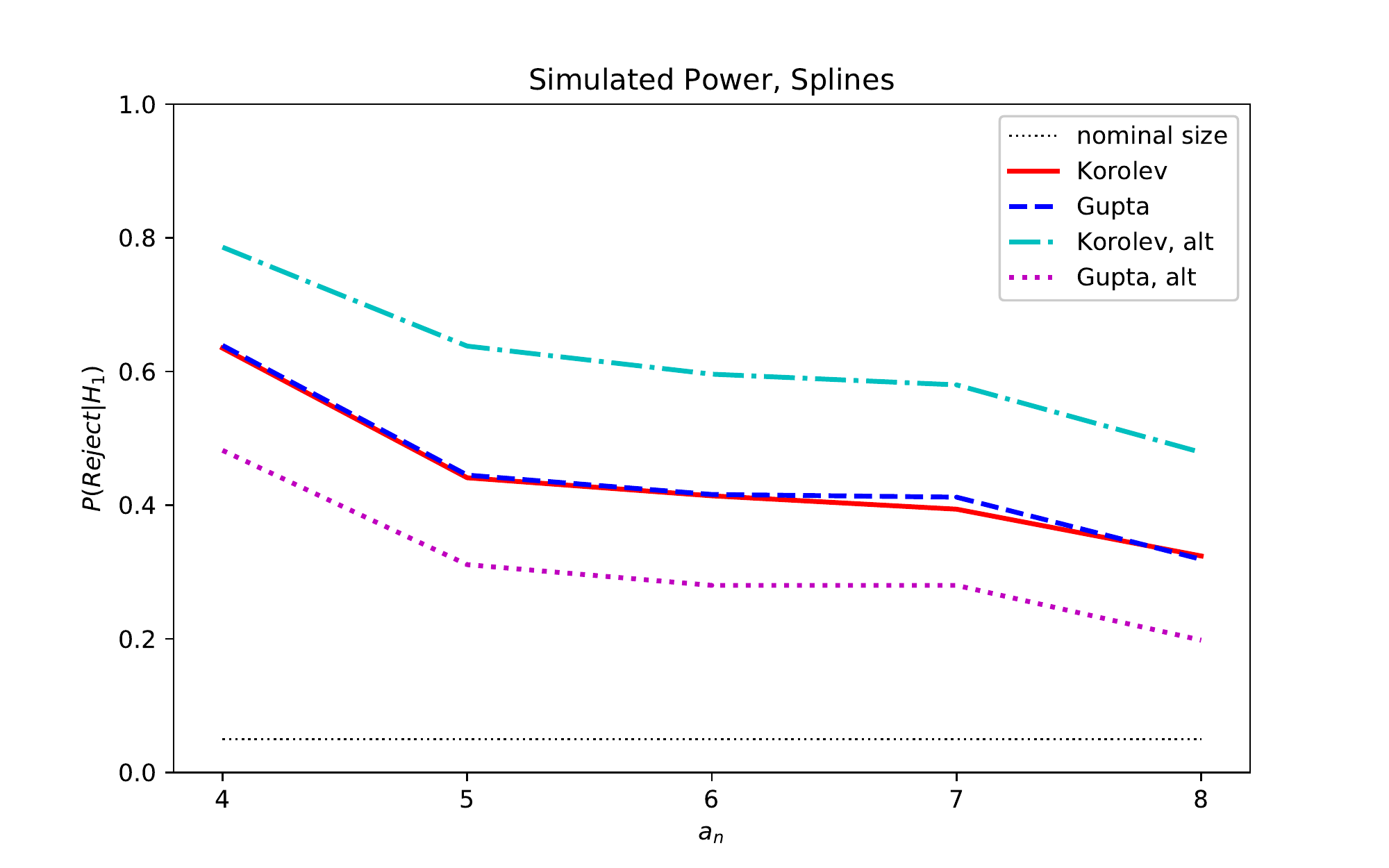}

$n=1,000$

\includegraphics[scale=0.33]{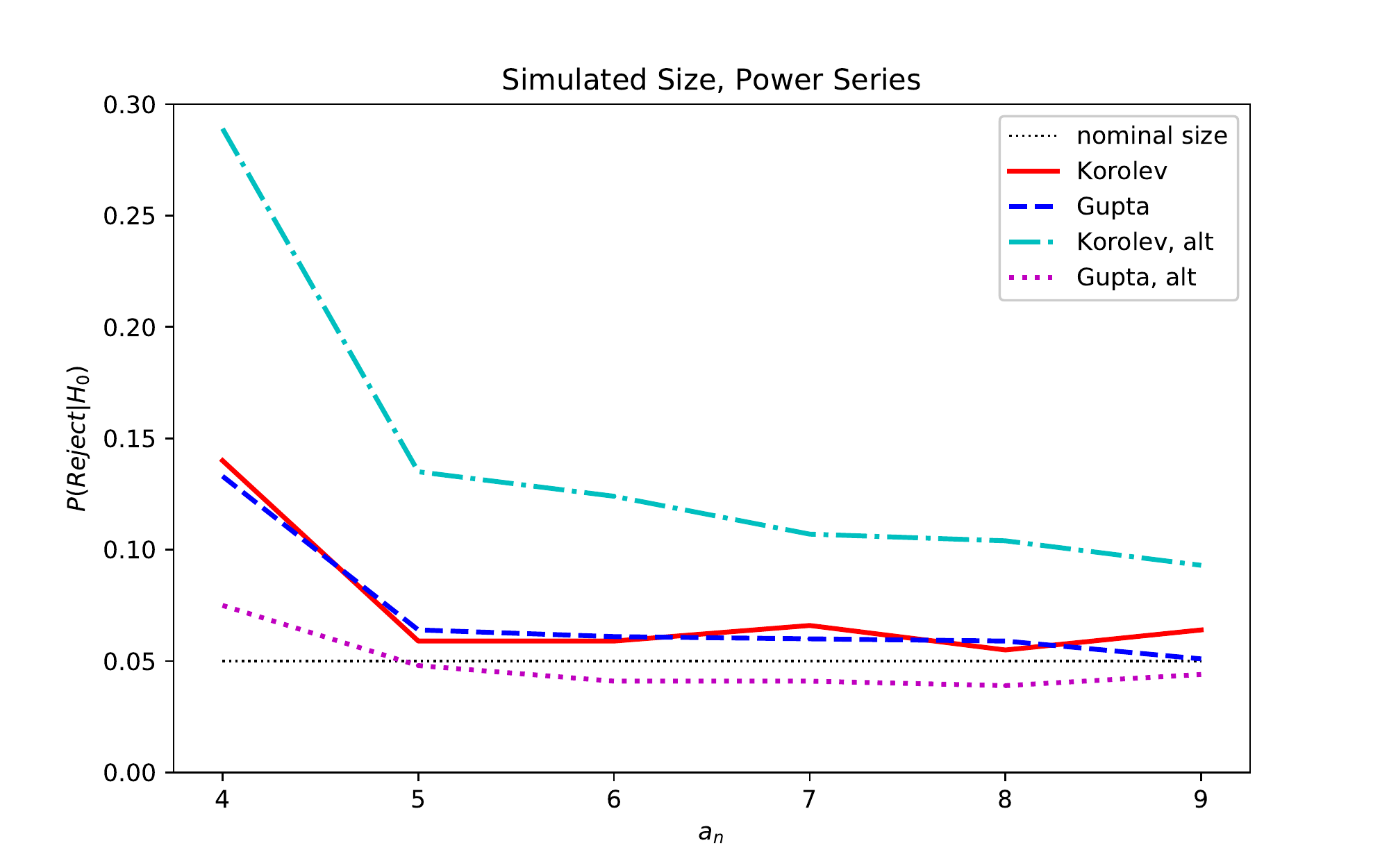} \includegraphics[scale=0.33]{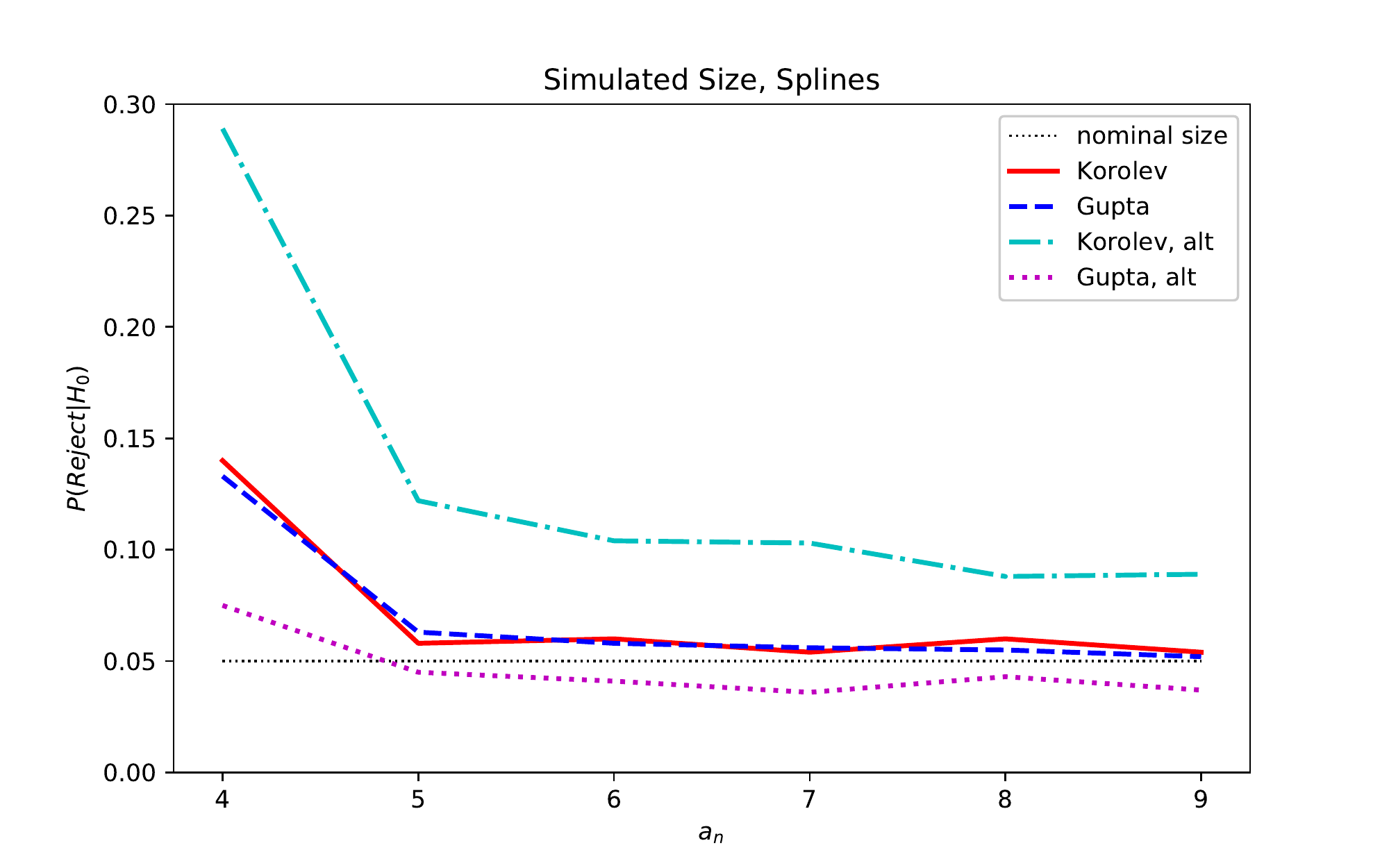}

\includegraphics[scale=0.33]{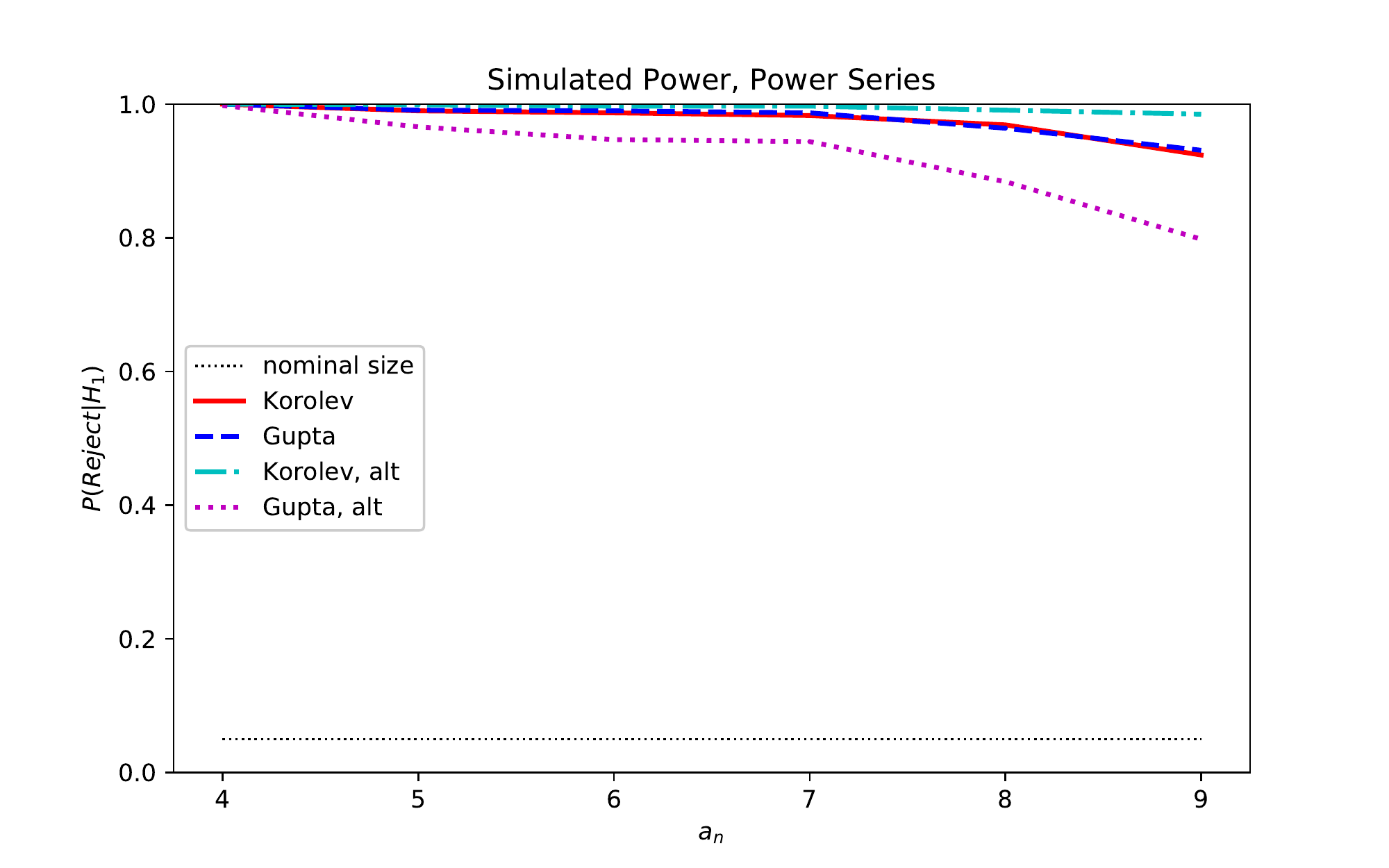} \includegraphics[scale=0.33]{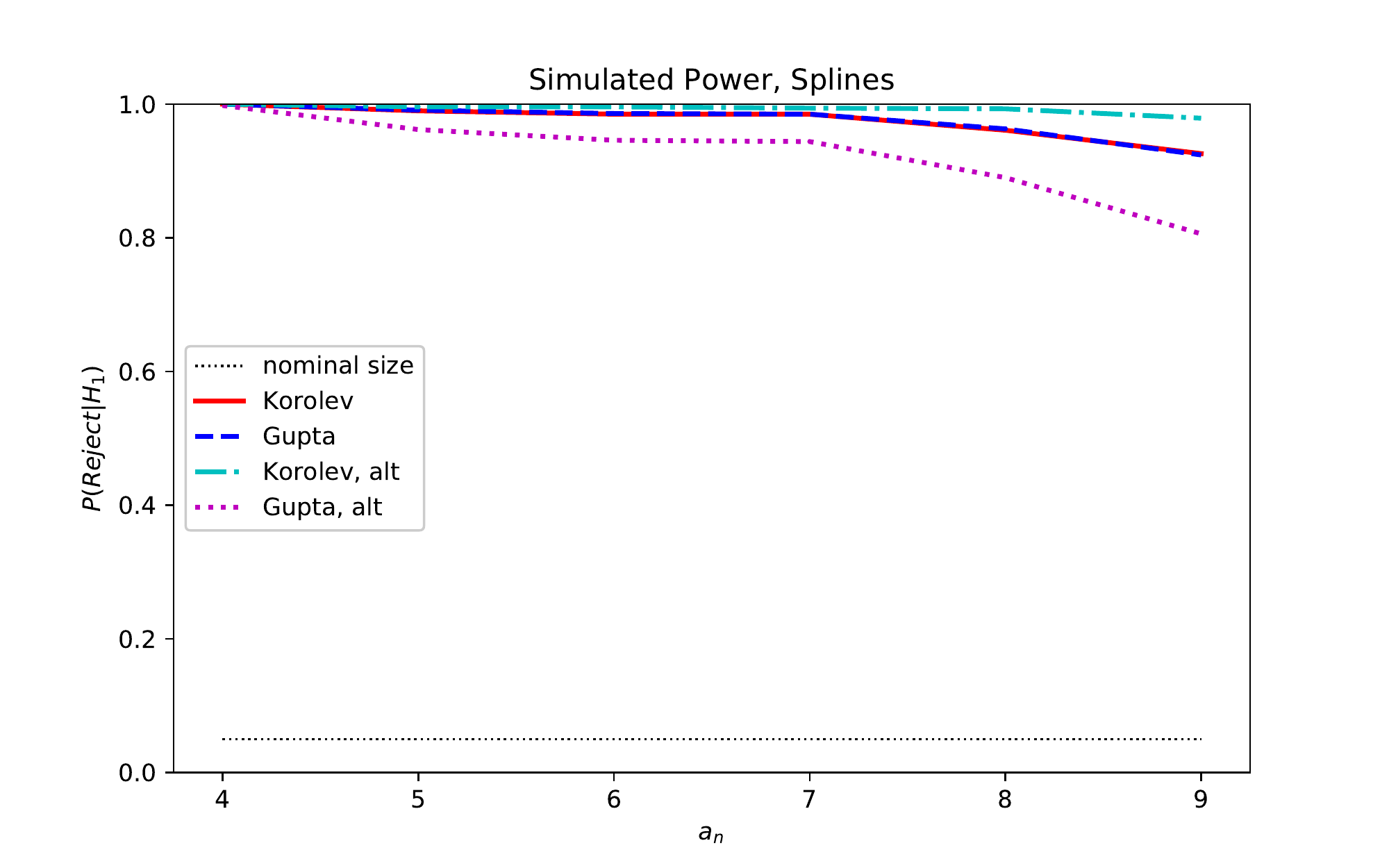}

\footnotesize{\vspace{-0.4cm}\singlespacing Left: power series. Right: splines. The results are based on $M=1,000$ simulation draws.}
\end{center}
\end{figure}

\begin{table}[h]
\begin{center}
\caption{Number of Series Terms}\label{tbl_number_terms}
\begin{tabular}{c | c c c c}
$a_n$ & $a_n^2$ & $m_n$ & $k_n$ & $r_n$ \\ \hline\hline
4  & 16 & 5  & 16  & 11 \\
5  & 25  & 6  & 25  & 19 \\
6  & 36  & 7  & 27  & 20 \\
7  & 49  & 8  & 29  & 21 \\
8  & 64  &  9  & 40  & 31 \\
9  & 81  & 10  & 53  & 43 \\
\end{tabular}
\end{center}
\footnotesize{\singlespace The table shows how the number of parameters under the null $m_n$, the total number of parameters $k_n$, and the number of restrictions $r_n$ change as a function of the number of parameters in univariate series expansion $a_n$. As a baseline, I also include $a_n^2$, the total number of parameters if all interaction terms were included.}
\end{table}

\begin{table}[h]
\begin{center}
\caption{Simulated Size and Power of the Data-Driven Test}\label{tbl_simulated_size_data_driven}
\begin{tabular}{l | c c | c c}
& $C_p$ & GCV & $C_p$ & GCV \\ \hline\hline
& \multicolumn{4}{c}{n = 250} \\
& \multicolumn{2}{c}{Power Series} & \multicolumn{2}{c}{Splines}\\
Size & 0.037 & 0.035 & 0.039 & 0.039 \\
Power & 0.390 & 0.393 & 0.391 & 0.394 \\ \hline
& \multicolumn{4}{c}{n = 1,000} \\
& \multicolumn{2}{c}{Power Series} & \multicolumn{2}{c}{Splines}\\
Size & 0.047 & 0.047 & 0.048 & 0.048 \\
Power & 0.991 & 0.991 & 0.991 & 0.991 \\ \hline
\end{tabular}
\footnotesize{\singlespace The results are based on $M=1,000$ simulations.}
\end{center}
\end{table}

\end{document}